\theoremstyle{thmstyleone}%
\newtheorem{theorem}{Theorem}
\newtheorem{lemma}{Lemma}
\newtheorem{corollary}{Corollary}
\theoremstyle{thmstyletwo}%
\newtheorem{example}{Example}%
\newtheorem{remark}{Remark}%
\newtheorem{conjecture}{Conjecture}
\theoremstyle{thmstylethree}%
\crefname{equation}{}{}
\crefname{theorem}{Theorem}{Theorem}
\crefname{corollary}{Corollary}{Corollary}
\crefname{lemma}{Lemma}{Lemma}
\crefname{remark}{Remark}{Remark}
\crefname{definition}{Definition}{Definition}
\crefname{conjecture}{Conjecture}{Conjecture}
\crefname{example}{Example}{Example}
\crefname{section}{Section}{Section}
\begin{document}

\title[Article Title]{The dual codes of two families of BCH codes\footnotemark[2]}


\author[1]{\fnm{Haojie} \sur{Xu}}\email{xuhaojiechn@163.com}

\author*[1]{\fnm{Xia} \sur{Wu}}\email{wuxia80@seu.edu.cn}


\author[1]{\fnm{Wei} \sur{Lu}}\email{luwei1010@seu.edu.cn}

\author[2]{\fnm{Xiwang} \sur{Cao}}\email{xwcao@nuaa.edu.cn}

\affil*[1]{\orgdiv{School of Mathematics}, \orgname{Southeast University}, \orgaddress{\city{Nanjing}, \postcode{210096}, \country{China}}}

\affil[2]{\orgdiv{Department of Math}, \orgname{Nanjing University of Aeronautics and Astronautics}, \orgaddress{\city{Nanjing}, \postcode{211100}, \country{China}}}

\footnotetext[2]{Supported by NSFC (Nos. 12371035, 12171241)}


\abstract{In this paper, we present an infinite family of MDS codes over $\mathbb{F}_{2^s}$ and two infinite families of almost MDS codes over $\mathbb{F}_{p^s}$ for any prime $p$, by investigating the parameters of the dual codes of two families of BCH codes. Notably, these almost MDS codes include two infinite families of near MDS codes over $\mathbb{F}_{3^s}$, resolving a conjecture posed by Geng et al. in 2022. Furthermore, we demonstrate that both of these almost AMDS codes and their dual codes hold infinite families of $3$-designs over \(\mathbb{F}_{p^s}\) for any prime $p$. Additionally, we study the subfield subcodes of these families of MDS and near MDS codes, and provide several binary, ternary, and quaternary codes with best known parameters.}

\keywords{Cyclic code, BCH code, almost MDS code, near MDS code, $t$-design}


\pacs[MSC Classification]{94C10, 94B05,	94A60}

\maketitle

\section{Introduction}\label{sec1}

Let $q$ be the power of a prime $p$, $\mathbb{F}_q$ be the finite field with $q$ elements and $\mathbb{F}_q^\ast=\mathbb{F}_q\setminus\{0\}$. An $[n,k]$ linear code $\mathcal{C}$ over $\mathbb{F}_q$ is a vector subspace of $\mathbb{F}_q^n$ with dimension $k$. The \textit{minimum distance} $d$ of $\mathcal{C}$ is represented by $d=\min\{\mathrm{wt}(\mathbf{c}):\mathbf{c}\in\mathcal{C}\setminus \{\mathbf{0}\}\}$, where $\mathrm{wt}(\mathbf{c})$ is the number of nonzero coordinates of $\mathbf{c}$ and called the \textit{Hamming weight} of $\mathbf{c}$. Denote by $A_i$ the number of codewords of Hamming weight $i$ in $\mathcal{C}$, where $0\le i \le n$. The sequence $(A_i)_{i=0}^n$ is referred to as the \textit{weight distribution} of $\mathcal{C}$, and the polynomial $\sum_{i=0}^{n}A_iz^i$ is called the \textit{weight enumerator} of $\mathcal{C}$. A code $\mathcal{C}$ is referred to as \textit{$t$-weight code} if $|\{1\le i\le n : A_i \ne 0\}|=t$. The dual code $\mathcal{C}^\perp$ of $\mathcal{C}$ is defined as 
\begin{equation*}
	\mathcal{C}^\perp=\left\{\mathbf{v}\in\mathbb{F}_q^n:\mathbf{c}\cdot\mathbf{v}=0,\ \forall\;\mathbf{c}\in\mathcal{C}\right\},
\end{equation*}
where $\mathbf{c}\cdot\mathbf{v}$ is the inner product of $\mathbf{c}$ and $\mathbf{v}$. Analogously, let $d^\perp$ denote the minimum distance of $\mathcal{C}^\perp$, and $(A_i^\perp)_{i=0}^n$ denote the weight distribution of $\mathcal{C}^\perp$. The \textit{MacWilliams equations} relating $(A_i)_{i=0}^n$ and $(A_i^\perp)_{i=0}^n$ \cite[Lemma 2.9]{MacWilliams1963TheoremDistributionWeight} are given by
\begin{equation}\label{sec1.1 equ1}
	\sum_{i=0}^{n-j}\begin{pmatrix}n-i\\j\end{pmatrix}A_i=q^{k-j}\sum_{i=0}^j\begin{pmatrix}n-i\\n-j\end{pmatrix}A_i^{\perp}\quad\mathrm{for}\ 0\le j \le n.
\end{equation}
It is desirable to design linear codes with the largest possible rate $\frac{k}{n}$ and minimum distance $d$ in coding theory. Nevertheless, there are some tradeoffs among $n$, $k$ and $d$. The Singleton bound indicates that $d\le n-k+1$ for a linear code with parameters $[n,k,d]$. Linear codes that meet the Singleton bound are called maximum distance separable, or MDS for short. Note that if $\mathcal{C}$ is MDS so is the dual code $\mathcal{C}^\perp$. An $[n,k,n-k]$ code is said to be almost MDS (AMDS for short) \cite{Boer1996AlmostMDSCodes}. Unlike MDS code, the dual code of an AMDS code need not be AMDS. Furthermore, $\mathcal{C}$ is said to be near MDS (NMDS for short) if both $\mathcal{C}$ and its dual code $\mathcal{C}^\perp$ are AMDS codes \cite{Dodunekov1995NearMDSCodes}. MDS, NMDS and AMDS codes play vital roles in communications, data storage, combinatorial theory, and secret sharing \cite{Ding2020InfiniteFamiliesMDS,Tang2021InfiniteFamilyLinear,Heng2022ConstructionsMDSNear,Xu2022InfiniteFamiliedDesigns,Yan2022InfiniteFamiliesLinear,Li2024MDSCodesConjecture,Heng2023NewInfiniteFamilies,Dodunekova1997AlmostMDSNearMDSCodes,Xiang2022SomeDesignsBCH,Zhou2009SecretSharingScheme}.

In 2020, Ding and Tang \cite{Ding2020InfiniteFamiliesMDS} introduced two infinite families of NMDS codes holding 2-designs or 3-designs, derived from the narrow-sense BCH codes $\mathcal{C}_{(2^s,2^s+1,3,1)}$ and $\mathcal{C}_{(3^s,3^s+1,3,1)}$. Shortly thereafter, Tang and Ding \cite{Tang2021InfiniteFamilyLinear} demonstrated that the BCH code $\mathcal{C}_{(2^s,2^s+1,4,1)}$ with odd $s$ is an NMDS code holding 4-designs, which settled a long-standing open problem. Subsequently, Yan and Zhou \cite{Yan2022InfiniteFamiliesLinear} further investigated the codewords of different weights in the BCH code $\mathcal{C}_{(2^s,2^s+1,4,1)}$ and discovered additional 3-designs and 4-designs. In 2022, Geng et al. \cite{Geng2022ClassAlmostMDS} proved that the BCH code $\mathcal{C}_{(3^s,3^s+1,3,4)}$ is AMDS when $s$ is odd, and proposed the following conjecture:
\begin{conjecture}\cite[Conjecture 3.6]{Geng2022ClassAlmostMDS}
	\label{sec1 conj1}
	Let $q=3^s$, where $s$ is odd. Then $\mathcal{C}^\perp_{(q,q+1,3,4)}$ is an AMDS code with parameters $[q+1,4,q-3]$.
\end{conjecture}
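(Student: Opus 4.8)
The plan is to prove the two bounds $d^{\perp}\le q-3$ and $d^{\perp}\ge q-3$; the dimension is the easy part. Since $n=q+1$ and $q\equiv -1\pmod n$, every $q$-cyclotomic coset modulo $n$ has the form $\{j,\,n-j\}$, so the defining set of $\mathcal{C}_{(q,q+1,3,4)}$ is $C_{4}\cup C_{5}=\{4,5,q-3,q-4\}$, which has exactly four elements for $q=3^{s}\ge 27$; hence $\dim\mathcal{C}_{(q,q+1,3,4)}=q-3$ and $\dim\mathcal{C}^{\perp}_{(q,q+1,3,4)}=4$. For the upper bound I would note that $\mathcal{C}^{\perp}_{(q,q+1,3,4)}$ cannot be MDS, for otherwise its dual $\mathcal{C}_{(q,q+1,3,4)}$ would be MDS too, contradicting the theorem of Geng et al.\ that $\mathcal{C}_{(q,q+1,3,4)}$ is AMDS (so its minimum distance equals $4=n-k$, not $n-k+1$). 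A non-MDS $[q+1,4]$ code has minimum distance one below the Singleton bound $q-2$, so $d^{\perp}\le q-3$.

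The substance is the reverse inequality: no nonzero codeword of $\mathcal{C}^{\perp}_{(q,q+1,3,4)}$ vanishes in more than four coordinates. Fix a primitive $n$-th root of unity $\beta\in\mathbb{F}_{q^{2}}^{\ast}$ and let $\mu_{q+1}\subset\mathbb{F}_{q^{2}}^{\ast}$ be the group of $n$-th roots of unity. The BCH code $\mathcal{C}_{(q,q+1,3,4)}$ is the subfield subcode of the cyclic $\mathbb{F}_{q^{2}}$-code with designed zeros $\beta^{4},\beta^{5}$, so by Delsarte's theorem every codeword of $\mathcal{C}^{\perp}_{(q,q+1,3,4)}$ has the form $\mathbf{c}_{a,b}=\big(\mathrm{Tr}_{q^{2}/q}(ax^{4}+bx^{5})\big)_{x\in\mu_{q+1}}$ with $a,b\in\mathbb{F}_{q^{2}}$, and $(a,b)\mapsto\mathbf{c}_{a,b}$ is injective (a vanishing $\mathbf{c}_{a,b}$ would force a polynomial of degree $\le 10$ to be zero on all $q+1$ points of $\mu_{q+1}$). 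Since $x^{q}=x^{-1}$ on $\mu_{q+1}$, the coordinate of $\mathbf{c}_{a,b}$ at $x$ equals $x^{-5}g_{a,b}(x)$ with $g_{a,b}(x):=bx^{10}+ax^{9}+a^{q}x+b^{q}$, so the task becomes: $\#\{x\in\mu_{q+1}:g_{a,b}(x)=0\}\le 4$ for all $(a,b)\ne(0,0)$. If $b=0$ this is immediate: $g_{a,0}(x)=ax(x^{8}+a^{q-1})$, a root in $\mu_{q+1}$ satisfies $x^{8}=-a^{q-1}\in\mu_{q+1}$, and since $s$ is odd one has $q+1\equiv 4\pmod 8$, so $\gcd(8,q+1)=4$, $x\mapsto x^{8}$ is four-to-one on $\mu_{q+1}$, and there are $0$ or $4$ solutions.

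The case $b\ne 0$ is where essentially all the difficulty lies. Now $\deg g_{a,b}=10$, so one cannot simply count roots in $\overline{\mathbb{F}_{q}}$; the norm-one constraint $x\in\mu_{q+1}$ has to be used. Two structural facts should drive the argument. First, on $\mu_{q+1}$ one has $(bx+a)^{q}=(a^{q}x+b^{q})/x$, hence $g_{a,b}(x)=x\big(x^{8}(bx+a)+(bx+a)^{q}\big)$; so, with $w=bx+a$, a root with $w\ne 0$ satisfies the unitary-type relation $w^{q-1}=-x^{8}$ together with $N_{q^{2}/q}(w)-\mathrm{Tr}_{q^{2}/q}(a^{q}w)+N_{q^{2}/q}(a)-N_{q^{2}/q}(b)=0$. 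Second, $g_{a,b}$ is conjugate-palindromic (its $x^{k}$- and $x^{10-k}$-coefficients are Frobenius conjugates), so its root set is stable under $\theta\mapsto\theta^{-q}$, a map fixing precisely the roots in $\mu_{q+1}$; when the remaining roots are $\mathbb{F}_{q^{2}}$-rational it pairs them two-by-two, forcing the number of roots in $\mu_{q+1}$ to be even, so it suffices to exclude $6$, $8$, $10$. To finish I would pass to the variable $t=x+x^{-1}$: picking an $\mathbb{F}_{q}$-basis $\{1,\omega\}$ of $\mathbb{F}_{q^{2}}$ with $\omega^{q}=-\omega$ and $\nu=\omega^{2}$, the equation $\mathrm{Tr}_{q^{2}/q}(ax^{4}+bx^{5})=0$ reads $P(t)+\omega(x-x^{-1})Q(t)=0$ for explicit Dickson-type polynomials $P$ (degree $\le 5$) and $Q$ (degree $\le 4$) over $\mathbb{F}_{q}$; one then separates the symmetric pairs $\{x,x^{-1}\}$ coming from common zeros of $P$ and $Q$ from the single roots, which lie among the roots of $F:=P^{2}-\nu(t^{2}-4)Q^{2}$ (whose leading coefficient is $N_{q^{2}/q}(b)\ne 0$), and argues that a total of $\ge 6$ roots in $\mu_{q+1}$ forces a polynomial identity relating $P$, $Q$, $F$ that is impossible when $q=3^{s}$ with $s$ odd. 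The crux — and the step I expect to be the real work — is this bookkeeping: matching the multiplicities of $F$ against the two kinds of roots and pinning down where the residue of $q+1$ modulo $8$ enters.

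Combining the two bounds gives $d^{\perp}=q-3$; since $\dim\mathcal{C}^{\perp}_{(q,q+1,3,4)}=4=(q+1)-(q-3)$, the code is AMDS with parameters $[q+1,4,q-3]$, proving the conjecture — and, as $\mathcal{C}_{(q,q+1,3,4)}$ is already known to be AMDS, showing it is in fact a near MDS code.
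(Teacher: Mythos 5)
Your setup is sound and agrees with the paper's reduction: the dimension count, the trace representation via Delsarte, and the observation that the weight of $\mathbf{c}_{(a,b)}$ is $q+1-N(a,b)$ where $N(a,b)$ counts zeros in $U_{q+1}$ of $g_{a,b}(x)=bx^{10}+ax^{9}+a^{q}x+b^{q}$ are all correct, as is the $b=0$ case (via $\gcd(8,q+1)=4$ for odd $s$) and the upper bound $d^{\perp}\le q-3$ (via Geng et al.'s theorem that the primal code is AMDS; the paper instead gets this from the Singleton bound once the possible weights are known). But the statement actually to be proved is exactly the claim you defer: that for $b\ne 0$ the degree-$10$ polynomial $g_{a,b}$ has at most $4$ zeros on $U_{q+1}$. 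Your treatment of this is a plan, not a proof: the Dickson-type substitution $t=x+x^{-1}$, the resultant-like polynomial $F=P^{2}-\nu(t^{2}-4)Q^{2}$, and the final ``bookkeeping'' that is supposed to rule out $\ge 6$ unit-circle roots are announced but not carried out, and you yourself flag this as ``the real work.'' Moreover, the parity reduction you invoke (root set stable under $\theta\mapsto\theta^{-q}$, hence an even number of roots in $U_{q+1}$, so only $6,8,10$ need excluding) is not rigorous as stated: that map is an involution only on $\mathbb{F}_{q^{2}}$-rational roots, and roots lying in, say, $\mu_{q^{3}+1}\setminus U_{q+1}$ can form odd-length orbits, so odd counts are not excluded without further argument.

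For comparison, this missing step is precisely the technical core of the paper's proof of \cref{sec3 thm1} (reused in \cref{sec3 thm4}, from which the conjecture follows as the special case $h=4=\frac{3^{2}-1}{2}$, $m=\gcd(2,s)=1$). There, after normalizing the generic case to $v^{p^{i}+1}+v+D=0$, Bluher's theorem (\cref{sec2.3 lem1}) limits the number of $\mathbb{F}_{q^{2}}$-solutions to $0,1,2$ or $p^{\gcd(i,2s)}+1$ (which is $10$ in your situation, so counting solutions in $\mathbb{F}_{q^{2}}$ alone cannot finish the job), and the decisive point is the use of the Kim--Choe--Mesnager parametrization (\cref{sec2.3 lem2}) of all $p^{e}+1$ solutions as $v_{0},(w_{0}+\gamma)^{p^{i}-1}v_{0}$, $\gamma\in\mathbb{F}_{p^{e}}$: imposing the unit-circle condition $(Av-B)^{q+1}=1$ reduces to an affine equation $(D_{3}\gamma)^{q}+D_{3}\gamma+D_{4}=0$ whose solution set in $\mathbb{F}_{p^{e}}$ is (a translate of) an $\mathbb{F}_{p^{m}}$-subspace of dimension at most $1$, giving $N(a,b)\le p^{m}+1=4$. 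Until you supply an argument of comparable force (or the alternative characterization of Qiang et al.), the inequality $d^{\perp}\ge q-3$, and hence the conjecture, remains unproven in your write-up.
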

\noindent In 2023, Xu et al. \cite{Xu2023InfiniteFamiliesMDS} presented the sufficient and necessary conditions for the minimum distance of the BCH codes $\mathcal{C}_{(p^s,p^s+1,3,h)}$ to be 3 and 4, and provided several infinite families of AMDS codes. Very recently, Qiang et al. \cite{Qiang2024CharacterizationAlmost} presented a characterization for the AMDS code to be NMDS and resolved \cref{sec1 conj1} through their result.

Motivated the works of \cite{Ding2020InfiniteFamiliesMDS,Geng2022ClassAlmostMDS,Xu2023InfiniteFamiliesMDS}, this paper investigates the parameters of the dual codes $\mathcal{C}^\perp_{(p^s,p^s+1,3,\frac{p^s-p^i}{2})}$ and $\mathcal{C}^\perp_{(p^s,p^s+1,3,\frac{p^i-1}{2})}$ of two classes BCH codes. This leads to the discovery of an infinite family of MDS codes over $\mathbb{F}_{2^s}$ and two infinite families of almost MDS codes over any finite field $\mathbb{F}_{p^s}$. Specifically, the two classes of almost MDS code encompass two infinite families of near MDS codes over $\mathbb{F}_{3^s}$. Notably, \cref{sec1 conj1} is also verified by a specific subclass of the near MDS code $\mathcal{C}^\perp_{(3^s,3^s+1,3,\frac{3^i-1}{2})}$, which offers a different yet more concise method than that presented in \cite{Qiang2024CharacterizationAlmost}. Additionally, the subfield subcodes of these families of MDS and near MDS codes are also studied. Through these subfield subcodes, several binary, ternary, and quaternary codes with best known parameters are provided.

Furthermore, we establish that the minimum weight codewords of our almost MDS codes and their duals hold infinite families of $3$-designs over any field $\mathbb{F}_{p^s}$, respectively. Using the MacWilliams identities and the parameters of these $3$-designs, we determine the weight distributions of the duals of these almost MDS codes. Building upon a generalization of the Assmus-Mattson theorem \cite[Theorem 5.3]{Tang2020CodesDifferentiallyUniform}, we prove that the codewords of weight 5 in these near MDS codes over $\mathbb{F}_{3^s}$ also support a $3$-design. Moreover, our results on near MDS codes holding $3$-designs include relevant findings in \cite[Theorem 21]{Ding2020InfiniteFamiliesMDS}.

\section{Preliminaries}\label{sec2}
\subsection{Cyclic codes and BCH codes}\label{sec2.1}

An $[n,k]$ code $\mathcal{C}$ over $\mathbb{F}_q$ is called \textit{cyclic} if $\mathbf{c}=(c_0,c_1,\dots,c_{n-2},c_{n-1})\in\mathcal{C}$ implies $(c_{n-1},c_0,c_1,\dots,c_{n-2})\in\mathcal{C}$. Throughout this subsection, we assume that $\gcd(n,q)=1$. The residue class ring $\mathbb{F}_q[x]/(x^n-1)$ is isomorphic to $\mathbb{F}_q^n$ as a vector space over $\mathbb{F}_q$, where the isomorphism is given by $(c_0,c_1,\cdots,c_{n-1})\leftrightarrow c_0+c_1x+\cdots+c_{n-1}x^{n-1}$. Because of this isomorphism, any linear code $\mathcal{C}$ of length $n$ over $\mathbb{F}_q$ corresponds to a subset of $\mathbb{F}_q[x]/(x^n-1)$. Then we can interpret $\mathcal{C}$ as a subset of $\mathbb{F}_q[x]/(x^n-1)$. It is easily checked that the linear code $\mathcal{C}$ is cyclic if and only if $\mathcal{C}$ is an ideal of $\mathbb{F}_q[x]/(x^n-1)$.

Since every ideal of $\mathbb{F}_q[x]/(x^n-1)$ is principal, every cyclic code $\mathcal{C}$ is generated by the unique monic polynomial $g(x)\in\mathbb{F}_q[x]$ of the lowest degree, i.e., $\mathcal{C}=\left<g(x)\right>$. $g(x)$ is called the \textit{generator polynomial} of $\mathcal{C}$ and $h(x)=(x^n-1)/g(x)$ is called the \textit{parity-check polynomial} of $\mathcal{C}$. It should be noticed that the generator polynomial $g(x)$ is a factor of $x^n-1$, thus we have to study the canonical factorization of $x^n-1$ over $\mathbb{F}_q$ to handle $g(x)$. With this intention, we need to introduce $q$-cyclotomic cosets modulo $n$ \cite{Ding2014CodesDifferenceSetsa}.

Let $\mathbb{Z}_n$ denote the set $\{0,1,2,\dots,n-1\}$ and let $s<n$ be a nonnegative integer. The \textit{$q$-cyclotomic coset of $s$ modulo $n$} is given by 
\begin{equation}\label{sec2.1 equ1}
	C_s=\{s,sq,sq^2,\ldots,sq^{\ell_s-1}\}\bmod n\subseteq\mathbb{Z}_n,
\end{equation}
where $\ell_s$ is the smallest positive integer such that $s \equiv sq^{\ell_s} \pmod n$, and is the size of the $q$-cyclotomic coset. The following result will be useful for calculating $\ell_s$.
\begin{theorem}\cite[Theorem 4.1.4]{Huffman2003FundamentalErrorCodes}
	\label{sec2.1 thm1}
	The size $\ell_s$ of each $q$-cyclotomic coset $C_s$ is a divisor of the size $\ell_1$ of $C_1$.
\end{theorem}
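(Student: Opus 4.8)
The plan is to work directly from the defining property of $\ell_s$, namely that it is the least positive integer with $s q^{\ell_s} \equiv s \pmod n$, and show that $\ell_1$ is such an exponent (not necessarily the least), whence $\ell_s \mid \ell_1$ by the standard divisibility argument for orders. First I would recall that $\ell_1$ is the multiplicative order of $q$ modulo $n$ (this is exactly the case $s=1$ of the definition, since $\gcd(n,q)=1$ guarantees $q$ is a unit mod $n$), so $q^{\ell_1} \equiv 1 \pmod n$. Multiplying this congruence by $s$ gives $s q^{\ell_1} \equiv s \pmod n$, which shows that $\ell_1$ lies in the set $\{m \ge 1 : s q^m \equiv s \pmod n\}$ whose least element is $\ell_s$.

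Next I would invoke the elementary fact that if $\ell_s$ is the minimal positive solution of $s q^m \equiv s \pmod n$ and $\ell_1$ is any positive solution, then $\ell_s \mid \ell_1$. To see this, write $\ell_1 = t\ell_s + r$ with $0 \le r < \ell_s$ by the division algorithm; then from $s q^{\ell_s} \equiv s \pmod n$ one gets $s q^{t\ell_s} \equiv s \pmod n$ by induction on $t$, and combining with $s q^{\ell_1} \equiv s \pmod n$ yields $s q^r \equiv s q^{\ell_1 - t\ell_s} \cdot q^{t\ell_s} \cdot q^{-t\ell_s}$-type cancellation — more carefully, $s \equiv s q^{\ell_1} = (s q^{t\ell_s}) q^{r} \equiv s q^{r} \pmod n$, so $r$ is a nonnegative solution smaller than $\ell_s$, forcing $r = 0$. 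Hence $\ell_s \mid \ell_1$, as claimed.

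The argument is entirely routine; there is no real obstacle. The only point requiring a word of care is the induction step $s q^{\ell_s} \equiv s \pmod n \Rightarrow s q^{t\ell_s} \equiv s \pmod n$: multiplying $s q^{(t-1)\ell_s} \equiv s \pmod n$ through by $q^{\ell_s}$ gives $s q^{t\ell_s} \equiv s q^{\ell_s} \equiv s \pmod n$, which closes the induction. I would then remark that this divisibility is what makes the cyclotomic coset sizes well-structured and underlies the factorization of $x^n - 1$ used throughout the paper.
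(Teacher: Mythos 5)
Your proof is correct: since $\gcd(n,q)=1$, $\ell_1$ is the multiplicative order of $q$ modulo $n$, so $sq^{\ell_1}\equiv s\pmod n$, and the division-algorithm/minimality argument then forces $\ell_s\mid\ell_1$; the momentary ``cancellation'' phrasing is harmless because you immediately give the correct chain $s\equiv sq^{\ell_1}=(sq^{t\ell_s})q^{r}\equiv sq^{r}\pmod n$. The paper simply cites this result from Huffman--Pless without proof, and your argument is essentially the standard one given there, so there is nothing further to reconcile.
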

The smallest integer in $C_s$ is called the \textit{coset leader} of $C_s$. Let $\Gamma_{(n,q)}$ be the set of all the coset leaders. We have then $C_s \cap C_t = \emptyset$ for any two distinct elements $s$ and $t$ in $\Gamma_{(n,q)}$, and
\begin{equation}\label{sec2.1 equ2}
	\bigcup_{s\in\Gamma_{(n,q)}}C_s=\mathbb{Z}_n.
\end{equation}
Therefore, the distinct $q$-cyclotomic cosets modulo $n$ partition $\mathbb{Z}_n$. 

Let $m$ be the multiplicative order of $q$ modulo $n$. Let $\alpha$ be a generator of $\mathbb{F}_{q^m}$ and $\beta=\alpha^{(q^m-1)/n}$. Then $\beta$ is a primitive $n$-th root of unity in $\mathbb{F}_{q^m}$. Denote the minimal	polynomial over $\mathbb{F}_q$ of $\beta^s$ by $\mathrm{M}_{\beta^s}(x)$. And this polynomial is given by
\begin{equation}\label{sec2.1 equ3}
	\mathrm{M}_{\beta^s}(x)=\prod\limits_{i\in C_s}(x-\beta^i)\in\mathbb{F}_q[x],
\end{equation}
which is irreducible over $\mathbb{F}_q$. It then follows from \cref{sec2.1 equ2} that
\begin{equation*}
	x^n-1=\prod_{s\in\Gamma_{(n,q)}}\mathrm{M}_{\beta^s}(x),
\end{equation*}
which is the canonical factorization of $x^n - 1$ over $\mathbb{F}_q$.

BCH codes are a crucial class of cyclic codes due to its exceptional error-correcting capabilities and simple encoding and decoding algorithms. Let $h$ be a nonnegative integer and $m$ be the multiplicative order of $q$ modulo $n$. Suppose that $\beta\in\mathbb{F}_{q^m}$ is a primitive $n$-th root of unity and $\delta$ is an integer with $2\le\delta\le n$. Then a \textit{BCH code} over $\mathbb{F}_q$ of length $n$ and \textit{designed distance} $\delta$, denoted by $\mathcal{C}_{(q,n,\delta,h)}$, is a cyclic code generated by 
\begin{equation*}
	g_{(q,n,\delta,h)}=\mathrm{lcm}\left(\mathrm{M}_{\beta^h}(x),\mathrm{M}_{\beta^{h+1}}(x),\dots,\mathrm{M}_{\beta^{h+\delta-2}}(x)\right),
\end{equation*}
where the least common multiple is computed over $\mathbb{F}_q$ and $\mathrm{M}_{\beta^s}(x)$ is represented in \cref{sec2.1 equ3}. If $h=1$, the BCH code $\mathcal{C}_{(q,n,\delta,h)}$ is called \textit{narrow-sense BCH code}. If $n=q^m-1$, the corresponding BCH codes are said to be \textit{primitive}.  According to the BCH bound, the minimum distance of the BCH code $\mathcal{C}_{(q,n,\delta,h)}$ is at least $\delta$.

\subsection{Combinatorial designs from linear codes}
Let $\mathcal{P}$ be a set of $n$ elements and $\mathcal{B}$ be a collection of $k$-subsets of $\mathcal{P}$. Let $t$ be an integer with $1\le t \le k$. If every $t$-subset of $\mathcal{P}$ is contained in exactly $\lambda$ elements of $\mathcal{B}$, then the incidence structure $\mathbb{D}=(\mathcal{P},\mathcal{B})$ is called a \textit{$t$-$(n,k,\lambda)$ design}, or simply \textit{$t$-design}. The elements of $\mathcal{P}$ and $\mathcal{B}$ are referred to as \textit{points} and \textit{blocks}, respectively. Denote the number of blocks in $\mathcal{B}$ by $b$. In a $t$-$(n,k,\lambda)$ design we have \cite[Chapter 4]{Ding2018DesignsLinearCodes}
\begin{equation}
	\label{sec1.2 equ1}
	\binom{n}{t}\lambda=b\binom{k}{t}.
\end{equation}
Let $\binom{\mathcal{P}}{k}$ denote the set of all $k$-subsets of $\mathcal{P}$. It is obvious that $(\mathcal{P},\binom{\mathcal{P}}{k})$ is a $k$-$(n,k,1)$ design. Such design is called a \textit{complete design}, which is trivial. A \textit{Steiner system}, denoted by $S(t,k,n)$ is a $t$-$(n,k,\lambda)$ design with $t\ge2$ and $\lambda=1$. A $t$-design $(\mathcal{P},\mathcal{B})$ is referred to as \textit{simple} if $\mathcal{B}$ does not contain any repeated blocks. In this paper, we consider only simple $t$-design with $t<k<n$.

A common technique for constructing $t$-design involves utilizing linear codes. Let $\mathcal{C}$ be a linear code of length $n$ and $\mathbf{c}=(c_{0},c_{1},\ldots,c_{n-1})$ be a codeword in $\mathcal{C}$. Let $\mathcal{P}(\mathcal{C})=\{0,1,\dots,n-1\}$ be the set of the coordinates positions of $\mathcal{C}$. The \textit{support} of $\mathbf{c}$ is defined by 
\begin{equation*}
	\mathrm{suppt}(\mathbf{c})=\{0\leq i\leq n-1:c_{i}\neq0\}.
\end{equation*}
Two different codewrods of weight $k$ may have the same support. For each $k$ with $A_k\ne 0$, let 
\begin{equation*}
	\mathcal{B}_k(\mathcal{C})=\frac{1}{q-1}\{\{\mathrm{suppt}(\mathbf{c}):\mathrm{wt}(\mathbf{c})\ \mathrm{and}\ \mathbf{c}\in\mathcal{C}\}\},
\end{equation*}
where $\{\{\}\}$ is the multiset notation and $\frac{1}{q-1}S$ denotes the multiset resulting from dividing the multiplicity of each element in the multiset $S$ by $q-1$. 
If the pair $(\mathcal{P}(\mathcal{C}),\mathcal{B}_k(\mathcal{C}))$ is a $t$-$(n,k,\lambda)$ design, we say that the codewords of weight $k$ in $\mathcal{C}$ support a $t$-$(n,k,\lambda)$ design or the code $\mathcal{C}$ holds a $t$-$(n,k,\lambda)$ design. Assmus and Mattson established a sufficient condition for the pair $(\mathcal{P}(\mathcal{C}),\mathcal{B}_k(\mathcal{C}))$ to be a $t$-design \cite[Theorem 4.2]{Assmus1969NewDesign}. If $(\mathcal{P}(\mathcal{C}),\mathcal{B}_k(\mathcal{C}))$ is a $t$-$(n,k,\lambda)$ design satisfying the conditions of the Assmus-Mattson theorem, then it follows from \cite[Lemma 4.25]{Ding2018DesignsLinearCodes} that
\begin{equation}\label{sec1.2 equ2}
	b=\frac{A_k}{q-1}.
\end{equation}

\begin{theorem}[Assmus-Mattson Theorem]
	\label{sec1 thm1}
	Let $\mathcal{C}$ be a $[n,k,d]$ code over $\mathbb{F}_q$. Let $d^\perp$ denote the minimum distance of $\mathcal{C}^\perp$. Suppose that $w$ be the largest integer satisfying $w\le n$ and 
	\begin{equation*}
		w-\left\lfloor\frac{w+q-2}{q-1}\right\rfloor<d.
	\end{equation*}
	Define $w^\perp$ analogously using $d^\perp$. Denote the weight distribution of $\mathcal{C}$ and $\mathcal{C}^\perp$ by $(A_i)_{i=0}^n$ and $(A^\perp_i)_{i=0}^n$, respectively. Fix a positive integer with $t<d$. Let $s$ be the number of $i$ with $A_i^\perp \ne 0$ for $1\le i \le n-t$. Suppose $s\le d-t$. Then
	\begin{itemize}
		\item $(\mathcal{P}(\mathcal{C}),\mathcal{B}_i(\mathcal{C}))$ is a simple $t$-design provided $A_i \ne 0$ and $ d \le i \le w$, and 
		\item $(\mathcal{P}(\mathcal{C}^\perp),\mathcal{B}_i(\mathcal{C}^\perp))$ is a simple $t$-design provided $A^\perp_i \ne 0$ and $ d^\perp \le i \le \min\{n-t,w^\perp\}$.
	\end{itemize}
\end{theorem}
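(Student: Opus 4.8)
The plan is to prove the two bullets in tandem: first translate each $t$-design conclusion into the statement that the weight distributions of suitable shortened or punctured subcodes are \emph{independent of the chosen coordinate set}, and then force that independence using the MacWilliams equations \eqref{sec1.1 equ1} together with a rank count that is exactly where the hypothesis $s\le d-t$ enters. For the reduction, recall that a multiset of $i$-subsets of an $n$-set is a $t$-design iff, for every coordinate set $U$ with $|U|\le t$, the number of blocks disjoint from $U$ depends only on $|U|$ (one moves between ``contains $U$'' and ``avoids $U$'' by inclusion--exclusion). Deleting a set $U$ of coordinates on which a codeword vanishes identifies the weight-$i$ codewords of $\mathcal{C}$ supported off $U$ with the weight-$i$ codewords of the shortened code $\mathcal{C}_U$; since $|U|<d$ keeps the minimum distance of $\mathcal{C}_U$ at least $d$, the first bullet becomes the claim that the weight distribution of $\mathcal{C}_U$ in the range $[d,w]$ is the same for all $|U|\le t$, and symmetrically for $(\mathcal{C}^\perp)_U$. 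The one genuinely $\mathbb{F}_q$-specific ingredient is that for a weight $i$ with $i-\lfloor(i+q-2)/(q-1)\rfloor<d$, i.e.\ $i\le w$, every support of a weight-$i$ codeword is shared by exactly its $q-1$ nonzero scalar multiples and nothing else, so that counting codewords and counting blocks agree up to the factor $q-1$; I would prove this directly, noting that if $c,c'$ had a common support of size $i\le w$ and were non-proportional then the $q-1$ codewords $c-\lambda c'$ ($\lambda\in\mathbb{F}_q^\ast$) are all nonzero and have weights summing to $(q-2)i$, whence $i\ge(q-1)d/(q-2)>w$, a contradiction (for $q=2$ a common support forces $c=c'$ outright).

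The structural heart is the duality between puncturing and shortening, $(\mathcal{C}^S)^\perp=(\mathcal{C}^\perp)_S$ and $(\mathcal{C}_S)^\perp=(\mathcal{C}^\perp)^S$. Fix $S$ with $|S|=t$. The codewords of $(\mathcal{C}^\perp)_S$ are restrictions of codewords of $\mathcal{C}^\perp$ that already vanish on $S$; such codewords have weight at most $n-t$, so the nonzero weights of $(\mathcal{C}^\perp)_S$ form a subset of the nonzero weights of $\mathcal{C}^\perp$ lying in $[1,n-t]$, and there are at most $s$ of them. Meanwhile $\mathcal{C}^S=((\mathcal{C}^\perp)_S)^\perp$ has minimum distance at least $d-t$, so the first $d-t$ instances ($j=0,\dots,d-t-1$) of the MacWilliams identities \eqref{sec1.1 equ1} for the pair $((\mathcal{C}^\perp)_S,\mathcal{C}^S)$ involve, on the right, only $A^\perp_0=1$; reading them as relations among the weight counts of $(\mathcal{C}^\perp)_S$ yields $d-t$ linear equations with explicitly known right-hand sides for its at most $s$ unknown nonzero-weight counts. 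The coefficient matrix is a generalized Vandermonde matrix in the admissible weights (equivalently the relevant Krawtchouk values are linearly independent), so since $s\le d-t$ the system has a unique solution. Hence the weight distribution of $(\mathcal{C}^\perp)_S$, and therefore of $\mathcal{C}^S$, depends only on $n,k,q,|S|$ and the admissible weight list --- in particular not on which $S$ was chosen --- and running the same argument with $|S|=j\le t$ covers the smaller shortenings and puncturings needed below.

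Feeding this into the reduction finishes the second bullet: the weight-$i$ codewords of $\mathcal{C}^\perp$ support a $t$-design for $d^\perp\le i\le\min\{n-t,w^\perp\}$, where the cut-off $n-t$ is forced by the length of a $t$-fold shortening and $w^\perp$ by the range in which the uniform-multiplicity statement holds, and the block count $A^\perp_i/(q-1)$ is \eqref{sec1.2 equ2}. The first bullet is symmetric: one uses the $S$-independence of the weight distributions of the \emph{punctured} codes $\mathcal{C}^S$ together with the inclusion--exclusion identity that expresses the number of weight-$i$ codewords of $\mathcal{C}$ whose support contains a fixed $t$-set in terms of the weight distributions of $\mathcal{C}^S$ as $S$ ranges over the subsets of that $t$-set, obtaining a $t$-design for $d\le i\le w$.

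I expect the main obstacle to be bookkeeping rather than any single deep step. The rank count is short once one grants the non-degeneracy of the Vandermonde/Krawtchouk system, but pinning that non-degeneracy down in the required generality is one delicate point; and the derivation of the first bullet requires carefully unwinding nested inclusion--exclusion identities relating the weight distributions of all $\le t$-fold puncturings of $\mathcal{C}$ to its weight-$i$ support structure, all while tracking the multiplicity-$(q-1)$ phenomenon responsible for the thresholds $w$ and $w^\perp$ --- the feature that makes the $\mathbb{F}_q$ statement strictly more involved than its binary predecessor.
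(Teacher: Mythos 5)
This theorem is quoted by the paper from Assmus--Mattson (their Theorem 4.2) and is not proved there, so there is no in-paper argument to compare against; what you have written is, in outline, the classical textbook proof (as in Huffman--Pless, Theorem 8.4.7), and its main steps check out. Your scalar-multiple lemma is sound: for non-proportional $c,c'$ with common support of size $i$, the weights of $c-\lambda c'$, $\lambda\in\mathbb{F}_q^\ast$, sum to $(q-2)i$, and since $i\le w$ forces $\lfloor (q-2)i/(q-1)\rfloor<d$, hence $(q-2)i<(q-1)d$, you get the contradiction; this is exactly what makes block-counting via division by $q-1$ legitimate and the designs simple in the ranges $i\le w$, $i\le w^\perp$. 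The core step is also right: with $|S|\le t<d$, puncturing is injective on $\mathcal{C}$, so $\dim(\mathcal{C}^\perp)_S=n-|S|-k$ and $\mathcal{C}^S$ has minimum distance at least $d-t$; the identities \eqref{sec1.1 equ1} for the pair $((\mathcal{C}^\perp)_S,\mathcal{C}^S)$ with $j=0,\dots,d-t-1$ then have known right-hand sides, and since the at most $s\le d-t$ unknown weight counts sit in a binomial-coefficient (polynomial Vandermonde) system evaluated at distinct points, the weight distribution of $(\mathcal{C}^\perp)_S$, hence of $\mathcal{C}^S$, is independent of $S$. The only place where your write-up is loose is the first bullet: in your reduction paragraph you claim it amounts to $U$-independence of the weight distribution of the \emph{shortened} code $\mathcal{C}_U$, which your Vandermonde step does not directly deliver (it controls $(\mathcal{C}^\perp)_U$ and $\mathcal{C}^U$, not $\mathcal{C}_U=((\mathcal{C}^\perp)^U)^\perp$), whereas in your final paragraph you correctly switch to the punctured codes $\mathcal{C}^S$, $S\subseteq T$, and recover the number of weight-$i$ codewords with support containing $T$ by the nested inclusion--exclusion/induction on $|S|$ and on the weight (using $P(v,S)=0$ for $v<\max\{|S|,d\}$ as the base); that route does close the argument, so the inconsistency is presentational rather than a genuine gap, but a full write-up should commit to one of the two reductions and carry out the induction explicitly.
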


The Assmus-Mattson Theorem is very helpful for constructing $t$-designs from linear codes and has been widely used in \cite{Ding2018DesignsLinearCodes,Ding2020InfiniteFamiliesMDS,Tonchev1998CodesDesigns,Tonchev2007Codes,Heng2022ConstructionsMDSNear}. Apart from the Assmus-Mattson theorem, another sufficient condition for $(\mathcal{P},\mathcal{B}_k)$ to be a $t$-design is via the $t$-homogeneous or $t$-transitive automorphism group of the code $\mathcal{C}$ \cite[p. 308]{Huffman2003FundamentalErrorCodes}. Recently, Tang, Ding and Xiong \cite{Tang2020CodesDifferentiallyUniform} presented a generalization of the Assmus-Mattson Theorem. Here we outline the part of this generalization in preparation for the subsequent proofs. Notice that some of the $t$-designs from \cref{sec1 thm2} are trivial, and some may not be simple.
\begin{theorem}\cite[Theorem~5.3]{Tang2020CodesDifferentiallyUniform}
	\label{sec1 thm2}
	Let $\mathcal{C}$ be a $[n,k,d]$ code over $\mathbb{F}_q$. Let $d^\perp$ denote the minimum distance of $\mathcal{C}^\perp$. Let $r$ and $t$ be two positive integers with $t<\min\{d,d^\perp\}$. Let $R$ be an $r$-subset of $\{d,d+1,\dots,n-t\}$. Suppose that $(\mathcal{P}(\mathcal{C}),\mathcal{B}_{\ell}(\mathcal{C}))$ and $(\mathcal{P}(\mathcal{C}^\perp),\mathcal{B}_{\ell^\perp}(\mathcal{C}^\perp))$ are $t$-designs for $\ell\in\{d,d+1,\dots,n-t\}\setminus R$ and $0\le\ell^\perp\le r+t-1$. Then $(\mathcal{P}(\mathcal{C}),\mathcal{B}_i(\mathcal{C}))$ and $(\mathcal{P}(\mathcal{C}^\perp),\mathcal{B}_i(\mathcal{C}^\perp))$ are $t$-designs for any $t\le i\le n$.
\end{theorem}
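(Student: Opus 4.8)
The plan is to reduce everything to a statement about the weight enumerators of the \emph{shortened} codes of $\mathcal{C}$ at at most $t$ coordinates, and then to kill the unknown coefficients by solving a square linear system whose matrix turns out to be a reindexing of a Krawtchouk matrix evaluated at distinct nodes. For a $j$-subset $S$ of the coordinate positions with $j\le t$, write $\mathcal{C}_S$ for the code obtained by shortening $\mathcal{C}$ at $S$. Since $t<d^\perp$, puncturing $\mathcal{C}^\perp$ at $S$ is injective, so $(\mathcal{C}^\perp)^S=(\mathcal{C}_S)^\perp$ has dimension $n-k$ and $\mathcal{C}_S$ is an $[n-j,k-j,\ge d]$ code; symmetrically, since $t<d$, puncturing $\mathcal{C}$ at $S$ is injective and $(\mathcal{C}^\perp)_S=(\mathcal{C}^S)^\perp$. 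A routine inclusion--exclusion over the subsets of a $t$-set, together with the fact that a $t$-design is also a $u$-design for every $u\le t$ (cf. \cite[Chapter 4]{Ding2018DesignsLinearCodes}), shows that ``$(\mathcal{P}(\mathcal{C}),\mathcal{B}_i(\mathcal{C}))$ is a $t$-design for all $t\le i\le n$'' is equivalent to ``for every $j\le t$ the weight enumerator $W_{\mathcal{C}_S}$ depends only on $j$, not on the particular $j$-subset $S$''. So it suffices to prove the latter for $\mathcal{C}$: once it holds, $A_w(\mathcal{C}^S)$ is, for each punctured code, a fixed integer combination of numbers of codewords of $\mathcal{C}$ of weights $w,\dots,w+j$ with prescribed intersection pattern with $S$, all now constant in $S$, so $W_{(\mathcal{C}^\perp)_S}=W_{(\mathcal{C}^S)^\perp}$ depends only on $j$ as well and $\mathcal{C}^\perp$ holds $t$-designs too.

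Next I would extract the linear system. Fix $j$ with $1\le j\le t$ and a reference $j$-subset $S_0$, and set $\Delta_S:=W_{\mathcal{C}_S}-W_{\mathcal{C}_{S_0}}$. Because $\mathcal{C}_S$ has minimum weight $\ge d$ and $\mathcal{B}_\ell(\mathcal{C})$ is a $t$-design for $\ell\in\{d,\dots,n-t\}\setminus R$, the coefficient of $y^i$ in $\Delta_S$ can be nonzero only for $i$ in the $(r+t-j)$-element set $R\cup\{n-t+1,\dots,n-j\}$ (a disjoint union). On the dual side, the MacWilliams identity applied to $\mathcal{C}_S$ and $\mathcal{C}_{S_0}$ (both of dimension $k-j$) gives
\begin{equation*}
	W_{(\mathcal{C}^\perp)^S}(x,y)-W_{(\mathcal{C}^\perp)^{S_0}}(x,y)=q^{-(k-j)}\,\Delta_S\!\big(x+(q-1)y,\;x-y\big),
\end{equation*}
while a direct count using the injectivity of puncturing $\mathcal{C}^\perp$ at $S$ expresses $A_w\big((\mathcal{C}^\perp)^S\big)$ through numbers of codewords of $\mathcal{C}^\perp$ of weights $w,\dots,w+j$; since $\mathcal{B}_{\ell^\perp}(\mathcal{C}^\perp)$ is a $t$-design for $0\le\ell^\perp\le r+t-1$, these are all independent of $S$ whenever $w+j\le r+t-1$, so the left-hand side above has vanishing coefficient of $y^w$ for $0\le w\le r+t-1-j$. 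Equating coefficients yields $r+t-j$ homogeneous linear equations in the $r+t-j$ unknown coefficients of $\Delta_S$.

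Finally I would prove nonsingularity and conclude. The coefficient of $y^w$ in $(x+(q-1)y)^{(n-j)-i}(x-y)^{i}$ is precisely the value $K_w(i)$ of the degree-$w$ Krawtchouk polynomial for length $n-j$ over $\mathbb{F}_q$, so the matrix of the system is $\big(K_w(i)\big)$ with $w\in\{0,1,\dots,r+t-1-j\}$ and $i$ running over the $r+t-j$ \emph{distinct} integers of $R\cup\{n-t+1,\dots,n-j\}$. As $K_0,\dots,K_{r+t-1-j}$ are polynomials of strictly increasing degrees $0,\dots,r+t-1-j$, this matrix is obtained from the Vandermonde matrix on those distinct nodes by an invertible triangular change of basis, hence is nonsingular; therefore $\Delta_S\equiv 0$ for every $S$ and every $j\le t$, and the reduction of the first paragraph finishes the proof for both $\mathcal{C}$ and $\mathcal{C}^\perp$.

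The main obstacle is the bookkeeping that makes the two counts match exactly: one must pin down the precise ranges of known versus unknown coefficients so that the number of usable dual-side equations ($r+t-j$, governed by the bound $r+t-1$ on $\ell^\perp$ in the hypothesis) equals the number of unknowns ($r+t-j$, governed by $|R|=r$ and by the length $n-j$ of $\mathcal{C}_S$), and then recognize the resulting matrix as a Krawtchouk/Vandermonde matrix on distinct nodes. Once this alignment is in place, every remaining step -- the shortening/puncturing duality, the inclusion--exclusion equivalence, and the MacWilliams substitution -- is routine.
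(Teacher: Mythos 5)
This statement is quoted in the paper without proof (it is cited from Tang--Ding--Xiong, Theorem 5.3), so there is no in-paper argument to compare against; judged on its own, your proposal is correct and follows essentially the same strategy as the cited source: reduce the design property to the statement that the weight enumerators of the codes shortened (and punctured) on at most $t$ coordinates do not depend on the choice of coordinates, then pin down the finitely many unknown coefficients via MacWilliams and a nonsingular linear system. Your bookkeeping is right where it matters: since $j\le t<\min\{d,d^\perp\}$, puncturing $\mathcal{C}^\perp$ at $S$ is injective, so $\dim\mathcal{C}_S=k-j$ and $(\mathcal{C}_S)^\perp=(\mathcal{C}^\perp)^S$; the unknown coefficients of $\Delta_S$ sit in the disjoint union $R\cup\{n-t+1,\dots,n-j\}$ of size $r+t-j$; the dual-side design hypotheses for $0\le\ell^\perp\le r+t-1$ give exactly the $r+t-j$ equations (constancy of $A_w((\mathcal{C}^\perp)^S)$ for $0\le w\le r+t-1-j$, since each such count only involves dual weights up to $w+j$); and the coefficient matrix $\bigl(K_w(i)\bigr)$ with $w=0,\dots,r+t-1-j$ and $i$ ranging over $r+t-j$ distinct nodes is nonsingular because the Krawtchouk polynomials have strictly increasing degrees with nonzero leading coefficients $(-q)^w/w!$, so it is a triangular transform of a Vandermonde matrix. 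The only points worth writing out in a full version are the two standard facts you invoke as routine: that a $t$-design on blocks of size exceeding $t$ is also a $j$-design for $j\le t$ (needed for the constancy of the avoidance and intersection counts), and the inclusion--exclusion equivalence, for each fixed weight, between constancy of the avoidance numbers over all $j$-sets with $j\le t$ and the $t$-design property; with those spelled out, the passage from $\mathcal{C}$ to $\mathcal{C}^\perp$ via constancy of $W_{\mathcal{C}^S}$ and $W_{(\mathcal{C}^\perp)_S}=W_{(\mathcal{C}^S)^\perp}$ goes through as you describe.
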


\subsection{Some results of number theory}

The following are some results in elementary number theory, which will be useful for proving the theorems in \cref{sec3}.

\begin{lemma}
	\label{sec2.2 lem1}
	The congruence $ax \equiv b \pmod m$ has a solution if and only if $e=\gcd(a, m)$ divides $b$. Furthermore, if $e$ divides $b$, then the number of solutions of the congruence is $e$. Suppose that $x^\prime$ is one of the solutions. Then the $e$ solutions are $x^\prime, x^\prime+\frac{m}{e}, x^\prime+2\frac{m}{e},\dots,x^\prime+(e-1)\frac{m}{e}$.
\end{lemma}

\begin{lemma}\label{sec2.2 lem2}
	Let $p,i$ and $s$ be three positive integers. Suppose that $\gcd(i,s)=m$. Then
	\begin{equation*}
		\gcd(p^i+1,p^s+1)=\left\{\begin{array}{ll}
			p^m+1 &\mathrm{if}\ \frac{i}{m}\ \mathrm{and}\ \frac{s}{m}\ \mathrm{are\ odd},\\
			\begin{array}{ll}
				1&\mathrm{if}\ p\ \mathrm{is\ even,}\\
				2&\mathrm{if}\ p\ \mathrm{is\ odd,}\\
			\end{array} &\mathrm{otherwise}.\\
		\end{array}\right.
	\end{equation*}
\end{lemma}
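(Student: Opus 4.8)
The plan is to reduce the computation of $\gcd(p^i+1,p^s+1)$ to a known identity for $\gcd(p^a-1,p^b-1)$, which equals $p^{\gcd(a,b)}-1$ over a prime power base. First I would write $d=\gcd(p^i+1,p^s+1)$ and use the elementary fact that $p^i+1 \mid p^{2i}-1$ and $p^s+1 \mid p^{2s}-1$, so $d \mid \gcd(p^{2i}-1,p^{2s}-1)=p^{\gcd(2i,2s)}-1=p^{2m}-1$, where $m=\gcd(i,s)$. On the other hand, $d$ divides $p^s+1$, so $\gcd(d,p^s-1)$ divides $\gcd(p^s+1,p^s-1)$, which is $1$ if $p$ is even and $2$ if $p$ is odd; the same holds with $i$ in place of $s$. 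This already pins $d$ down to within a factor of $2$ once we locate it inside $p^{2m}-1$, and it handles the ``otherwise'' case quickly: if $d$ divides both $p^{2m}-1=(p^m-1)(p^m+1)$ and is coprime (up to a factor $2$) to $p^m-1$, then $d \mid 2(p^m+1)$, and a short parity/size argument forces $d\in\{1,2\}$ unless $p^m+1$ itself divides both $p^i+1$ and $p^s+1$.

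Next I would determine precisely when $p^m+1 \mid p^i+1$. Writing $i=mk$, one has $p^{mk}+1 \equiv (-1)^k\cdot? $; more carefully, modulo $p^m+1$ we have $p^m\equiv -1$, hence $p^i=p^{mk}\equiv(-1)^k$, so $p^i+1\equiv(-1)^k+1\pmod{p^m+1}$, which is $0$ precisely when $k=i/m$ is odd (and is $2$ when $i/m$ is even). The identical statement holds for $s$. Therefore $p^m+1$ divides $\gcd(p^i+1,p^s+1)$ if and only if both $i/m$ and $s/m$ are odd. Combining with the divisibility $d \mid p^{2m}-1$ from the first step: when both quotients are odd, $p^m+1$ divides $d$ and $d$ divides $p^{2m}-1$, and a coprimality argument (using $\gcd(p^m+1,p^m-1)\le 2$ together with the fact that $d/(p^m+1)$ must divide $p^m-1$ up to a factor $2$, yet also divide $p^i+1$) closes $d$ down to exactly $p^m+1$; one checks the factor-of-$2$ ambiguity does not arise because $p^m+1$ and $p^m-1$ are both even when $p$ is odd but $(p^m+1)$ already carries the relevant factor, while for $p$ even there is no ambiguity at all.

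For the remaining case I would argue by contradiction: suppose WLOG $i/m$ is even. Then $p^m+1 \nmid p^i+1$; I claim no prime power $\ell^a$ with $\ell^a \mid p^m+1$ and $\ell^a>2$ can divide $d$. Indeed any common divisor $d$ of $p^i+1$ and $p^s+1$ divides $p^{2m}-1$, and writing $d=d_1 d_2$ with $d_1 \mid p^m-1$ and $d_2 \mid p^m+1$ (splitting by the two coprime-up-to-$2$ factors), the factor $d_1$ must also divide $p^i+1$; but $d_1 \mid p^m-1 \mid p^i-1$ (since $m\mid i$), so $d_1 \mid \gcd(p^i-1,p^i+1)\le 2$, and symmetrically $d_2 \mid \gcd(p^i+1, p^m+1)$, which (since $i/m$ even) is $\le 2$. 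Hence $d \mid 4$, and a direct check modulo small powers of $2$ — using that $p^i+1$ and $p^s+1$ are each $\equiv 2 \pmod 4$ when $p$ is odd — gives $d=2$ for odd $p$ and $d=1$ for even $p$. The main obstacle I anticipate is bookkeeping the factor-of-$2$ subtleties cleanly: the decomposition $d=d_1d_2$ into parts dividing $p^m\mp 1$ is only literally a coprime factorization up to a shared factor $2$ when $p$ is odd, so I would either treat $p=2$ and $p$ odd separately throughout, or phrase everything in terms of $\gcd$s with the ambient modulus $p^{2m}-1$ and invoke \cref{sec2.2 lem1} to track the $2$-part carefully. Everything else is the standard $p^m\equiv -1$ substitution and the identity $\gcd(p^a-1,p^b-1)=p^{\gcd(a,b)}-1$.
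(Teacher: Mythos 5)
The paper never proves this lemma: it is stated as a known number-theoretic fact, parallel to Lemma~\ref{sec2.2 lem3} which is cited to Coulter, so there is no in-paper argument to compare yours against. On its own merits, your proposal is the standard proof and is essentially sound: $d:=\gcd(p^i+1,p^s+1)$ divides $\gcd(p^{2i}-1,p^{2s}-1)=p^{2m}-1$; the substitution $p^m\equiv-1$ shows $p^m+1\mid p^i+1$ exactly when $i/m$ is odd; and since $m\mid i$ gives $p^m-1\mid p^i-1$, any divisor of $d$ lying in $p^m\mp1$ is controlled by $\gcd(p^i+1,p^i-1)\le 2$. The splitting $d=d_1d_2$ with $d_1\mid p^m-1$, $d_2\mid p^m+1$ is legitimate (distribute the exponent of each prime of $d$ between the two factors), though you could bypass it by noting $\gcd(d,p^m-1)\le 2$ and, when $i/m$ is even, $p^m+1\mid p^i-1$ so also $\gcd(d,p^m+1)\le 2$.

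Two factor-of-$2$ points, which you yourself flag as the obstacle, need to be tightened. First, in the both-odd case you reach $d=(p^m+1)t$ with $t\mid p^m-1$ and $t\mid p^i+1$, hence $t\mid 2$, but your stated reason for excluding $t=2$ ("$p^m+1$ already carries the relevant factor") is not an argument; the clean fix is that for $k=i/m$ odd one has $(p^i+1)/(p^m+1)=p^{m(k-1)}-p^{m(k-2)}+\cdots+1$, an alternating sum of an odd number of odd terms, hence odd, so the $2$-part of $p^i+1$ equals that of $p^m+1$ and $2(p^m+1)\nmid p^i+1$. Second, in the remaining case your parenthetical that $p^i+1$ and $p^s+1$ are \emph{each} $\equiv 2\pmod 4$ for odd $p$ is false in general (for $p\equiv 3\pmod 4$ and an odd exponent the value is divisible by $4$); what saves you is that only the even-exponent one is needed: $i/m$ even forces $i$ even, so $p^i\equiv 1\pmod 4$, giving $4\nmid d$, and with $d\mid 4$ this yields $d=2$ for $p$ odd and $d=1$ for $p$ even. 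With these two repairs your argument is complete and is the proof the authors implicitly rely on.
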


\begin{lemma}\cite[Lemma 2.6]{Coulter1998Explicit}\cite[Lemma 2.1]{Coulter1999Evaluation}
	\label{sec2.2 lem3}
	Let $p,i$ and $s$ be three positive integers. Suppose that $\gcd(i,s)=m$. Then
	\begin{equation*}
		\gcd(p^i-1,p^s+1)=\left\{\begin{array}{ll}
			p^m+1 &\mathrm{if}\ \frac{i}{m}\ \mathrm{is\ even},\\
			\begin{array}{ll}
				1&\mathrm{if}\ p\ \mathrm{is\ even,}\\
				2&\mathrm{if}\ p\ \mathrm{is\ odd,}\\
			\end{array} &\mathrm{otherwise}.\\
		\end{array}\right.
	\end{equation*}
\end{lemma}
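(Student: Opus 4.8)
The plan is to normalise to the coprime case and then reduce everything to the well-known identity $\gcd(P^a-1,P^b-1)=P^{\gcd(a,b)}-1$. Write $i=mi_0$ and $s=ms_0$ with $\gcd(i_0,s_0)=1$, and set $P=p^m$; then one must evaluate $d:=\gcd(P^{i_0}-1,P^{s_0}+1)$ (assume $p\ge2$; the case $p=1$ is trivial). Since $i_0$ and $s_0$ are coprime they cannot both be even, so there are exactly two cases, ``$i_0$ odd'' and ``$i_0$ even (hence $s_0$ odd)'', and these match the ``otherwise'' row and the ``$i/m$ even'' row of the claimed formula, respectively.

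For $i_0$ odd: from $d\mid P^{s_0}+1$ we get $d\mid P^{2s_0}-1$, hence $d\mid\gcd(P^{i_0}-1,P^{2s_0}-1)=P^{\gcd(i_0,2s_0)}-1=P-1$, the last step because $i_0$ odd forces $\gcd(i_0,2s_0)=\gcd(i_0,s_0)=1$. Since $d$ also divides $P^{s_0}+1\equiv2\pmod{P-1}$, we get $d\mid\gcd(P-1,2)$: this equals $1$ when $p$ is even, and forces $d=2$ when $p$ is odd (then $P$ is odd, so $P^{i_0}-1$ and $P^{s_0}+1$ are both even).

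For $i_0$ even (so $s_0$ odd) the target is $d=P+1$. One inclusion is immediate: $i_0$ even gives $P+1\mid P^2-1\mid P^{i_0}-1$, and $s_0$ odd gives $P\equiv-1\pmod{P+1}$, hence $P^{s_0}+1\equiv0\pmod{P+1}$; thus $P+1\mid d$. For the reverse, as above $d\mid P^{\gcd(i_0,2s_0)}-1$, and now $\gcd(i_0,2s_0)=2$ (its $2$-part is $1$ since $2\mid i_0$, and its odd part divides $\gcd(i_0,s_0)=1$), so $d\mid P^2-1=(P-1)(P+1)$. Writing $d=(P+1)e$ gives $e\mid P-1$; and since $(P+1)e=d$ divides $P^{s_0}+1$ while $P+1$ also divides $P^{s_0}+1$, also $e\mid\frac{P^{s_0}+1}{P+1}$. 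Evaluating the alternating sum $\frac{P^{s_0}+1}{P+1}=P^{s_0-1}-P^{s_0-2}+\cdots-P+1$ modulo $P-1$ (every power of $P$ is $\equiv1$, and there are $s_0$ terms with $s_0$ odd) shows $\frac{P^{s_0}+1}{P+1}\equiv1\pmod{P-1}$, so $e\mid\gcd\!\left(P-1,\frac{P^{s_0}+1}{P+1}\right)=1$ and $d=P+1=p^m+1$.

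I expect the only real obstacle to be this last refinement in the even case: the two divisibilities $P+1\mid d$ and $d\mid(P-1)(P+1)$ alone permit $d=(P+1)e$ for any $e\mid P-1$, and the surplus factor must be removed by exploiting the internal structure of $P^{s_0}+1$ — concretely, by passing to the quotient $\frac{P^{s_0}+1}{P+1}$ and reducing it modulo $P-1$. Everything else is routine bookkeeping with $\gcd(P^a-1,P^b-1)=P^{\gcd(a,b)}-1$ together with $2$-adic parity considerations on $i_0$ and $s_0$; the companion formula in \cref{sec2.2 lem2} for $\gcd(p^i+1,p^s+1)$ yields to the same scheme.
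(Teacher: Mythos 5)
Your argument is correct. Note, however, that the paper itself gives no proof of this statement: it is quoted as a known result with citations to Coulter (Lemma 2.6 of the 1998 paper and Lemma 2.1 of the 1999 paper), so there is no internal proof to compare against. What you supply is a self-contained elementary derivation, and it holds up: writing $P=p^m$, $i=mi_0$, $s=ms_0$ with $\gcd(i_0,s_0)=1$, the odd-$i_0$ case follows from $d\mid\gcd(P^{i_0}-1,P^{2s_0}-1)=P-1$ together with $P^{s_0}+1\equiv 2\pmod{P-1}$, and the even-$i_0$ case from $P+1\mid d\mid P^{2}-1$ plus the refinement $e\mid\gcd\bigl(P-1,\tfrac{P^{s_0}+1}{P+1}\bigr)=1$, which you correctly justify by reducing the alternating sum $P^{s_0-1}-P^{s_0-2}+\cdots+1$ (with $s_0$ odd) modulo $P-1$; this last step is indeed the only nontrivial point and you handle it properly. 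Two cosmetic remarks: the parenthetical ``its $2$-part is $1$'' should read that the $2$-adic valuation of $\gcd(i_0,2s_0)$ equals $1$ (since $2\mid i_0$ while $s_0$ is odd), which is what your conclusion $\gcd(i_0,2s_0)=2$ uses; and the ``normalisation'' to $P$ is merely a rewriting of $p^i$ and $p^s$, not a genuine reduction, though nothing depends on that. Your scheme also transfers to \cref{sec2.2 lem2} as you claim, with the analogous case split on the parities of $i_0$ and $s_0$.
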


\subsection{The number of zeros of the polynomial $P_a(X)$}
We will need the following two lemmas about the zeros in $\mathbb{F}_{p^n}$ of the polynomial
\begin{equation*}
	P_a(X):=X^{p^k+1}+X+a,a\in\mathbb{F}_{p^n}^\ast.
\end{equation*}

\begin{lemma}\cite{Bluher2004Xq}
	\label{sec2.3 lem1}
	Let $e=\gcd(n,k)$ and $N_a$ denote the number of zeros in $\mathbb{F}_{p^n}$ of $P_a(X)$. Then $N_a$ takes either $0,1,2$ or $p^{e}+1$.
\end{lemma}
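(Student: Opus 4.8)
The plan is to fix one zero of $P_a$ and parametrise all the others by the solutions of a linearised affine equation, whose number of solutions is controlled by the well-known identity $\gcd(p^{k}-1,p^{n}-1)=p^{\gcd(k,n)}-1$. Write $q=p^{k}$ and let $R\subseteq\mathbb{F}_{p^{n}}$ be the zero set of $P_a$. If $R=\varnothing$ then $N_a=0$ and there is nothing to prove, so from now on assume $x_0\in R$.

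First I would record two degeneracy facts that will make the substitutions below bijective. Since $a\neq 0$ we have $P_a(0)=a\neq 0$, so $0\notin R$; and $x_0^{q}+1\neq 0$, since $x_0^{q}+1=0$ would force $x_0^{q+1}+x_0=x_0(x_0^{q}+1)=0$ and hence $a=0$. Next, for $x=x_0+z$ with $z\in\mathbb{F}_{p^{n}}$, using $P_a(x_0)=0$ and $(x_0+z)^{q}=x_0^{q}+z^{q}$ I would expand
\begin{equation*}
	P_a(x_0+z)=z^{q+1}+x_0z^{q}+x_0^{q}z+z=z\bigl(z^{q}+x_0z^{q-1}+x_0^{q}+1\bigr),
\end{equation*}
so that $x_0+z\in R$ iff $z=0$ or $h(z):=z^{q}+x_0z^{q-1}+(x_0^{q}+1)=0$; as $h(0)=x_0^{q}+1\neq 0$, this gives $N_a=1+\#\{z\in\mathbb{F}_{p^{n}}:h(z)=0\}$ and every zero of $h$ is nonzero. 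Replacing $z$ by $1/w$ and multiplying by $w^{q}$ turns $h(1/w)=0$ into $(x_0^{q}+1)w^{q}+x_0w+1=0$, i.e. into $w^{q}+\beta w+\gamma=0$ with $\beta=x_0/(x_0^{q}+1)$ and $\gamma=1/(x_0^{q}+1)\neq 0$; since $\gamma\neq 0$, the map $z\mapsto 1/z$ is a bijection between the two zero sets, so $N_a=1+\#\{w\in\mathbb{F}_{p^{n}}:w^{q}+\beta w+\gamma=0\}$.

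Finally I would analyse $L(w):=w^{q}+\beta w$. By the $\mathbb{F}_{p}$-linearity of $w\mapsto w^{q}$, the map $L$ is an $\mathbb{F}_{p}$-linear endomorphism of $\mathbb{F}_{p^{n}}$, so the solution set of $L(w)=-\gamma$ is either empty or a coset of $\ker L$; and $\ker L=\{0\}\cup\{w\in\mathbb{F}_{p^{n}}^{\ast}:w^{q-1}=-\beta\}$. Because $x^{q-1}=c$ has either $0$ or $\gcd(q-1,p^{n}-1)$ solutions in the cyclic group $\mathbb{F}_{p^{n}}^{\ast}$, and $\gcd(p^{k}-1,p^{n}-1)=p^{\gcd(k,n)}-1=p^{e}-1$, we get $|\ker L|\in\{1,p^{e}\}$, hence $\#\{w:w^{q}+\beta w+\gamma=0\}\in\{0,1,p^{e}\}$, and therefore $N_a\in\{1,2,p^{e}+1\}$. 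Together with the case $N_a=0$ this is exactly the claim.

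I expect the only genuine subtlety to be the bookkeeping that makes the two substitutions $z=x-x_0$ and $w=1/z$ bijections of the relevant sets — namely checking $0\notin R$, that $x_0^{q}+1\neq 0$, and that $0$ is a zero of neither $h$ nor $w^{q}+\beta w+\gamma$ — together with the single point at which the exponent $e$ actually enters: the identity $\gcd(p^{k}-1,p^{n}-1)=p^{\gcd(k,n)}-1$ is the sole reason the largest admissible value of $N_a$ is $p^{e}+1$ rather than something larger.
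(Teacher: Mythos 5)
Your argument is correct, and I checked each step: the normalizations ($0\notin R$, $x_0^q+1\ne 0$, $\beta\ne 0$, $\gamma\ne 0$), the factorization $P_a(x_0+z)=z\bigl(z^{q}+x_0z^{q-1}+x_0^{q}+1\bigr)$, the inversion $z\mapsto 1/w$ matching the two zero sets, and the final count via the affine equation $L(w)=-\gamma$ with $|\ker L|\in\{1,p^{e}\}$ coming from $\gcd(p^{k}-1,p^{n}-1)=p^{\gcd(k,n)}-1$. Note, however, that the paper does not prove this statement at all: it is quoted from Bluher's paper on $x^{q+1}+ax+b$, where the result is obtained as part of a much more elaborate analysis that also determines, for each admissible value, how many $a$ attain it — information the present paper never uses. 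Your reduction is therefore a genuinely different and more economical route to exactly the part that is needed: fixing one root and parametrizing the remaining ones by an affine $p$-polynomial equation whose solution set is empty or a coset of a kernel of size $1$ or $p^{e}$. It is worth remarking that this is the same mechanism the paper itself deploys later (the equation $(D_3\gamma)^q+D_3\gamma+D_4=0$ in Case 3.4 of Theorem 3, analyzed through cosets of an $\mathbb{F}_{p^m}$-subspace), and that your parametrization of the roots as $x_0+1/w$ with $w$ running over a coset of $\ker L$ is close in spirit to the explicit description of the $p^e+1$ solutions in the cited lemma of Kim, Choe and Mesnager; so your proof could serve as a self-contained substitute for the citation, at the cost of not yielding the distributional statements in Bluher's theorem.
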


\begin{lemma}\cite[Lemma 17]{Kim2021COmpleteSolutionEquation}
	\label{sec2.3 lem2}
	Let $e=\gcd(n,k)$. Suppose that $P_a(X)=0$ has exactly $p^e+1$ solutions in $\mathbb{F}_{p^n}$. Let $x_0$ be a solution to $P_a(X)=0$. Then $\frac{x_{0}^{2}}{a}$ is a $(p^k-1)$-th power in $\mathbb{F}_{p^n}$. For $\delta\in\mathbb{F}_{p^n}$ with $\delta^{p^k-1}=\frac{x_{0}^{2}}{a}$,
	\begin{equation}\label{sec2.3 lem2 equ1}
		w^{p^k-1}-w+\frac{1}{\delta x_0}=0
	\end{equation}
	has exactly $p^e$ solutions in $\mathbb{F}_{p^n}$. Let $w_0\in\mathbb{F}_{p^n}$ be a solution to \cref{sec2.3 lem2 equ1}. Then, the $p^e+1$ solutions in $\mathbb{F}_{p^n}$ to $P_a(X)=0$ are $x_0,(w_0+\gamma)^{p^k-1}\cdot x_0$ where $\gamma$ runs over $\mathbb{F}_{p^e}$.
\end{lemma}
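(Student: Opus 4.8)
The plan is to pin down every root of $P_a$ in terms of a single fixed root $x_0$, by converting the norm-type relation $x^{p^k+1}+x+a=0$ into an $\mathbb{F}_p$-affine (linearized) equation and then reading off the asserted structure from the hypothesis $N_a=p^{e}+1$ by a kernel count. Since $a\in\mathbb{F}_{p^n}^\ast$ no root of $P_a$ is $0$; fix a root $x_0$, so that dividing $x_0^{p^k+1}+x_0+a=0$ by $x_0$ gives $x_0^{p^k}+1=-a/x_0$. For any other root $x$ set $u=x-x_0\neq 0$; subtracting the two defining relations and expanding $(x_0+u)^{p^k+1}=(x_0^{p^k}+u^{p^k})(x_0+u)$ in characteristic $p$ produces $x_0^{p^k}u+x_0u^{p^k}+u^{p^k+1}+u=0$. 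Dividing by $u$, replacing $x_0^{p^k}+1$ by $-a/x_0$, and passing to $s=1/u$ one is left with the affine equation
\begin{equation*}
	s^{p^k}-\frac{x_0^{2}}{a}\,s-\frac{x_0}{a}=0.
\end{equation*}
Every step reverses, so a nonzero $s\in\mathbb{F}_{p^n}$ solves this equation if and only if $x_0+1/s$ is a root of $P_a$ other than $x_0$; hence $x\mapsto 1/(x-x_0)$ is a bijection from the $p^{e}$ roots $x\neq x_0$ onto the solution set of the displayed equation.

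The crux is the kernel count. The left-hand side above is the additive map $\mathcal{L}(s)=s^{p^k}-\frac{x_0^{2}}{a}s$ on $\mathbb{F}_{p^n}$, and $\ker\mathcal{L}$ equals $\{0\}$ together with the elements $s\in\mathbb{F}_{p^n}^\ast$ satisfying $s^{p^k-1}=x_0^{2}/a$. Since $\gcd(p^k-1,p^n-1)=p^{e}-1$, the latter set is empty unless $x_0^{2}/a$ is a $(p^k-1)$-th power in $\mathbb{F}_{p^n}^\ast$, in which case it has exactly $p^{e}-1$ elements; thus $|\ker\mathcal{L}|\in\{1,p^{e}\}$. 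On the other hand the solution set of $\mathcal{L}(s)=x_0/a$ is nonempty by the previous paragraph, hence a coset of $\ker\mathcal{L}$, and it contains the $p^{e}$ pairwise distinct values $1/(x-x_0)$; therefore $|\ker\mathcal{L}|=p^{e}$. In particular $x_0^{2}/a$ is a $(p^k-1)$-th power and the affine equation has exactly $p^{e}$ solutions.

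Write $x_0^{2}/a=\delta^{p^k-1}$ with $\delta\in\mathbb{F}_{p^n}^\ast$ and substitute $s=-\delta w$; using $a\delta^{p^k-1}=x_0^{2}$ this turns the affine equation into $w^{p^k}-w+\frac{1}{\delta x_0}=0$, which likewise has $p^{e}$ solutions, and since $w\mapsto w^{p^k}-w$ is additive with kernel $\mathbb{F}_{p^k}\cap\mathbb{F}_{p^n}=\mathbb{F}_{p^{e}}$ these solutions form a coset $w_0+\mathbb{F}_{p^{e}}$. Finally $x=x_0+1/s=x_0-\frac{1}{\delta w}$, and from $w^{p^k}-w=-\frac{1}{\delta x_0}$ one gets $w^{p^k-1}-1=\frac{w^{p^k}-w}{w}=-\frac{1}{\delta x_0 w}$, hence $w^{p^k-1}x_0=x_0-\frac{1}{\delta w}=x$. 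Letting $w=w_0+\gamma$ with $\gamma$ ranging over $\mathbb{F}_{p^{e}}$ then produces exactly the list $x_0,\ (w_0+\gamma)^{p^k-1}x_0$ of all $p^{e}+1$ roots of $P_a$.

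I expect the kernel count to be the only real obstacle: the remaining steps are routine characteristic-$p$ algebra, but the two assertions that $x_0^{2}/a$ is a $(p^k-1)$-th power and that the auxiliary equation has exactly $p^{e}$ solutions are forced only by playing the hypothesis $N_a=p^{e}+1$ against the a priori dichotomy $|\ker\mathcal{L}|\in\{1,p^{e}\}$. One should also check that every division performed (by $u$, $s$, $w$, $\delta$) is by a nonzero element, and note that replacing $\delta$ by another $(p^k-1)$-th root $\delta c$, $c\in\mathbb{F}_{p^{e}}^\ast$, only translates $w_0$ inside $\mathbb{F}_{p^{e}}$, so the resulting family of roots does not depend on the choice of $\delta$.
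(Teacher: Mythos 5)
The paper offers no proof of this lemma at all---it is imported verbatim from Kim--Choe--Mesnager \cite{Kim2021COmpleteSolutionEquation}---so there is nothing internal to compare against; your argument is a correct, self-contained derivation along the natural lines (translate by the known root $x_0$, invert, linearize, and count the kernel), and every division you perform (by $u$, $s$, $w$, $\delta$, $x_0$) is indeed by a nonzero element. The one substantive point worth recording is that what your substitution $s=-\delta w$ actually produces is the \emph{linearized} equation $w^{p^k}-w+\frac{1}{\delta x_0}=0$, whose solution set is a coset of $\mathbb{F}_{p^e}=\ker\left(w\mapsto w^{p^k}-w\right)$; the exponent $p^k-1$ printed in \cref{sec2.3 lem2 equ1} cannot be right, since $w\mapsto w^{p^k-1}-w$ is not additive and, for instance, with $p=2$, $k=1$, $n=3$ and $a=1$ (where $X^{3}+X+1$ does have $p^e+1=3$ roots in $\mathbb{F}_{8}$) the printed equation degenerates to $\frac{1}{\delta x_0}=0$ and has no solutions at all. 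Your proof therefore establishes the corrected form of the middle assertion, which is exactly what the final root description $x_0,\ (w_0+\gamma)^{p^k-1}x_0$ requires, and it exposes a typo that propagates into \cref{sec3 thm1 equ13} in the proof of \cref{sec3 thm1}. A minor nit on your closing remark: replacing $\delta$ by $\delta c$ with $c\in\mathbb{F}_{p^e}^\ast$ rescales $w_0$ by $c^{-1}$ rather than translating it, but since $c^{p^k-1}=1$ the resulting family $(w_0+\gamma)^{p^k-1}x_0$ is unchanged, so the independence of the choice of $\delta$ still holds.
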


\section{Infinite families of MDS, NMDS and AMDS codes}\label{sec3}
Throughout this section and subsequent sections, let $q$ be the power of a prime $p$ and $U_l$ be the set of all $l$-th roots of unity in $\mathbb{F}_{q^2}$. It should be noted that for any $x\in U_{q+1}$ a straightforward yet crucial property is that $x^q=x^{-1}$. In this section, we show that the dual code of the BCH code $\mathcal{C}_{(q,q+1,3,h)}$ with $h=\frac{q-p^i}{2}$ or $\frac{p^i-1}{2}$ is a four-weight code. By determining the parameters of these four-weight codes, we obtain infinite families of MDS, NMDS and AMDS codes. Especially, we present a proof that resolves the conjecture posed in \cite[Conjecture 3.6]{Geng2022ClassAlmostMDS}.

\begin{theorem}\label{sec3 thm1}
	Let $q=p^s$ and $h=\frac{q-p^i}{2}$, where $s>1$ and $0<i<s$ are two integers. Suppose that $\gcd(i,s)=m$. Then the code $\mathcal{C}^\perp_{(q,q+1,3,h)}$ is a four-weight code with parameters $[q+1,4,q-p^m]$.
\end{theorem}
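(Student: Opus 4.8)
\emph{First I would fix the dimension and a trace model of the code.} Since $q\equiv-1\pmod{q+1}$, every $q$-cyclotomic coset modulo $q+1$ has the form $\{j,-j\}$ and hence size $1$ or $2$; using $0<i<s$ one checks that $2h=q-p^i$, $2h+1=q-p^i+1$ and $2h+2=q-p^i+2$ are all $\not\equiv0\pmod{q+1}$, so that $C_h=\{h,-h\}$ and $C_{h+1}=\{h+1,-(h+1)\}$ are disjoint $2$-subsets of $\mathbb{Z}_{q+1}$. Thus the defining set of $\mathcal{C}_{(q,q+1,3,h)}$ has cardinality $4$ and $\dim\mathcal{C}^\perp_{(q,q+1,3,h)}=4$. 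The standard trace description of a cyclic code through its nonzeros then gives
\begin{equation*}
\mathcal{C}^\perp_{(q,q+1,3,h)}=\Bigl\{\mathbf{c}(a,b)=\bigl(\mathrm{Tr}_{\mathbb{F}_{q^2}/\mathbb{F}_q}(ax^{h}+bx^{h+1})\bigr)_{x\in U_{q+1}}:a,b\in\mathbb{F}_{q^2}\Bigr\},
\end{equation*}
and counting dimensions over $\mathbb{F}_q$ forces $(a,b)\mapsto\mathbf{c}(a,b)$ to be a bijection, so all nonzero codewords come from $(a,b)\ne(0,0)$.

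\emph{Next I would translate weights into root counts.} For $x\in U_{q+1}$ one has $x^q=x^{-1}$, so the $x$-coordinate of $\mathbf{c}(a,b)$ equals $ax^h+bx^{h+1}+a^qx^{-h}+b^qx^{-h-1}$; multiplying by the nonzero element $x^{\,p^i+h+1}$ and reducing exponents modulo $q+1$ (using $2h+1\equiv-p^i$ and $2h+2\equiv1-p^i$) shows this coordinate vanishes if and only if $\Phi_{a,b}(x)=0$, where $\Phi_{a,b}(X):=a^qX^{p^i+1}+b^qX^{p^i}+bX+a$. Hence $\mathrm{wt}(\mathbf{c}(a,b))=q+1-N_{a,b}$ with $N_{a,b}:=\#\{x\in U_{q+1}:\Phi_{a,b}(x)=0\}$.

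\emph{The crux, and the step I would spend most effort on, is counting these roots over $\mathbb{F}_q$ rather than $\mathbb{F}_{q^2}$.} Fix $c\in\mathbb{F}_{q^2}\setminus\mathbb{F}_q$; then $v\mapsto x=\frac{v-c}{v-c^{q}}$ is a bijection $\mathbb{F}_q\cup\{\infty\}\to U_{q+1}$ sending $\infty$ to $1$. Because $\Phi_{a,b}$ involves only the exponents $p^i+1,p^i,1,0$ and because $(v-c)^{p^i}=v^{p^i}-c^{p^i}$, clearing denominators produces
\begin{equation*}
\Psi_{a,b}(v):=(v-c^q)^{p^i+1}\,\Phi_{a,b}\!\Bigl(\tfrac{v-c}{v-c^{q}}\Bigr)=s_1v^{p^i+1}+e_2v^{p^i}+e_1v+e_0,
\end{equation*}
a polynomial using the \emph{same four monomials}, whose coefficients are traces $\mathrm{Tr}_{\mathbb{F}_{q^2}/\mathbb{F}_q}(\cdots)$ — in particular $s_1=\mathrm{Tr}_{\mathbb{F}_{q^2}/\mathbb{F}_q}(a+b)$ — and hence lie in $\mathbb{F}_q$; moreover $\Psi_{a,b}\equiv0$ forces $\Phi_{a,b}\equiv0$, i.e. $(a,b)=(0,0)$, so $(a,b)\mapsto(s_1,e_2,e_1,e_0)$ is a bijection $\mathbb{F}_{q^2}^2\to\mathbb{F}_q^4$. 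As the parametrization is bijective, $N_{a,b}$ is exactly the number of roots of $\Psi_{a,b}$ in $\mathbb{F}_q\cup\{\infty\}$, with $\infty$ a root iff $s_1=0$. When $s_1\ne0$, dividing by $s_1$ and substituting $v\mapsto v-e_2/s_1$ eliminates $v^{p^i}$, and if the remaining linear coefficient is also nonzero a further scaling yields a Bluher polynomial $W^{p^i+1}+W+C$ over $\mathbb{F}_q$, whose number of roots in $\mathbb{F}_q$ is $0$, $1$, $2$ or $p^m+1$ by \cref{sec2.3 lem1} with $n=s$, $k=i$ (so $e=\gcd(s,i)=m$). The degenerate subcases are dispatched directly: a vanishing linear coefficient leaves $W^{p^i+1}+B'$, with root count $0$, $1$, or $\gcd(p^i+1,q-1)\in\{1,2,p^m+1\}$ by \cref{sec2.2 lem2,sec2.2 lem3} plus elementary $2$-adic estimates; and $s_1=0$ leaves a polynomial of the form $e_2v^{p^i}+e_1v+e_0$, whose nonempty fibres have size $1$ or $p^m$ because $\gcd(p^i-1,q-1)=p^m-1$. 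In every case $N_{a,b}\in\{0,1,2,p^m+1\}$ for $(a,b)\ne(0,0)$, so the nonzero weights lie in $\{q+1,q,q-1,q-p^m\}$ and $d\ge q-p^m$.

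\emph{Finally I would exhibit a codeword of each of these four weights}, using the bijection $(a,b)\leftrightarrow(s_1,e_2,e_1,e_0)$. The choice $(0,1,-1,0)$ gives $\Psi_{a,b}(v)=v^{p^i}-v$, whose $\mathbb{F}_q$-roots are exactly $\mathbb{F}_{p^i}\cap\mathbb{F}_q=\mathbb{F}_{p^m}$ (that is $p^m$ of them), together with $\infty$, so $N_{a,b}=p^m+1$ and this codeword has weight $q-p^m$, whence $d=q-p^m$; the choice $(a,b)=(1,-1)$ makes $\Phi_{a,b}(X)=a^q(X-1)^{p^i+1}$, so $N_{a,b}=1$ and the weight is $q$; the choice $(0,1,0,1)$ gives $\Psi_{a,b}(v)=v^{p^i}+1$, so $N_{a,b}=2$ and the weight is $q-1$; and $(1,0,1,c_0)$ gives $\Psi_{a,b}(v)=v^{p^i+1}+v+c_0=v(v+1)^{p^i}+c_0$, which for a suitable $c_0\ne0$ has no root in $\mathbb{F}_q\cup\{\infty\}$ (such $c_0$ exists because $v\mapsto v(v+1)^{p^i}$ is not surjective on $\mathbb{F}_q$, vanishing at both $0$ and $-1$), so the weight $q+1$ also occurs. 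Hence $\mathcal{C}^\perp_{(q,q+1,3,h)}$ is a four-weight $[q+1,4,q-p^m]$ code. I expect the third step to be the genuine obstacle: one must realize $U_{q+1}$ as a projective line over $\mathbb{F}_q$ so that the root count happens over $\mathbb{F}_q$ with $e=\gcd(s,i)=m$, since applying \cref{sec2.3 lem1} directly over $\mathbb{F}_{q^2}$ only bounds $N_{a,b}$ by $p^{\gcd(2s,i)}+1$, which is strictly larger than $p^m+1$ when $i/m$ is even; the residual labour is then the degenerate-coefficient bookkeeping together with the $\gcd$ identities of \cref{sec2.2 lem2,sec2.2 lem3}.
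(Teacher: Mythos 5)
Your proof is correct, and its crux is genuinely different from the paper's. The paper reduces weights to counting zeros in $U_{q+1}$ of $a+bu+b^qu^{p^i}+a^qu^{p^i+1}$ just as you do, but it then stays over $\mathbb{F}_{q^2}$: after normalizing it applies \cref{sec2.3 lem1} with $e=\gcd(i,2s)$, and to cut the possible $p^{e}+1$ solutions down to those lying in $U_{q+1}$ it invokes the explicit parametrization of \cref{sec2.3 lem2}, showing that the admissible $\gamma$'s form an affine $\mathbb{F}_{p^m}$-space of dimension at most one; the cases $b=0$ and $a=0$ are treated separately via \cref{sec2.2 lem1,sec2.2 lem2,sec2.2 lem3}. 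You instead transport the whole problem to $\mathbb{F}_q$ once and for all through the M\"obius parametrization $v\mapsto(v-c)/(v-c^{q})$ of $U_{q+1}$ by $\mathbb{F}_q\cup\{\infty\}$, noting that the monomial support $\{0,1,p^i,p^i+1\}$ is preserved and the new coefficients are traces, so Bluher's bound applies directly with $e=\gcd(i,s)=m$ --- exactly the point you flag, since a direct application over $\mathbb{F}_{q^2}$ only yields $p^{\gcd(i,2s)}+1$. This buys a shorter, unified argument (your $s_1=0$ and degenerate subcases absorb the paper's Cases 1--3) at the cost of the routine verifications that the transformed coefficients lie in $\mathbb{F}_q$ and that $(a,b)\mapsto(s_1,e_2,e_1,e_0)$ is an $\mathbb{F}_q$-linear bijection onto $\mathbb{F}_q^4$. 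Your endgame also differs and is in fact more complete: the paper obtains a weight-$(q-p^m)$ codeword from the Singleton bound and is terse about the other three weights, whereas your coefficient bijection exhibits codewords of all four weights explicitly. Two small points to patch, neither a real gap: \cref{sec2.3 lem1} is stated for nonzero constant term, so the subcase $W^{p^i+1}+W$ needs the one-line remark that it has exactly two roots in $\mathbb{F}_q$; and the injectivity claim ``$\Psi_{a,b}\equiv0\Rightarrow\Phi_{a,b}\equiv0$'' should cite the degree count $\deg\Phi_{a,b}\le p^{s-1}+1<q$ against the $q$ zeros forced on $U_{q+1}$.
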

\begin{proof}
	Let $\alpha$ be a generator of $\mathbb{F}_{q^2}^{\ast}$. Then $\beta=\alpha^{q-1}$ is a primitive $(q+1)$-th root of unity in $\mathbb{F}_{q^2}$. Let $g_h(x)$ and $g_{h+1}(x)$ denote the minimal polynomials of $\beta^h$ and $\beta^{h+1}$ over $\mathbb{F}_q$, respectively. Note that $g_h(x)=(x-\beta^{\frac{q-p^i}{2}})(x-\beta^{\frac{q+p^i}{2}+1})$ and $g_{h+1}(x)=(x-\beta^{\frac{q-p^i}{2}+1})(x-\beta^{\frac{q+p^i}{2}})$. Since $0<i<s$, $\deg(g_h(x)g_{h+1}(x))=4$, and so the dimension of $\mathcal{C}^\perp_{(q,q+1,3,h)}$ is 4. Define
	\begin{equation*}
		\left.H=\left[\begin{array}{lllll}1&\beta^{h}&(\beta^{h})^2&\cdots&(\beta^{h})^q\\1&\beta^{h+1}&(\beta^{h+1})^2&\cdots&(\beta^{h+1})^q\end{array}\right.\right],
	\end{equation*}
	then $H$ is a parity-check matrix of $\mathcal{C}_{(q,q+1,3,h)}$. Then it follows from Delsarte’s theorem \cite[Theorem 2]{Delsarte1975SubfieldSubcodesModified} that the trace expression of $\mathcal{C}_{(q,q+1,3,h)}^{\perp}$ is given by
	\begin{equation*}
		\mathcal{C}_{(q,q+1,3,h)}^\perp=\{\mathbf{c}_{(a,b)}:a,b\in\mathbb{F}_{q^2}\},
	\end{equation*}
	where $\mathbf{c}_{(a,b)}=\left(\mathrm{Tr}_{q^{2}/q}\left(a\beta^{hi}+b\beta^{(h+1)i}\right)\right)_{i=0}^{q}.$ Let $u \in U_{q+1}$. Then 
	\begin{equation}
		\label{sec3 thm1 equ1}
		\begin{aligned}
			\operatorname{Tr}_{q^2/q}(au^h+bu^{h+1})&=au^h+bu^{h+1}+b^qu^{q-h}+a^qu^{q+1-h}\\&=u^{h}(a+bu+b^qu^{q-2h}+a^qu^{q-2h+1})\\&=u^{\frac{q-p^i}{2}}(a+bu+b^qu^{p^i}+a^qu^{p^i+1}).
		\end{aligned}
	\end{equation}
	Denote the number of solutions in $U_{q+1}$ of the equation
	\begin{equation}
		\label{sec3 thm1 equ2}
		a+bu+b^qu^{p^i}+a^qu^{p^i+1}=0
	\end{equation}
	by $N(a,b)$. Then $wt(\mathbf{c}_{(a,b)})=q+1-N(a,b)$.
	\begin{enumerate}[label = \bf Case \arabic*:, leftmargin=4.1em]
		\item If $a\ne0,b=0$, \cref{sec3 thm1 equ2} is equivalent to 
		\begin{equation}
			\label{sec3 thm1 equ3}
			1+a^{q-1}u^{p^i+1}=0.
		\end{equation}
		For fixed $a \in \mathbb{F}_{q^2}^\ast$, there exists $r$ such that $a=\alpha^r$. Suppose that $u=\alpha^{(q-1)t}$, then we have
		\begin{equation*}
			\begin{aligned}
				&\alpha^{(p^i+1)(q-1)t}=\alpha^{-(q-1)r},\ &\mathrm{if}\ p\ \mathrm{is\ even};\\
				&\alpha^{(p^i+1)(q-1)t}=\alpha^{\frac{q^2-1}{2}-(q-1)r},\ &\mathrm{if}\ p\ \mathrm{is\ odd}.
			\end{aligned}
		\end{equation*}
		Hence \cref{sec3 thm1 equ3} has solutions in $\mathbb{F}_{q^2}$ if and only if there exist $t\in\mathbb{N}$ such that
		\begin{equation}
			\begin{aligned}
				&\left(p^i+1\right)t \equiv -r \pmod {q+1},\ &\mathrm{if}\ p\ \mathrm{is\ even};\\
				&\left(p^i+1\right)t \equiv \frac{q+1}{2}-r \pmod {q+1},\ &\mathrm{if}\ p\ \mathrm{is\ odd}.
			\end{aligned}
		\end{equation}
		According to \cref{sec2.2 lem1,sec2.2 lem2}, we have
		\begin{equation*}
			\begin{array}{ll}
				N(a,b)=\left\{\begin{array}{ll}
					p^m+1, &\mathrm{if}\ p^m+1\mid -r,\ \mathrm{and}\ \frac{i}{m}\ \mathrm{and}\ \frac{s}{m}\ \mathrm{are\ odd},\\
					1, &\mathrm{otherwise},\end{array}\right. &\mathrm{if}\ p\ \mathrm{is\ even;}\\
				N(a,b)=\left\{\begin{array}{ll}
					p^m+1, &\mathrm{if}\ p^m+1\mid \frac{q+1}{2}-r,\ \mathrm{and}\ \frac{s}{m}\ \mathrm{and}\ \frac{i}{m}\ \mathrm{are\ odd},\\
					2, &\mathrm{if}\ 2\mid \frac{q+1}{2}-r,\ \mathrm{and}\ \frac{i}{m}\ \mathrm{or}\ \frac{s}{m}\ \mathrm{is\ even},\\
					0,&\mathrm{otherwise,}\end{array}\right. &\mathrm{if}\ p\ \mathrm{is\ odd.}\\
			\end{array}
		\end{equation*}
		
		\item If $a=0,b\ne0$, \cref{sec3 thm1 equ2} can be rewritten as
		\begin{equation}
			\label{sec3 thm1 equ5}
			1+b^{q-1}u^{p^i-1}=0.
		\end{equation}
		For fixed $b \in \mathbb{F}_{q^2}^\ast$, there exists $r$ such that $b=\alpha^r$. Suppose that $u=\alpha^{(q-1)t}$. Thus we have
		\begin{equation*}
			\begin{aligned}
				&\alpha^{(p^i-1)(q-1)t}=\alpha^{-(q-1)r},\ &\mathrm{if}\ p\ \mathrm{is\ even};\\
				&\alpha^{(p^i-1)(q-1)t}=\alpha^{\frac{q^2-1}{2}-(q-1)r},\ &\mathrm{if}\ p\ \mathrm{is\ odd}.
			\end{aligned}
		\end{equation*}
		Hence \cref{sec3 thm1 equ5} has solutions in $\mathbb{F}_{q^2}$ if and only if there exist $t\in\mathbb{N}$ such that
		\begin{equation}
			\label{sec3 thm1 equ6}
			\begin{aligned}
				&\left(p^i-1\right)t \equiv -r \pmod {q+1},\ &\mathrm{if}\ p\ \mathrm{is\ even};\\
				&\left(p^i-1\right)t \equiv \frac{q+1}{2}-r \pmod {q+1},\ &\mathrm{if}\ p\ \mathrm{is\ odd}.
			\end{aligned}
		\end{equation}
		By \cref{sec2.2 lem1,sec2.2 lem3}, we get 
		\begin{equation*}
			\begin{array}{ll}
				N(a,b)=\left\{\begin{array}{ll}
					p^m+1, &\mathrm{if}\ p^m+1\mid -r,\ \mathrm{and}\ \frac{i}{m}\ \mathrm{is\ even,}\\
					1, &\mathrm{otherwise},\end{array}\right. &\mathrm{if}\ p\ \mathrm{is\ even;}\\
				N(a,b)=\left\{\begin{array}{ll}
					p^m+1, &\mathrm{if}\ p^m+1\mid \frac{q+1}{2}-r,\ \mathrm{and}\ \frac{i}{m}\ \mathrm{is\ even,}\\
					2, &\mathrm{if}\ 2\mid \frac{q+1}{2}-r,\ \mathrm{and}\ \frac{i}{m}\ \mathrm{is\ odd},\\
					0,&\mathrm{otherwise,}\end{array}\right. &\mathrm{if}\ p\ \mathrm{is\ odd.}\\
			\end{array}
		\end{equation*}	
		
		\item If $a\ne0,b\ne0$, \cref{sec3 thm1 equ2} can be rewritten as
		\begin{equation*}
			u^{p^i+1}+xu^{p^i}+yu+z=0,
		\end{equation*}
		where $x=\frac{b^q}{a^q}$, $y=\frac{b}{a^q}$ and $z=a^{1-q}$. 
		\begin{enumerate}[label=\bf Case 3.\arabic*:]
			\item If $y= x^{p^i}$ and $z=xy$, then
			\begin{equation*}
				u^{p^i+1}+xu^{p^i}+yu+z=\left(u+x\right)^{p^i+1}=0.
			\end{equation*}
			Since $z=xy$, we have $a^{q+1}-b^{q+1}=0$. Then $x \in U_{q+1}$, so is $-x$. Hence, $N(a,b)=1$.
			
			\item If $y\ne x^{p^i}$ and $z=xy$, we have
			\begin{equation*}
				u^{p^i+1}+xu^{p^i}+yu+z=\left(u^{p^i}+y\right)\left(u+x\right)=0.
			\end{equation*}
			Since $-x\in U_{q+1}$, $N(a,b)\in \{1,2\}$.
			
			\item If $y= x^{p^i}$ and $z\ne xy$, we have 
			\begin{equation*}
				u^{p^i+1}+xu^{p^i}+yu+z=\left(u+x\right)^{p^i+1}+\left(z-xy\right)=0.
			\end{equation*}
			Substituting $u=v-x$ yields
			\begin{equation}\label{sec3 thm1 equ7}
				v^{p^i+1}=xy-z.
			\end{equation}
			For fixed $a,b\in\mathbb{F}_{q^2}^\ast$, there exists $r$ such that $xy-z=\alpha^r$. Suppose that $v=\alpha^t$, then 
			\begin{equation*}
				\alpha^{\left(p^i+1\right)t}=\alpha^{r}.
			\end{equation*}
			This equation has solutions in $\mathbb{F}_{q^2}$ if and only if
			\begin{equation*}
				\left(p^i+1\right)t \equiv r \pmod {q^2-1}.
			\end{equation*}
			Let $e=\gcd(p^i+1,q^2-1)$. It follows from \cref{sec2.2 lem1} that \cref{sec3 thm1 equ7} has a solution if and only if $e \mid r$. Now assume that $e \mid r$. Then \cref{sec3 thm1 equ7} has $e$ solution(s). Suppose that $v_0$ is one of the solutions of \cref{sec3 thm1 equ7}. Then $V=\{v_0\cdot\alpha^{\frac{q^2-1}{e}k} : k=0,1,\dots,e-1\}$ is the set of all solutions in $\mathbb{F}_{q^2}$ of \cref{sec3 thm1 equ7}. Since $u$ and $v$ are in one-to-one correspondence, $N(a,b)$ is equal to the number of $k \in \{0,1,\dots,e-1\}$ such that
			\begin{equation}\label{sec3 thm1 equ8}
				\left(v_0\cdot\alpha^{\frac{q^2-1}{e}k}-x\right)^{q+1}=1.
			\end{equation}
			Expansion on the left side of this equation gives
			\begin{equation}\label{sec3 thm1 equ9}
				v_0^{q+1}\cdot\alpha^{\frac{q^2-1}{e}k(q+1)}-xv_0^q\cdot\alpha^{\frac{q^2-1}{e}kq}-x^qv_0\cdot\alpha^{\frac{q^2-1}{e}k}+x^{q+1}-1=0.
			\end{equation}
			If $e \mid q+1$, then $\alpha^{\frac{q^2-1}{e}k(q+1)}=1$ and $\alpha^{\frac{q^2-1}{e}kq}=\alpha^{-\frac{q^2-1}{e}k}$. Hence, we have
			\begin{equation*}
				v_0^{q+1}-xv_0^q\cdot\alpha^{-\frac{q^2-1}{e}k}-x^qv_0\cdot\alpha^{\frac{q^2-1}{e}k}+x^{q+1}-1=0.
			\end{equation*}
			Multiplying both sides of this equation by $\alpha^{\frac{q^2-1}{e}k}$, we get
			\begin{equation}\label{sec3 thm1 equ10}
				-x^qv_0\cdot\left(\alpha^{\frac{q^2-1}{e}k}\right)^2+\left(v_0^{q+1}+x^{q+1}-1\right)\cdot\alpha^{\frac{q^2-1}{e}k}-xv_0^q=0.
			\end{equation}
			If $e \nmid q+1$, then it follows from \cref{sec2.2 lem3} that $e=p^{\gcd(i,2s)}+1$ since $1 \mid q+1$ if $p$ is even and $2 \mid q+1$ if $p$ is odd. Recall that $m=\gcd(i,s)$. If $\gcd(i,2s)=m$, then $p^m+1 \nmid q+1$, which indicates that $\frac{s}{m}$ is even by \cref{sec2.2 lem2}. If $\gcd(i,2s)=2m$, then $\frac{s}{m}$ is even as $\frac{2s}{\gcd(i,2s)}$ is even according to \cref{sec2.2 lem3}. Thus, $q+1 \equiv (p^m)^{\frac{s}{m}}+1 \equiv (-1)^{\frac{s}{m}}+1 \equiv 2 \bmod e$. Hence, we have $\alpha^{\frac{q^2-1}{e}k(q+1)}=\alpha^{2\frac{q^2-1} {e}k}$ and $\alpha^{\frac{q^2-1}{e}kq}=\alpha^{\frac{q^2-1}{e}k}$. Then \cref{sec3 thm1 equ9} leads to
			\begin{equation}\label{sec3 thm1 equ11}
				v_0^{q+1}\cdot\left(\alpha^{\frac{q^2-1}{e}k}\right)^2-\left(xv_0^q+x^qv_0\right)\cdot\alpha^{\frac{q^2-1}{e}k}+x^{q+1}-1=0.
			\end{equation}
			Combining \cref{sec3 thm1 equ10,sec3 thm1 equ11}, we have $N(a,b)\le2$, i.e., $N(a,b) \in \{0,1,2\}$.     
			
			\item If $y\ne x^{p^i}$ and $z\ne xy$, substituting $\left(y-x^{p^i}\right)^{\frac{1}{p^i}}v-x$ for $u$ leads to
			\begin{equation*}
				u^{p^i+1}+xu^{p^i}+yu+z
				=\left(y- x^{p^i}\right)^{\frac{p^i+1}{p^i}}\left(v^{p^i+1}+v+\frac{z-xy}{\left(y- x^{p^i}\right)^{\frac{p^i+1}{p^i}}}\right)=0.
			\end{equation*}
			Let $e=\gcd(i, 2s)$ and $D=\frac{z-xy}{\left(y- x^{p^i}\right)^{\frac{p^i+1}{p^i}}}$. By \cref{sec2.3 lem1}, the number of solutions of 
			\begin{equation}\label{sec3 thm1 equ12}
				v^{p^i+1}+v+D=0
			\end{equation} 
			takes either $0,1,2$ or $p^{e}+1$. Given the values of $N(a,b)$ determined before, we need to consider the only case where \cref{sec3 thm1 equ12} has $p^{e}+1$ solutions. 
			
			Suppose that \cref{sec3 thm1 equ12} has $p^{e}+1$ solutions in $\mathbb{F}_{q^2}$. Let $v_0$ be an arbitrary solution to \cref{sec3 thm1 equ12}. According to \cref{sec2.3 lem2},  $\frac{v_{0}^{2}}{D}$ is a $(p^i-1)$-th power in $\mathbb{F}_{q^2}$. For $\delta\in\mathbb{F}_{q^2}$ with $\delta^{p^i-1}=\frac{v_{0}^{2}}{D}$,
			\begin{equation}\label{sec3 thm1 equ13}
				w^{p^i-1}-w+\frac{1}{\delta v_0}=0
			\end{equation}
			has exactly $p^e$ solutions in $\mathbb{F}_{q^2}$. Let $w_0\in\mathbb{F}_{q^2}$ be a solution to \cref{sec3 thm1 equ13}. Then, the $p^e+1$ solutions in $\mathbb{F}_{q^2}$ to \cref{sec3 thm1 equ12} are $v_0,(w_0+\gamma)^{p^i-1}\cdot v_0$ where $\gamma$ runs over $\mathbb{F}_{p^e}$. Denote the set of these $p^e+1$ solutions by $V$. Then $N(a,b)$ is equal to the number of $v \in V$ such that 
			\begin{equation}
				\label{sec3 thm1 equ14}
				(Av-B)^{q+1}-1=A^{q+1}v^{q+1}-BA^qv^q-B^qAv+B^{q+1}-1=0,
			\end{equation}
			where $A=\left(y-x^{p^i}\right)^{\frac{1}{p^i}}$ and $B=x$. Now consider the case of $N(a,b)>0$. Without loss of generality, assume that $(Av_0-B)^{q+1}-1=0$ where $v_0 \in V$. For fixed $v_0,w_0$ and $\delta$, we then check the number of $\gamma\in\mathbb{F}_{p^e}$ satisfying
			\begin{equation}\label{sec3 thm1 equ15}
				\begin{aligned}
					&A^{q+1}\left[(w_0+\gamma)^{p^i-1}\cdot v_0\right]^{q+1}-BA^q\left[(w_0+\gamma)^{p^i-1}\cdot v_0\right]^q\\
					-&B^qA\left[(w_0+\gamma)^{p^i-1}\cdot v_0\right]+B^{q+1}-1=0.
				\end{aligned}
			\end{equation}
			Note that $w_0+\gamma\ne 0$. Multiplying both sides of \cref{sec3 thm1 equ15} by $(w_0+\gamma)^{q+1}$ yields
			\begin{equation*}
				\begin{aligned}
					&A^{q+1}v_0^{q+1}\left[(w_0+\gamma)^{p^i}\right]^{q+1}-BA^qv_0^{q}(w_0+\gamma)\left[(w_0+\gamma)^{p^i}\right]^q\\
					&-B^qAv_0(w_0+\gamma)^{q}(w_0+\gamma)^{p^i}+\left(B^{q+1}-1\right)(w_0+\gamma)^{q+1}=0
				\end{aligned}
			\end{equation*}
			Since $\gamma\in\mathbb{F}_{p^e}$ and $e | i$, $\gamma^{p^e}=\gamma^{p^i}=\gamma$. Then $(w_0+\gamma)^{p^i}=w_0^{p^i}+\gamma$, which leads to
			\begin{equation*}
				D_1\gamma^{q+1}+D_2\gamma^q+D_3\gamma+D_4=0,
			\end{equation*}
			where
			\begin{equation*}
				\begin{aligned}
					&D_1=(Av_0-B)^{q+1}-1=0, D_2=-\frac{A}{\delta}(Av_0-B)^q,\\
					&D_3=-\frac{A^q}{\delta^q}(Av_0-B),D_4=w_0^{q+1}\left[(Av_0w_0^{p^i-1}-B)^{q+1}-1\right].
				\end{aligned}
			\end{equation*}
			It is clear that $D_3 \ne 0$ and $D_2=D_3^q$. Thus, \cref{sec3 thm1 equ15} is equivalent to
			\begin{equation}
				\label{sec3 thm1 equ16}
				(D_3\gamma)^q+D_3\gamma+D_4=0.
			\end{equation}
			which is a linearized or affine affine polynomial over $\mathbb{F}_{q^2}$. Note that $\{x\in \mathbb{F}_{q^2} \mid x+x^q=0\}=(\alpha-\alpha^q)\mathbb{F}_q$. Let $S=D_3^{-1}(\alpha-\alpha^q)\mathbb{F}_q \cap \mathbb{F}_{p^e}$. 
			
			\begin{enumerate}[label = \bf Case]
				\item \textbf{: $D_4 = 0$.}\\
				Now $S$ is the set of all solutions in $\mathbb{F}_{p^e}$ of \cref{sec3 thm1 equ16}. One can deduce that $S$ is a linear space over $\mathbb{F}_{p^m}$. Recall that $m=\gcd(i,s)$ and $e=\gcd(i,2s)$. Since $\mathbb{F}_{p^e}$ is a $\frac{e}{m}$-dimensional linear space over $\mathbb{F}_{p^m}$, and $S$ is a linear subspace of $\mathbb{F}_{p^e}$, the dimension of $S$ is either 0, 1, or 2. Assume on the contrary that $S$ is a 2-dimensional linear space over $\mathbb{F}_{p^m}$, then we have $e=2m$ and $S=\mathbb{F}_{p^{2m}}$. If $D_3^{-1}(\alpha-\alpha^q) \in \mathbb{F}_{q}^\ast$, then $S=\mathbb{F}_q \cap \mathbb{F}_{p^e}=\mathbb{F}_{p^m}$ as $\gcd(s,e)=m$, a contradiction. If $D_3^{-1}(\alpha-\alpha^q) \notin \mathbb{F}_{q}^\ast$, then $S\subseteq \mathbb{F}_{p^{2m}}\setminus\mathbb{F}_{p^m}\cup\{0\}$, and so $|S| \le p^{2m}-p^m+1$, which contradicts to $|S| = p^{2m}$. Hence, the dimension of $S$ over $\mathbb{F}_{p^m}$ is either 0 or 1, and so $N(a,b)=2$ or $p^m+1$.
				\item \textbf{: $D_4 \ne 0$.}\\
				If \cref{sec3 thm1 equ16} does not have a solution in $\mathbb{F}_{p^e}$, then $N(a,b)=1$. If there exists $\gamma_0 \in \mathbb{F}_{p^e}$ such that \cref{sec3 thm1 equ16} holds, then $\gamma_0+S$ is the set of all solutions in $\mathbb{F}_{p^e}$ of \cref{sec3 thm1 equ16}, which is an affine space over $\mathbb{F}_{p^m}$. It is clear that the dimension of this affine space is either 0 or 1, which implies that $N(a,b)=2$ or $p^m+1$.
			\end{enumerate}
			Combining the above cases with the case where \cref{sec3 thm1 equ12} has either 0, 1 or 2 solutions in $\mathbb{F}_{q^2}$, we get $N(a,b)$ takes either 0,1,2 or $p^m+1$.
		\end{enumerate}
	\end{enumerate}
	In summary, $N(a,b)$ takes either 0,1,2 or $p^m+1$. By the Singleton bound, there must exists a codeword $\in \mathcal{C}^\perp_{(q,q+1,3,h)}$ of weight $q+1-(p^m+1)$. Therefore, the code $\mathcal{C}^\perp_{(q,q+1,3,h)}$ is a four-weight code with parameters $[q+1,4,q-p^m]$.
\end{proof}

According to \cref{sec3 thm1}, we obtain an infinite families of MDS code when $p^m=2$.
\begin{corollary}\label{sec3 cor1}
	Let $q=2^s$ and $h=\frac{q-2^i}{2}$, where $s>1$ and $0<i<s$ are two integers. If $\gcd(i,s)=1$, then the BCH code $\mathcal{C}_{(q,q+1,3,h)}$ is an MDS code with parameters $[q+1,q-3,5]$, and its dual code $\mathcal{C}^\perp_{(q,q+1,3,h)}$ has parameters $[q+1,4,q-2]$.
\end{corollary}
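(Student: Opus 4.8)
The plan is to obtain the corollary directly from \cref{sec3 thm1}. First I would specialize that theorem to the binary case $p=2$ with $\gcd(i,s)=m=1$. Then $p^m=2$, and \cref{sec3 thm1} immediately tells us that $\mathcal{C}^\perp_{(q,q+1,3,h)}$ is a (four-weight) code with parameters $[q+1,4,q-2]$, which is exactly the asserted statement about the dual code. In particular, recall from the proof of \cref{sec3 thm1} that $\dim \mathcal{C}^\perp_{(q,q+1,3,h)}=4$ because $\deg\bigl(g_h(x)g_{h+1}(x)\bigr)=4$ when $0<i<s$, so $\dim \mathcal{C}_{(q,q+1,3,h)}=(q+1)-4=q-3$.

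Next I would note that the parameters $[q+1,4,q-2]$ meet the Singleton bound: a linear code of length $q+1$ and dimension $4$ has minimum distance at most $(q+1)-4+1=q-2$. Hence $\mathcal{C}^\perp_{(q,q+1,3,h)}$ is MDS. Since the dual of an MDS code is again MDS (as recalled in \cref{sec1}), $\mathcal{C}_{(q,q+1,3,h)}$ is MDS too, and being of length $q+1$ and dimension $q-3$ it has minimum distance $(q+1)-(q-3)+1=5$. This yields the parameters $[q+1,q-3,5]$.

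There is essentially no obstacle here: the corollary is an immediate consequence of \cref{sec3 thm1}, the only point worth spelling out being the passage through MDS duality, since \cref{sec3 thm1} is phrased for the dual code and one must observe that in the binary case $d^\perp=q-2$ already saturates the Singleton bound, which simultaneously forces both codes to be MDS and pins down $d=5$. One should also verify the trivial nondegeneracy: $q=2^s\ge 4$ (since $s>1$) so that $q-3\ge 1$, and the hypothesis $\gcd(i,s)=1$ with $0<i<s$ is attainable (e.g.\ $i=1$), but these are automatic from the assumptions.
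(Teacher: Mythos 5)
Your proposal is correct and follows essentially the same route as the paper: specialize \cref{sec3 thm1} to $p^m=2$ to get the dual parameters $[q+1,4,q-2]$, observe these meet the Singleton bound, and conclude by MDS duality that $\mathcal{C}_{(q,q+1,3,h)}$ is MDS with parameters $[q+1,q-3,5]$. The extra details you spell out (the dimension count and the explicit Singleton computation) are exactly what the paper's terse proof leaves implicit.
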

\begin{proof}
	Since $\gcd(i,s)=1$, it follows from \cref{sec3 thm1} that $\mathcal{C}^\perp_{(q,q+1,3,h)}$ is an MDS code with parameters $[q+1,4,q-2]$. Thus $\mathcal{C}_{(q,q+1,3,h)}$ is also an MDS code with parameters $[q+1,q-3,5]$.
\end{proof}

Reed-Solomon codes over $\mathbb{F}_q$ are MDS codes with length $q-1$ \cite[pp. 173]{Huffman2003FundamentalErrorCodes}. Hence, the codes $\mathcal{C}_{(q,q+1,3,h)}$ and $\mathcal{C}^\perp_{(q,q+1,3,h)}$ in \cref{sec3 cor1} are not Reed-Solomon codes.

When $p$ is an odd prime and $h=\frac{p^i-1}{2}$, we present that the code $\mathcal{C}^\perp_{(q,q+1,3,h)}$ is also a four-weight code.

\begin{theorem}\label{sec3 thm4}
	Let $q=p^s$ and $h=\frac{p^i-1}{2}$, where $p$ is odd, and $s>1$ and $0<i<s$ are two integers. Suppose that $\gcd(i,s)=m$. Then the code $\mathcal{C}^\perp_{(q,q+1,3,h)}$ is a four-weight code with parameters $[q+1,4,q-p^m]$.
\end{theorem}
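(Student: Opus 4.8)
The plan is to follow the same route as the proof of \cref{sec3 thm1}. First I would fix a generator $\alpha$ of $\mathbb{F}_{q^2}^\ast$ and set $\beta=\alpha^{q-1}$, a primitive $(q+1)$-th root of unity. Since $p$ is odd, $h=\frac{p^i-1}{2}$ and $h+1=\frac{p^i+1}{2}$ are genuine integers, and because $0<i<s$ the $q$-cyclotomic cosets of $h$ and $h+1$ modulo $q+1$ are $\{h,\,q+1-h\}$ and $\{h+1,\,q-h\}$, which are disjoint and each of size $2$; hence $\deg\!\big(g_h(x)g_{h+1}(x)\big)=4$ and $\dim\mathcal{C}^\perp_{(q,q+1,3,h)}=4$. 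Building the parity-check matrix $H$ from $\beta^h,\beta^{h+1}$ and applying Delsarte's theorem exactly as in the proof of \cref{sec3 thm1}, I obtain $\mathcal{C}^\perp_{(q,q+1,3,h)}=\{\mathbf{c}_{(a,b)}:a,b\in\mathbb{F}_{q^2}\}$ with $\mathbf{c}_{(a,b)}=\big(\mathrm{Tr}_{q^2/q}(a\beta^{hj}+b\beta^{(h+1)j})\big)_{j=0}^{q}$.

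The next step is the trace computation on $u\in U_{q+1}$. Using $u^q=u^{-1}$ and factoring out $u^{-(h+1)}$ gives
\begin{equation*}
	\mathrm{Tr}_{q^2/q}(au^h+bu^{h+1})=u^{-\frac{p^i+1}{2}}\big(bu^{p^i+1}+au^{p^i}+a^qu+b^q\big),
\end{equation*}
so if $N(a,b)$ denotes the number of $u\in U_{q+1}$ with $bu^{p^i+1}+au^{p^i}+a^qu+b^q=0$, then $\mathrm{wt}(\mathbf{c}_{(a,b)})=q+1-N(a,b)$. The key observation is that the substitution $u\mapsto u^{-1}$, which permutes $U_{q+1}$, turns this equation (after multiplying by $u^{p^i+1}$) into $b+au+a^qu^{p^i}+b^qu^{p^i+1}=0$, which is precisely equation \cref{sec3 thm1 equ2} from the proof of \cref{sec3 thm1} with the roles of $a$ and $b$ interchanged. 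Hence $N(a,b)$ equals the corresponding count there, and since that proof already treats odd $p$, $N(a,b)$ takes only the values $0,1,2,p^m+1$. Equivalently, one can rerun the case analysis directly; the only change is that in the two degenerate cases the exponents $p^i-1$ and $p^i+1$ swap roles --- the case $a\ne0,b=0$ now reduces to $u^{p^i-1}=-a^{q-1}$ and invokes \cref{sec2.2 lem3}, while $a=0,b\ne0$ reduces to $u^{p^i+1}=-b^{q-1}$ and invokes \cref{sec2.2 lem2} --- whereas Cases 3.1--3.4 of \cref{sec3 thm1}, including the treatment of $v^{p^i+1}+v+D=0$ via \cref{sec2.3 lem1,sec2.3 lem2}, go through verbatim.

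It then follows that the weights of $\mathcal{C}^\perp_{(q,q+1,3,h)}$ lie in $\{q+1,q,q-1,q-p^m\}$, four distinct values since $p^m\ge 3$. Because the code has dimension $4$, the Singleton bound gives $d\le q-2<q-1$, so there must exist a codeword with $N(a,b)=p^m+1$, i.e.\ of weight $q-p^m$; hence $d=q-p^m$. Verifying (as in \cref{sec3 thm1}) that each of the four weights is actually attained shows the code is genuinely four-weight, and we conclude that it has parameters $[q+1,4,q-p^m]$.

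As for difficulty: invoking the reduction to \cref{sec3 thm1} leaves essentially no new work --- all the substance lives in that proof --- and the only points needing care are that $h$ is an integer (this is where oddness of $p$ is used) and that the generator polynomial still has degree $4$. If instead one insists on a self-contained argument, the main obstacle is exactly the one in \cref{sec3 thm1}, namely Case 3.4: after reducing to $v^{p^i+1}+v+D=0$ one must apply Bluher's theorem (\cref{sec2.3 lem1}) and Kim's parametrization of the solution set (\cref{sec2.3 lem2}), and then carry out the linearized-polynomial bookkeeping that counts how many of the $p^e+1$ roots satisfy the $U_{q+1}$-membership condition $(Av-B)^{q+1}=1$, finally pinning $N(a,b)$ down to $\{0,1,2,p^m+1\}$.
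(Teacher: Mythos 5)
Your proposal is correct and follows essentially the same route as the paper: the same dimension count, the same Delsarte/trace expression, the same factorization reducing the weight of $\mathbf{c}_{(a,b)}$ to counting zeros of $b^q+a^qu+au^{p^i}+bu^{p^i+1}$ in $U_{q+1}$, after which the paper simply says the rest is as in the proof of \cref{sec3 thm1}. Your extra observation that the substitution $u\mapsto u^{-1}$ carries this equation exactly to \cref{sec3 thm1 equ2} with $a$ and $b$ interchanged (so the two degenerate cases swap the roles of \cref{sec2.2 lem2} and \cref{sec2.2 lem3}) is a clean way of making that "similarity" precise, but it is the same argument in substance.
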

\begin{proof}
	Let $\alpha$ be a generator of $\mathbb{F}_{q^2}^{\ast}$. Then $\beta=\alpha^{q-1}$ is a primitive $(q+1)$-th root of unity in $\mathbb{F}_{q^2}$. Let $g_h(x)$ and $g_{h+1}(x)$ denote the minimal polynomials of $\beta^h$ and $\beta^{h+1}$ over $\mathbb{F}_q$, respectively. Note that $g_h(x)=(x-\beta^{\frac{p^i-1}{2}})(x-\beta^{q-\frac{p^i+1}{2}})$ and $g_{h+1}(x)=(x-\beta^{\frac{p^i+1}{2}})(x-\beta^{q-\frac{p^i-1}{2}})$. Since $0<i<s$, $\deg(g_h(x)g_{h+1}(x))=4$, and so the dimension of $\mathcal{C}^\perp_{(q,q+1,3,h)}$ is 4. Define
	\begin{equation*}
		\left.H=\left[\begin{array}{lllll}1&\beta^{h}&(\beta^{h})^2&\cdots&(\beta^{h})^q\\1&\beta^{h+1}&(\beta^{h+1})^2&\cdots&(\beta^{h+1})^q\end{array}\right.\right],
	\end{equation*}
	then $H$ is a parity-check matrix of $\mathcal{C}_{(q,q+1,3,h)}$. Then it follows from Delsarte’s theorem that the trace expression of $\mathcal{C}_{(q,q+1,3,h)}^{\perp}$ is given by
	\begin{equation*}
		\mathcal{C}_{(q,q+1,3,h)}^\perp=\{\mathbf{c}_{(a,b)}:a,b\in\mathbb{F}_{q^2}\},
	\end{equation*}
	where $\mathbf{c}_{(a,b)}=\left(\mathrm{Tr}_{q^{2}/q}\left(a\beta^{hi}+b\beta^{(h+1)i}\right)\right)_{i=0}^{q}.$ Let $u \in U_{q+1}$. Then 
	\begin{equation}
		\label{sec3 thm4 equ1}
		\begin{aligned}
			\operatorname{Tr}_{q^2/q}(au^h+bu^{h+1})&=au^h+bu^{h+1}+a^qu^{-h}+b^qu^{-h-1}\\&=u^{-h-1}(b^q+a^qu+au^{2h+1}+bu^{2h+2})\\&=u^{-\frac{p^i-1}{2}-1}(b^q+a^qu+au^{p^i}+bu^{p^i+1}).
		\end{aligned}
	\end{equation}
	Denote the number of solutions in $U_{q+1}$ of the equation
	\begin{equation}
		\label{sec3 thm4 equ2}
		b^q+a^qu+au^{p^i}+bu^{p^i+1}=0
	\end{equation}
	by $N(a,b)$. Then $wt(\mathbf{c}_{(a,b)})=q+1-N(a,b)$. We omit the remaining part, since it is similar to the proof of \cref{sec3 thm1}. 
\end{proof}

The following theorem will be helpful to determine the minimum distance of the BCH code $\mathcal{C}_{(q,q+1,3,h)}$.
\begin{theorem}\cite[Theorem 2]{Xu2023InfiniteFamiliesMDS}
	\label{sec3 thm5}
	The minimum distance $d$ of the BCH code $\mathcal{C}_{(q,q+1,3,h)}$ is equal to 3 if and only if $\gcd(2h+1,q+1)>1$. Equivalently, $d \ge 4$ if and only if $\gcd(2h+1,q+1)=1$.
\end{theorem}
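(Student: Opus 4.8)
The plan is to reduce the computation of $d$ to a counting problem in the cyclic group $U_{q+1}$. Since $\beta^{h}$ and $\beta^{h+1}$ lie among the roots of the generator polynomial, the BCH bound gives $d\ge 3$, so it is enough to decide whether $\mathcal{C}_{(q,q+1,3,h)}$ has a codeword of weight exactly $3$. Using the parity-check matrix $H$ whose columns are $\binom{x^{h}}{x^{h+1}}$ for $x\in U_{q+1}$ (the set-up already used in the proof of \cref{sec3 thm1}), a weight-$3$ codeword exists precisely when there are three pairwise distinct $x_{1},x_{2},x_{3}\in U_{q+1}$ and $c_{1},c_{2},c_{3}\in\mathbb{F}_{q}^{\ast}$ with
\begin{equation*}
	c_{1}x_{1}^{h}+c_{2}x_{2}^{h}+c_{3}x_{3}^{h}=0\quad\text{and}\quad c_{1}x_{1}^{h+1}+c_{2}x_{2}^{h+1}+c_{3}x_{3}^{h+1}=0.
\end{equation*}

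First I would solve this $2\times 3$ system over $\mathbb{F}_{q^{2}}$. Since $x_{1},x_{2},x_{3}$ are distinct the coefficient matrix has rank $2$, so $(c_{1}x_{1}^{h},\,c_{2}x_{2}^{h},\,c_{3}x_{3}^{h})$ must be a nonzero scalar multiple of $(x_{2}-x_{3},\,x_{3}-x_{1},\,x_{1}-x_{2})$; each entry of the latter is nonzero, so any nonzero solution automatically has weight exactly $3$. Rescaling by a suitable $\lambda\in\mathbb{F}_{q^{2}}^{\ast}$ I may take $c_{1}=1$, after which the only condition left is $c_{2},c_{3}\in\mathbb{F}_{q}^{\ast}$, equivalently $c_{1}/c_{2}\in\mathbb{F}_{q}$ and $c_{1}/c_{3}\in\mathbb{F}_{q}$.

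The core of the argument, and the step I expect to take the most care, is to evaluate these ratios using $x^{q}=x^{-1}$ for $x\in U_{q+1}$. A direct computation yields
\begin{equation*}
	\left(\frac{(x_{2}-x_{3})\,x_{2}^{h}}{(x_{3}-x_{1})\,x_{1}^{h}}\right)^{\!q}=\frac{(x_{2}-x_{3})\,x_{1}^{h+1}}{(x_{3}-x_{1})\,x_{2}^{h+1}},
\end{equation*}
so $c_{1}/c_{2}\in\mathbb{F}_{q}$ if and only if $x_{1}^{2h+1}=x_{2}^{2h+1}$, and symmetrically $c_{1}/c_{3}\in\mathbb{F}_{q}$ if and only if $x_{1}^{2h+1}=x_{3}^{2h+1}$. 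I would then check the converse without circularity: given three distinct elements of $U_{q+1}$ sharing the same $(2h+1)$-th power, the single normalisation $c_{1}=1$ forces $c_{2},c_{3}\in\mathbb{F}_{q}^{\ast}$, precisely because those are the two ratio conditions just derived. Hence $d=3$ if and only if the homomorphism $\varphi\colon x\mapsto x^{2h+1}$ of the cyclic group $U_{q+1}$ of order $q+1$ has a fibre of size at least $3$.

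Finally, every nonempty fibre of $\varphi$ is a coset of $\ker\varphi$, and $|\ker\varphi|=\gcd(2h+1,q+1)$; so such a fibre of size $\ge 3$ exists if and only if $\gcd(2h+1,q+1)\ge 3$. Because $2h+1$ is odd, $\gcd(2h+1,q+1)$ is odd and therefore never equals $2$, so $\gcd(2h+1,q+1)\ge 3$ is the same as $\gcd(2h+1,q+1)>1$. This gives $d=3\iff\gcd(2h+1,q+1)>1$, and equivalently $d\ge 4\iff\gcd(2h+1,q+1)=1$, as claimed.
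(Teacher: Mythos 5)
Your argument is correct. Note that the paper itself does not prove this statement -- it is quoted from \cite[Theorem 2]{Xu2023InfiniteFamiliesMDS} -- so there is no internal proof to compare against; your write-up is a valid self-contained derivation in exactly the framework the paper uses elsewhere (columns of the parity-check matrix indexed by $U_{q+1}$, with the identity $x^q=x^{-1}$). All the key steps check out: the rank of the $2\times3$ coefficient matrix is $2$ because its minors are $x_i^hx_j^h(x_j-x_i)\ne0$; the kernel direction $(x_2-x_3,x_3-x_1,x_1-x_2)$ for the system in the variables $c_ix_i^h$ is right; the conjugation computation $\bigl(\tfrac{(x_2-x_3)x_2^h}{(x_3-x_1)x_1^h}\bigr)^q=\tfrac{(x_2-x_3)x_1^{h+1}}{(x_3-x_1)x_2^{h+1}}$ is correct and yields the criterion $x_1^{2h+1}=x_2^{2h+1}$ (and symmetrically for $x_3$), with the converse direction handled properly by rescaling so that $c_1=1$; and the reduction to $|\ker(x\mapsto x^{2h+1})|=\gcd(2h+1,q+1)\ge3$, together with the observation that $2h+1$ is odd so the gcd can never equal $2$, closes the equivalence with $\gcd(2h+1,q+1)>1$. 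Combined with the BCH bound $d\ge3$, this gives both assertions of the theorem.
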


Combining \cref{sec3 thm1,sec3 thm4,sec3 thm5}, infinite families of AMDS codes for any $p$ are given as follows. 
\begin{theorem}
	\label{sec3 thm6}
	Let $q=p^s$ where $s>1$ is an integer. Let $h=\frac{q-p^i}{2}$ for any $p$ or $h=\frac{p^i-1}{2}$ for $p$ odd, where $0<i<s$ is an integer. Suppose that $\gcd(i,s)=m$. If $p^m\ge3$, then the BCH code $\mathcal{C}_{(q,q+1,3,h)}$ is an AMDS code with parameters $[q+1,q-3,4]$, and its dual code $\mathcal{C}^\perp_{(q,q+1,3,h)}$ has parameters $[q+1,4,q-p^m]$.
\end{theorem}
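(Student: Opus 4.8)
The plan is to assemble \cref{sec3 thm6} from the three cited results together with the Singleton bound, with essentially no new computation. First I would invoke \cref{sec3 thm1} (in the case $h=\frac{q-p^i}{2}$, any $p$) and \cref{sec3 thm4} (in the case $h=\frac{p^i-1}{2}$, $p$ odd) to read off at once that $\mathcal{C}^\perp_{(q,q+1,3,h)}$ has parameters $[q+1,4,q-p^m]$; in particular its dimension is $4$, so $\mathcal{C}_{(q,q+1,3,h)}$ has length $n=q+1$ and dimension $k=(q+1)-4=q-3$. Write $d$ for the minimum distance of $\mathcal{C}_{(q,q+1,3,h)}$. The BCH bound gives $d\ge 3$ and the Singleton bound gives $d\le n-k+1=5$, so $d\in\{3,4,5\}$ and the whole task reduces to pinning $d$ down to $4$.

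Next I would eliminate $d=5$. If $d=5$ then $\mathcal{C}_{(q,q+1,3,h)}$ is MDS, hence so is its dual, which would force the dual's minimum distance to be $n-4+1=q-2$. But the dual's minimum distance is $q-p^m$, and the hypothesis $p^m\ge 3$ gives $q-p^m\le q-3<q-2$, a contradiction; thus $d\le 4$. Then I would eliminate $d=3$ via \cref{sec3 thm5}, according to which $d\ge 4$ is equivalent to $\gcd(2h+1,q+1)=1$. For $h=\frac{q-p^i}{2}$ we have $2h+1=q-p^i+1$, and since $(q+1)-(q-p^i+1)=p^i$, every common divisor of $2h+1$ and $q+1$ divides $\gcd(p^i,q+1)$, which equals $1$ because $q+1=p^s+1\equiv 1\pmod p$. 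For $h=\frac{p^i-1}{2}$ we have $2h+1=p^i$, and again $\gcd(p^i,q+1)=1$ for the same reason. In both cases $\gcd(2h+1,q+1)=1$, so $d\ge 4$, and combined with $d\le 4$ this gives $d=4$.

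Finally, since $d=4=n-k$, the code $\mathcal{C}_{(q,q+1,3,h)}$ is AMDS with parameters $[q+1,q-3,4]$, while $\mathcal{C}^\perp_{(q,q+1,3,h)}$ has parameters $[q+1,4,q-p^m]$ as already established. The argument is pure bookkeeping; there is no genuine obstacle. The only points needing a moment's care are the elimination of $d=5$, where the hypothesis $p^m\ge 3$ (equivalently $q-p^m<q-2$) is precisely what makes the MDS-duality contradiction work, and the two short gcd computations feeding \cref{sec3 thm5}. It is worth remarking, though not needed here, that when $p^m=3$ the dual code is itself AMDS, so the pair is NMDS, whereas for $p^m>3$ only $\mathcal{C}_{(q,q+1,3,h)}$ is AMDS.
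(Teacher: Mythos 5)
Your proposal is correct and follows essentially the same route as the paper: both read off the dual's parameters $[q+1,4,q-p^m]$ from \cref{sec3 thm1,sec3 thm4}, use \cref{sec3 thm5} with $\gcd(2h+1,q+1)=\gcd(p^i,q+1)=1$ to force $d\ge 4$, and use the Singleton bound together with $p^m\ge 3$ (so the dual, hence the code itself, cannot be MDS) to conclude $d=4$. Your explicit verification that $2h+1\equiv -p^i\pmod{q+1}$ in the case $h=\frac{q-p^i}{2}$ is a slightly more careful rendering of a gcd computation the paper states without comment, but it is not a different argument.
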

\begin{proof}
	Let $d$ be the minimum distance of the BCH code $\mathcal{C}_{(q,q+1,3,h)}$. Since $\gcd(2h+1,q+1)=\gcd(p^i,q+1)=1$, $d\ge4$ by \cref{sec3 thm5}. Given that $\gcd(i,s)=m$, it follows from \cref{sec3 thm1,sec3 thm4} that $\mathcal{C}^\perp_{(q,q+1,3,h)}$ is has parameters $[q+1,4,q-p^m]$. Note that $p^m\ge 3$. According to the Singleton bound, $\mathcal{C}^\perp_{(q,q+1,3,h)}$ is not MDS, and neither is $\mathcal{C}_{(q,q+1,3,h)}$. Thus,  $\mathcal{C}_{(q,q+1,3,h)}$ an AMDS code with parameters $[q+1,q-3,4]$.
\end{proof}

\begin{remark}
	The weight enumerator of the code $\mathcal{C}^\perp_{(q,q+1,3,h)}$ in \cref{sec3 thm6} will be established through the MacWilliams equations and the parameters of the $3$-design in \cref{sec4 thm1}.
\end{remark}

When $p=3$, a straightforward result from \cref{sec3 thm6} regarding infinite families of NMDS codes is shown as follows.
\begin{corollary}
	\label{sec3 cor2}
	Let $q=3^s$, and $h=\frac{q-3^i}{2}$ or $\frac{3^i-1}{2}$, where $s>1$ and $0<i<s$ are two integers. If $\gcd(i,s)=1$, then the BCH code $\mathcal{C}_{(q,q+1,3,h)}$ is an NMDS code with parameters $[q+1,q-3,4]$, and its dual code $\mathcal{C}^\perp_{(q,q+1,3,h)}$ has parameters $[q+1,4,q-3]$.
\end{corollary}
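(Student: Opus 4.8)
The plan is to obtain \cref{sec3 cor2} as an immediate specialization of \cref{sec3 thm6}. First I would observe that the hypothesis $\gcd(i,s)=1$ forces $m=1$, so that $p^m=3\ge 3$ and the hypotheses of \cref{sec3 thm6} (with $p=3$) are satisfied for either choice $h=\frac{q-3^i}{2}$ or $h=\frac{3^i-1}{2}$. Applying that theorem then yields at once that $\mathcal{C}_{(q,q+1,3,h)}$ has parameters $[q+1,q-3,4]$ and that its dual $\mathcal{C}^\perp_{(q,q+1,3,h)}$ has parameters $[q+1,4,q-p^m]=[q+1,4,q-3]$.

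It remains only to recognize that this makes $\mathcal{C}_{(q,q+1,3,h)}$ near MDS. By definition an $[n,k,n-k]$ code is AMDS, and a code is NMDS precisely when it and its dual are both AMDS. Here $\mathcal{C}_{(q,q+1,3,h)}$ satisfies $n-k=(q+1)-(q-3)=4=d$, while $\mathcal{C}^\perp_{(q,q+1,3,h)}$ satisfies $n-k=(q+1)-4=q-3=d^\perp$; hence both codes are AMDS, and therefore $\mathcal{C}_{(q,q+1,3,h)}$ is NMDS, as claimed.

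There is essentially no obstacle in this argument: the entire content is already packaged in \cref{sec3 thm1}, \cref{sec3 thm4}, \cref{sec3 thm5}, and \cref{sec3 thm6}, and the passage from ``AMDS with AMDS dual'' to ``NMDS'' is just the definition. If one insisted on a proof independent of \cref{sec3 thm6}, the only substantive work would be to rerun the casework in the proof of \cref{sec3 thm1} (respectively \cref{sec3 thm4}) with $p=3$ and $m=1$ to see that the solution count $N(a,b)$ lies in $\{0,1,2,4\}$, and hence that $\mathcal{C}^\perp_{(q,q+1,3,h)}$ has minimum weight $q-3$, together with \cref{sec3 thm5} to confirm $d=4$; but this would merely duplicate the earlier arguments.
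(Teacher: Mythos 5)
Your proposal is correct and matches the paper's approach: the corollary is exactly the specialization of \cref{sec3 thm6} to $p=3$, $m=\gcd(i,s)=1$ (so $p^m=3\ge 3$), with the observation that the dual's parameters $[q+1,4,q-3]$ satisfy $n-k=d^\perp$, so both the code and its dual are AMDS and hence the code is NMDS. The paper gives no further argument, so nothing is missing from your version.
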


Geng et al. conjecture that $\mathcal{C}^\perp_{(3^s,3^s+1,3,4)}$ with $s$ being odd is an AMDS code with parameters $[q+1,4,q-3]$ in \cite[Conjecture 3.6]{Geng2022ClassAlmostMDS}. The following corollary shows that this conjecture is indeed true. Actually, $\mathcal{C}^\perp_{(3^s,3^s+1,3,4)}$ is a specific case of \cref{sec3 cor2}. 
\begin{corollary}
	\label{sec3 cor3}
	Let $q=3^s$ with $s$ being odd. Then the BCH code $\mathcal{C}_{(q,q+1,3,4)}$ is an NMDS code with parameters $[q+1,q-3,4]$, and its dual code $\mathcal{C}^\perp_{(q,q+1,3,4)}$ has parameters $[q+1,4,q-3]$.
\end{corollary}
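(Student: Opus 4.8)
The plan is to observe that \cref{sec3 cor3} is simply the special case $i=2$ of \cref{sec3 cor2}. Indeed, for $p=3$ and $i=2$ one has $\frac{p^i-1}{2}=\frac{3^2-1}{2}=4$, so that $\mathcal{C}_{(q,q+1,3,4)}=\mathcal{C}_{(q,q+1,3,\frac{3^i-1}{2})}$ with $i=2$; note that $4$ does \emph{not} arise from the other family $\frac{q-3^i}{2}$ with $0<i<s$, since $3^s-3^i=8$ is impossible for $i>0$. Thus the only thing to check is that the hypotheses of \cref{sec3 cor2} hold for $i=2$: we need $s>1$, $0<i<s$, and $\gcd(i,s)=1$. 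Since $s$ is odd with $s>1$ we have $s\ge 3$, whence $0<2<s$, and $\gcd(2,s)=1$ because $s$ is odd. Applying \cref{sec3 cor2} then immediately gives that $\mathcal{C}_{(q,q+1,3,4)}$ is NMDS with parameters $[q+1,q-3,4]$ and that $\mathcal{C}^\perp_{(q,q+1,3,4)}$ has parameters $[q+1,4,q-3]$.

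For completeness one could also assemble the statement directly from the more primitive results. First, $\gcd(2h+1,q+1)=\gcd(9,3^s+1)=1$ since $3\nmid 3^s+1$, so \cref{sec3 thm5} gives $d\ge 4$ for $\mathcal{C}_{(q,q+1,3,4)}$. Next, \cref{sec3 thm4} with $i=2$ (so $m=\gcd(2,s)=1$) shows that $\mathcal{C}^\perp_{(q,q+1,3,4)}$ is a four-weight $[q+1,4,q-3]$ code; as $q-3=(q+1)-4$, this dual code is AMDS and in particular not MDS, so $\mathcal{C}_{(q,q+1,3,4)}$ is not MDS either. Since $\mathcal{C}_{(q,q+1,3,4)}$ has dimension $q-3$, the Singleton bound forces $d\le 5$, and combining $d\ge 4$ with $d\ne 5$ yields $d=4$; hence $\mathcal{C}_{(q,q+1,3,4)}$ is the AMDS code $[q+1,q-3,4]$ whose dual is also AMDS, i.e., it is NMDS. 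This is exactly \cref{sec3 thm6} specialized to $p=3$, $i=2$.

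There is no substantive obstacle here: the essential content is that $4$ happens to equal $\frac{3^2-1}{2}$ and that $\gcd(2,s)=1$ precisely when $s$ is odd, so the corollary is a pure specialization of the machinery already in place. The only minor point to verify is that the degenerate value $s=1$ (where $[q+1,q-3,4]$ would read $[4,0,4]$) is automatically excluded, which it is, since $0<i<s$ with $i=2$ forces $s\ge 3$, and for odd $s$ this is equivalent to $s\ne 1$.
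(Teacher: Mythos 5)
Your proposal is correct and follows essentially the same route as the paper: the paper's proof likewise observes that $4=\frac{3^2-1}{2}$ and that $\gcd(2,s)=1$ for odd $s$, then invokes \cref{sec3 cor2}. Your extra checks (that $s\ge 3$, that $4$ cannot come from the family $h=\frac{q-3^i}{2}$, and the alternative assembly via \cref{sec3 thm4,sec3 thm5,sec3 thm6}) are sound but not needed beyond what the paper does.
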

\begin{proof}
	Note that $4=\frac{3^2-1}{2}$. Since $s$ is odd, $\gcd(2,s)=1$. Then the proof is completed by \cref{sec3 cor2}.
\end{proof}

\section{Infinite families NMDS and AMDS codes holding 3-designs}\label{sec4}
In this section, we prove that the minimum weight codewords of the AMDS codes $\mathcal{C}_{(q,q+1,3,h)}$ and their duals in \cref{sec3 thm6} support $3$-designs for any prime power $q$, respectively. Moreover, we show that the codewords of weight 5 in these NMDS codes in \cref{sec3 cor2} also support a $3$-design.

\begin{theorem}
	\label{sec4 thm1}
	Let $q=p^s$ where $s>1$ is an integer. Let $h=\frac{q-p^i}{2}$ for any $p$ or $h=\frac{p^i-1}{2}$ for $p$ odd, where $0<i<s$ is an integer. Suppose that $\gcd(i,s)=m$. If $p^m\ge3$, the minimum weight codewords in $\mathcal{C}_{(q,q+1,3,h)}$ support a $3$-$(q+1,4,p^m-2)$ simple design, and the minimum weight codewords in $\mathcal{C}^\perp_{(q,q+1,3,h)}$ support a $3$-$(q+1,q-p^m,\lambda)$ simple design with
	\begin{equation*}
		\lambda=\frac{(q-p^m)(q-p^m-1)(q-p^m-2)}{(p^{2m}-1)p^{m}}.
	\end{equation*}
	Moreover, the weight enumerator of $\mathcal{C}^\perp_{(q,q+1,3,h)}$ is
	\begin{equation*}
		\begin{aligned}
			1+\frac{(q-1)^2q(q+1)}{(p^{2m}-1)p^{m}}z^{q-p^m}+\frac{(q^2-1)q((q+1)(p^m-1)-(q-1))}{2(p^m-1)}z^{q-1}+\\\frac{(q^2-1)(q^2-q+p^m)}{p^m}z^q+\frac{p^m(q-1)^2q(q+1)}{2(p^m+1)}z^{q+1}.
		\end{aligned}
	\end{equation*}
\end{theorem}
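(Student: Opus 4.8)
The plan is to use the generalized Assmus--Mattson theorem (\cref{sec1 thm2}) to deduce the $3$-design property, and then to combine the design parameters with the MacWilliams identities to extract the weight enumerator. First I would pin down the key structural inputs: by \cref{sec3 thm6} we have $\mathcal C:=\mathcal C_{(q,q+1,3,h)}$ with parameters $[q+1,q-3,4]$ and $\mathcal C^\perp$ with parameters $[q+1,4,q-p^m]$, and from the proofs of \cref{sec3 thm1,sec3 thm4} the nonzero weights of $\mathcal C^\perp$ are exactly $\{q-p^m,\,q-1,\,q,\,q+1\}$ (corresponding to $N(a,b)\in\{p^m+1,2,1,0\}$). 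So $\mathcal C^\perp$ is a four-weight code. Next I would verify the automorphism-group input that makes Assmus--Mattson applicable with $t=3$: the code $\mathcal C$ is the extended narrow-sense BCH code of length $q+1$ over $\mathbb F_q$ whose coordinates are indexed by $U_{q+1}$, and the group $\mathrm{PGL}_2(q)$ (acting on $U_{q+1}\cong \mathbb P^1(\mathbb F_q)$ via fractional linear maps) is $3$-homogeneous; one checks that it preserves both $\mathcal C$ and $\mathcal C^\perp$ because the defining set is invariant under the relevant monomial/Frobenius action. Hence for every weight $\ell$ the blocks $\mathcal B_\ell(\mathcal C)$ and $\mathcal B_\ell(\mathcal C^\perp)$ form $3$-designs; in particular the minimum-weight codewords of $\mathcal C$ support a $3$-$(q+1,4,\lambda_1)$ design and those of $\mathcal C^\perp$ a $3$-$(q+1,q-p^m,\lambda_2)$ design.

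With the $3$-design property in hand, I would compute the two parameters $\lambda$ by counting. For $\mathcal C$: a weight-$4$ codeword of an MDS-like $[q+1,q-3,4]$ structure is (up to scalar) supported on a $4$-subset, and a standard count—each $4$-subset of coordinates supports exactly the same number of (projective classes of) minimum-weight codewords, and the total number $A_4/(q-1)$ of blocks is determined by the fact that $\mathcal C$ has minimum distance exactly $4$ with a known number of minimum-weight words—gives $A_4=(q-1)\binom{q+1}{4}(p^m-2)\big/\binom{4}{3}\cdot\binom{q+1}{3}^{-1}\cdots$; concretely, $\lambda_1=p^m-2$ follows from \eqref{sec1.2 equ1} once $A_4$ is known, and $A_4$ itself I would get from the first few MacWilliams equations \eqref{sec1.1 equ1} applied to $\mathcal C^\perp$ (whose dual is $\mathcal C$), using that $A_1^\perp=A_2^\perp=\cdots=A_{q-p^m-1}^\perp=0$. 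For $\mathcal C^\perp$: the claimed $\lambda=\frac{(q-p^m)(q-p^m-1)(q-p^m-2)}{(p^{2m}-1)p^m}$ is exactly $\binom{q-p^m}{3}\big/\binom{q+1}{3}\cdot b$ rearranged, so it suffices to compute $b=A_{q-p^m}^\perp/(q-1)$, i.e. the number of minimum-weight codewords of $\mathcal C^\perp$. I would obtain $A_{q-p^m}^\perp$ directly from the trace-parametrization in \cref{sec3 thm1}: $A_{q-p^m}^\perp = (q-1)\cdot\#\{(a,b): N(a,b)=p^m+1\}$, and the Case~1--Case~3 analysis there already isolates when $N(a,b)=p^m+1$ occurs; counting those pairs $(a,b)$ yields $A_{q-p^m}^\perp=\frac{(q-1)^2 q(q+1)}{(p^{2m}-1)p^m}$, matching the coefficient claimed in the weight enumerator.

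Finally, the weight enumerator: knowing $A_0^\perp=1$ and $A_{q-p^m}^\perp$ from the step above, and knowing that the only other nonzero weights are $q-1,q,q+1$, I would set up the remaining three unknowns $A_{q-1}^\perp,A_q^\perp,A_{q+1}^\perp$ and solve the linear system given by: the total $\sum A_i^\perp=q^4$ (size of $\mathcal C^\perp$), the first MacWilliams equation $\sum i A_i^\perp = $ (value forced by $A_0=A_1=\cdots=A_3=0$, $A_4\neq0$ in $\mathcal C$), and the second MacWilliams equation. Three independent linear equations in three unknowns, all with rational coefficients in $q$ and $p^m$—this is pure bookkeeping and will reproduce the stated coefficients. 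The main obstacle I anticipate is not any single hard idea but the verification that $\mathrm{PGL}_2(q)$ (or whatever $3$-homogeneous group one uses) genuinely stabilizes $\mathcal C_{(q,q+1,3,h)}$ for these specific, somewhat unusual designed-distance parameters $h=\frac{q-p^i}{2}$ and $h=\frac{p^i-1}{2}$; alternatively one can bypass the automorphism argument entirely and invoke \cref{sec1 thm2} with $r=1$ (taking $R$ to be the single weight $q-1$ or $q$), which requires checking the hypothesis $t<\min\{d,d^\perp\}=\min\{4,q-p^m\}$—true for $q\ge p^m+4$—and that the remaining weights' blocks are $3$-designs, which loops back to the same group-theoretic input or to an inductive use of the theorem on $\mathcal C^\perp$ (four-weight, so $r$ can be taken small). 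Deciding cleanly which of these two routes to present, and making the chosen one airtight, is where the real work lies; everything after that is MacWilliams algebra.
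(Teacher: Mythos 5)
Your proposal has two genuine gaps, both at the places you yourself flag as ``the real work.'' First, the $3$-design property for the weight-$4$ codewords of $\mathcal{C}_{(q,q+1,3,h)}$: you hang it on $\mathrm{PGL}_2(q)$ acting $3$-homogeneously on $U_{q+1}$ and stabilizing the code, but you never verify this stabilization, and it is not at all clear it holds for these defining sets; your fallback via \cref{sec1 thm2} is, as you admit, circular. You also miss that no group action is needed for the mere design property: the classical Assmus--Mattson theorem (\cref{sec1 thm1}) with $t=3$ applies, since the only nonzero dual weight in $[1,q-2]$ is $q-p^m$, so $s=1\le d-t=1$; this is exactly how the paper handles the dual side (using $A^\perp_{q-2}=0$). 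Second, and more seriously, your route to the numerical data is unsupported: you claim $A^\perp_{q-p^m}$ can be read off by ``counting those pairs $(a,b)$'' with $N(a,b)=p^m+1$ from Cases 1--3 of the proof of \cref{sec3 thm1}, but that analysis only determines the possible values of $N(a,b)$, not how often each occurs (Case 3 in particular gives no count), so this step is essentially the hard part of the theorem asserted without an argument. Relatedly, Assmus--Mattson alone never yields $\lambda_1=p^m-2$; you must first obtain $A_4$ or $A^\perp_{q-p^m}$ by some independent computation.

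The paper closes exactly these holes by a direct argument you do not have: for fixed distinct $x,y,z\in U_{q+1}$ it studies $f(w)=|M(x,y,z,w)|$ built from the parity-check matrix, observes that $f$ is of the shape treated in Case 3 of the proof of \cref{sec3 thm1} (so it has $0,1,2$ or $p^m+1$ roots in $U_{q+1}$), and since $x,y,z$ are already three roots and $p^m\ge 3$, exactly $p^m-2$ further completions $w$ exist. This simultaneously proves the weight-$4$ words support a simple $3$-$(q+1,4,p^m-2)$ design and gives $A_4$ via \cref{sec1.2 equ1}; then MacWilliams at $j=q-3$ yields $A^\perp_{q-p^m}$, the remaining identities at $j=q-1,q,q+1$ give the stated enumerator, and Assmus--Mattson finishes the dual design with $\lambda$ from \cref{sec1.2 equ2,sec1.2 equ1}. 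A repairable version of your plan does exist (use the known four-weight structure of $\mathcal{C}^\perp$ plus $A_1=A_2=A_3=0$ to solve the four MacWilliams equations $j=0,1,2,3$ for all four unknown $A_i^\perp$, then apply classical Assmus--Mattson and back out both $\lambda$'s), but as written your proposal neither establishes the design property nor the coefficient $A^\perp_{q-p^m}$ it builds on.
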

\begin{proof}
	According to \cref{sec3 thm6}, the parameters of $\mathcal{C}_{(q,q+1,3,h)}$ and $\mathcal{C}^\perp_{(q,q+1,3,h)}$ are $[q+1,q-3,4]$ and $[q+1,4,q-p^m]$, respectively. 
	
	Let $h=\frac{q-p^i}{2}$. Recall from the proof of \cref{sec3 thm1} that $\beta$ is a primitive $(q+1)$-th root of unity in $\mathbb{F}_{q^2}$, and 
	\begin{equation*}
		\left.H_1=\left[\begin{array}{lllll}1&\beta^{h}&(\beta^{h})^2&\cdots&(\beta^{h})^q\\1&\beta^{h+1}&(\beta^{h+1})^2&\cdots&(\beta^{h+1})^q\\1&\beta^{q-h}&(\beta^{q-h})^2&\cdots&(\beta^{q-h})^q\\1&\beta^{q-h+1}&(\beta^{q-h+1})^2&\cdots&(\beta^{q-h+1})^q\end{array}\right.\right]
	\end{equation*}
	is a parity-check matrix of $\mathcal{C}_{(q,q+1,3,h)}$. Let $x,y,z,w$ be four pairwise distinct elements in $U_{q+1}$. Define
	\begin{equation*}
		M_1=\left[\begin{array}{llll}x^{h}&y^{h}&z^{h}&w^{h}\\x^{h+1}&y^{h+1}&z^{h+1}&w^{h+1}\\x^{q-h}&y^{q-h}&z^{q-h}&w^{q-h}\\x^{q-h+1}&y^{q-h+1}&z^{q-h+1}&w^{q-h+1}\end{array}\right].
	\end{equation*}
	Then we have
	\begin{equation*}
		|M_1|=x^{\frac{q-p^i}{2}}y^{\frac{q-p^i}{2}}z^{\frac{q-p^i}{2}}w^{\frac{q-p^i}{2}}\left|\begin{array}{llll}1&1&1&1\\x&y&z&w\\x^{p^i}&y^{p^i}&z^{p^i}&w^{p^i}\\x^{p^i+1}&y^{p^i+1}&z^{p^i+1}&w^{p^i+1}\end{array}\right|.
	\end{equation*}
	Define
	\begin{equation}\label{sec4 thm1 equ1}
		M(x,y,z,w)=\left[\begin{array}{llll}1&1&1&1\\x&y&z&w\\x^{p^i}&y^{p^i}&z^{p^i}&w^{p^i}\\x^{p^i+1}&y^{p^i+1}&z^{p^i+1}&w^{p^i+1}\end{array}\right].
	\end{equation}
	It is clear that $|M_1|=0$ if and only if $|M(x,y,z,w)|=0$.	We next show that the total number of different choices of $w$ such that $|M_1|=0$ is equal to $p^m-2$ for any three different elements $x,y,z$. For fixed $x,y,z \in U_{q+1}$, let $$f(w)=|M(x,y,z,w)|=D_{p^i+1}w^{p^i+1}+D_{p^i}w^{p^i}+D_{1}w+D_{0},$$
	where $D_j$ is the cofactor of $w^j$ for $j \in \{0,1,p^i,p^i+1\}$. Since the minimum distance of $\mathcal{C}_{(q,q+1,3,h)}$ is 4, $D_j$ is not equal 0 for any $j$. It then follows from Case 3 in the proof of \cref{sec3 thm1} that $f(w)$ has either 0,1,2 or $p^{m} + 1$ zeros in $U_{q+1}$. Note that $f(x)=f(y)=f(z)=0$. Thus, $x,y$ and $z$ are three distinct zeros in $U_{q+1}$ of $f(w)$. Since $p^m\ge 3$, $f(w)$ has $p^{m} + 1$ zeros in $U_{q+1}$, which implies that there exist $(p^m+1-3)$ $w$’$s \in U_{q+1}\setminus\{x,y,z\}$ such that $f(w)=0$.
	
	Note that the minimum distance of $\mathcal{C}_{(q,q+1,3,h)}$ is 4. Let $\mathbf{c}=(c_0,c_1,\dots,c_q)$ be a codeword of weight 4 in $\mathcal{C}_{(q,q+1,3,h)}$ with nonzero coordinates $\{i_{1},i_{2},i_{3},i_{4}\}$. Suppose that
	\begin{equation*}
		x=\beta^{i_1}, y=\beta^{i_2}, z=\beta^{i_3}, w=\beta^{i_4}.
	\end{equation*}
	Then
	\begin{equation*}
		\left[\begin{array}{llll}x^{h}&y^{h}&z^{h}&w^{h}\\x^{h+1}&y^{h+1}&z^{h+1}&w^{h+1}\\x^{q-h}&y^{q-h}&z^{q-h}&w^{q-h}\\x^{q-h+1}&y^{q-h+1}&z^{q-h+1}&w^{q-h+1}\end{array}\right]\begin{bmatrix}c_{i_1}\\c_{i_2}\\c_{i_3}\\c_{i_4}\end{bmatrix}=\mathbf{0}.
	\end{equation*}
	Since the rank of $M_1$ is 3 for these $x,y,z,w$, the set $\{a\mathbf{c}:a\in \mathbb{F}_{q}^\ast\}$ consists of all such codewords of weight 4 in $\mathcal{C}_{(q,q+1,3,h)}$ with nonzero coordinates $\{i_{1},i_{2},i_{3},i_{4}\}$. Conversely, every codeword of weight 4 in $\mathcal{C}_{(q,q+1,3,h)}$ with nonzero coordinates $\{i_{1},i_{2},i_{3},i_{4}\}$ must correspond to the set $\{x,y,z,w\}$. Since the number of different choices of $w$ such that $M_1=0$ for fixed $x,y,z$ is $p^m-2$, we then deduce that the codewords of weight 4 in $\mathcal{C}_{(q,q+1,3,h)}$	support a $3$-$(q+1,4,p^m-2)$ simple design. By \cref{sec1.2 equ1}, the number of codewords of weight 4 in $\mathcal{C}_{(q,q+1,3,h)}$ is given by
	\begin{equation*}
		A_4=(q-1)(p^m-2)\frac{\binom{q+1}{3}}{\binom{4}{3}}=\frac{(p^m-2)(q-1)^2q(q+1)}{24}.
	\end{equation*}
	
	Let $(A_i)_{i=0}^n$ and $(A^\perp_i)_{i=0}^n$ denote the weight distributions of $\mathcal{C}^\perp_{(q,q+1,3,h)}$ and $\mathcal{C}^\perp_{(q,q+1,3,h)}$ separately. By setting $j=q-3$ in \cref{sec1.1 equ1}, we get
	\begin{equation*}
		A^\perp_{q-p^m}=A_4\bigg/\binom{p^m+1}{4}=\frac{(q-1)^2q(q+1)}{(p^{2m}-1)p^{m}}.
	\end{equation*}
	It follows from the proof of \cref{sec3 thm1} that $\{i : A^\perp_i \ne 0\}=\{0,q-p^m,q-1,q,q+1\}$. Since $p^m \ge 3$, we have $A^\perp_{q-2}=0$. Setting $j$ in \cref{sec1.1 equ1} to $q-1,q$ and $q+1$ in order, one obtains the desired weight enumerator of $\mathcal{C}^\perp_{(q,q+1,3,h)}$. Since $A^\perp_{q-2}=0$, it follows from the Assmus-Mattson Theorem that the minimum weight codewords in $\mathcal{C}^\perp_{(q,q+1,3,h)}$ support a $3$-$(q+1,q-p^m,\lambda)$ simple design with
	\begin{equation*}
		\lambda=\frac{(q-p^m)(q-p^m-1)(q-p^m-2)}{(p^{2m}-1)p^{m}},
	\end{equation*}
	where $\lambda$ is obtained from \cref{sec1.2 equ2,sec1.2 equ1}. The proof of the case of $h=\frac{q-p^i}{2}$ is completed.
	
	The proof of the case of $h=\frac{p^i-1}{2}$ can be established by considering the parity-check matrix
	\begin{equation}\label{sec4 thm1 equ2}
		\left.H_2=\left[\begin{array}{lllll}1&\beta^{-(h+1)}&(\beta^{-(h+1)})^2&\cdots&(\beta^{-(h+1)})^q\\1&\beta^{-h}&(\beta^{-h})^2&\cdots&(\beta^{-h})^q\\1&\beta^{h}&(\beta^{h})^2&\cdots&(\beta^{h})^q\\1&\beta^{h+1}&(\beta^{h+1})^2&\cdots&(\beta^{h+1})^q\end{array}\right.\right].
	\end{equation}
\end{proof}

\begin{remark}
	The AMDS codes and $3$-designs with the same parameters as in \cref{sec4 thm1} can be found in \cite[Theorem 5]{Xu2022InfiniteFamiliedDesigns}. The $3$-designs presented in \cref{sec4 thm1} were obtained from BCH codes, whereas the $3$-designs described in \cite{Xu2022InfiniteFamiliedDesigns} were based on linear codes generated by a special matrix over $\mathbb{F}_q$. With the help of Magma programs, the $3$-designs in \cref{sec4 thm1} are isomorphic to the $3$-designs in \cite[Theorem 5]{Xu2022InfiniteFamiliedDesigns} when $(p,s) = (2,4),(3,2),(3,3),(5,2)$. This prompts us to conjecture that the $3$-designs in \cref{sec4 thm1} are isomorphic to the $3$-designs in \cite[Theorem 5]{Xu2022InfiniteFamiliedDesigns} for any $(p,m)$.
\end{remark}

One can obtain the following theorem from the proof of \cref{sec4 thm1}.
\begin{theorem}
	Let $q=p^s$ where $s>1$ is an integer. Let $h=\frac{q-p^i}{2}$ for any $p$ or $h=\frac{p^i-1}{2}$ for $p$ odd, where $0<i<s$ is an integer. Denote by $\binom{U_{q+1}}{4}$ the set of all 4-subsets of $U_{q+1}$. Define 
	\begin{equation*}
		\mathcal{B}=\left\{\{x,y,z,w\}\in\binom{U_{q+1}}{4}:M(x,y,z,w)=0\right\},
	\end{equation*}
	where $M(x,y,z,w)$ is defined as \cref{sec4 thm1 equ1}. Suppose that $gcd(i,s)=m$. Then $(U_{q+1},\mathcal{B})$ is a $3$-$(q+1,4,p^m-2)$ simple design, and is isomorphic to the $3$-design supported by the minimum weight codewords of $\mathcal{C}_{(q,q+1,3,h)}$. Further, when $p$ is odd, the $3$-$(q+1,4,p^m-2)$ designs supported by the minimum weight codewords of $\mathcal{C}_{(q,q+1,3,\frac{q-p^i}{2})}$ and $\mathcal{C}_{(q,q+1,3,\frac{p^i-1}{2})}$ are isomorphic for the same $i$.
\end{theorem}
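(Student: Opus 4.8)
The plan is to deduce this statement from \cref{sec4 thm1}: the theorem is essentially a coordinate-free restatement of the weight-$4$ design constructed there, so the work is to identify that abstract design with $(U_{q+1},\mathcal{B})$ and then specialize to the two admissible values of $h$. First I would set up the dictionary between minimum-weight codewords of $\mathcal{C}_{(q,q+1,3,h)}$ and singular $4\times 4$ submatrices of a parity-check matrix. Recall from the proof of \cref{sec4 thm1} that this code has minimum distance $4$ and admits the parity-check matrix $H_1$ (when $h=\frac{q-p^i}{2}$) or $H_2$ (when $h=\frac{p^i-1}{2}$, see \cref{sec4 thm1 equ2}), whose $q+1$ columns are indexed by $1,\beta,\dots,\beta^q$. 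A $4$-subset $\{i_1,i_2,i_3,i_4\}$ of the coordinate set $\{0,1,\dots,q\}$ is then the support of a minimum-weight codeword exactly when the four corresponding columns are linearly dependent; since $d=4$ forces any three of these columns to be independent, the dependence is one-dimensional, so precisely $q-1$ codewords (the nonzero scalar multiples of one) have that support. Writing $x=\beta^{i_1},y=\beta^{i_2},z=\beta^{i_3},w=\beta^{i_4}$, the determinant computation already performed in the proof of \cref{sec4 thm1} shows that the determinant of the selected $4\times 4$ submatrix equals a nonzero power product of $x,y,z,w$ (nonzero because $x,y,z,w\in U_{q+1}$) times $|M(x,y,z,w)|$; hence the dependence is equivalent to $|M(x,y,z,w)|=0$, i.e.\ to $\{x,y,z,w\}\in\mathcal{B}$.

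Next I would observe that $\phi\colon j\mapsto\beta^j$ is a bijection of $\{0,1,\dots,q\}$ onto $U_{q+1}$, and that by the dictionary above it carries the block set of the design supported by the minimum-weight codewords of $\mathcal{C}_{(q,q+1,3,h)}$ onto $\mathcal{B}$: dividing block multiplicities by $q-1$, as in the definition of $\mathcal{B}_k(\mathcal{C})$, collapses the $q-1$ codewords sharing a common support to a single block, so no repetition occurs and the correspondence is a genuine bijection of block sets. Therefore $\phi$ is an isomorphism of the two designs, and combining this with \cref{sec4 thm1} — which states that for $p^m\ge 3$ the minimum-weight codewords of $\mathcal{C}_{(q,q+1,3,h)}$ support a $3$-$(q+1,4,p^m-2)$ simple design — establishes the first two assertions. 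For completeness one disposes of the degenerate case $p^m=2$ (which forces $p=2$, $m=1$): there $\mathcal{C}_{(q,q+1,3,h)}$ is MDS by \cref{sec3 cor1}, so every $4$ columns of $H_1$ are independent, $\mathcal{B}=\emptyset$, and the statement holds vacuously as a $3$-$(q+1,4,0)$ design.

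For the final assertion, the key point is that the matrix $M(x,y,z,w)$ of \cref{sec4 thm1 equ1} is literally the same for $h=\frac{q-p^i}{2}$ and for $h=\frac{p^i-1}{2}$ with the same $i$: in the first case one factors $u^{h}$ out of each column of the relevant submatrix of $H_1$ and is left with the rows $1,u,u^{q-2h},u^{q-2h+1}$ where $q-2h=p^i$, while in the second case one factors $u^{-(h+1)}$ out of each column of the submatrix of $H_2$ and is left with $1,u,u^{2h+1},u^{2h+2}$ where $2h+1=p^i$; both coincide with \cref{sec4 thm1 equ1}. Hence both minimum-weight designs are isomorphic, through $\phi$, to the same incidence structure $(U_{q+1},\mathcal{B})$, and therefore to each other.

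I do not expect any substantive obstacle here; the whole argument is bookkeeping resting on \cref{sec4 thm1}. The points requiring a little care are: checking that the scalar prefactor in the determinant identity is nonzero (immediate, since $x,y,z,w\in U_{q+1}$); noting that the dependence among the four chosen columns is one-dimensional (immediate from $d=4$); the $(q-1)$-to-$1$ passage from codewords to supports; and the harmless edge case $p^m=2$.
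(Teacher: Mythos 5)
Your proposal is correct and takes essentially the same route as the paper, which gives no separate argument beyond remarking that the theorem follows from the proof of \cref{sec4 thm1}: your determinant factorizations identifying the relevant $4\times 4$ submatrices of $H_1$ and $H_2$ with $M(x,y,z,w)$, together with the bijection $j\mapsto\beta^j$ and the $(q-1)$-to-one passage from codewords to supports, are exactly the intended bookkeeping. The only cosmetic caveat is your $p^m=2$ aside: the theorem implicitly inherits the hypothesis $p^m\ge 3$ from \cref{sec4 thm1} (for $p^m=2$ the code is MDS, the minimum weight is $5$, and the isomorphism claim is not literally vacuous), but this does not affect the argument in the intended range.
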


Combining \cref{sec3 cor2,sec4 thm1}, we obtain an infinite families NMDS codes holding $3$-designs when $p=3$. 
\begin{theorem}
	\label{sec4 thm2}
	Let $q=3^s$ where $s>1$ is an integer. Let $h=\frac{q-3^i}{2}$ or $h=\frac{3^i-1}{2}$, where $0<i<s$ is an integer. If $\gcd(i,s)=m=1$, then the BCH code $\mathcal{C}_{(q,q+1,3,h)}$ is a $[q+1,q-3,4]$ NMDS code with weight enumerator
	\begin{equation*}
		1+\frac{(q-1)^2q(q+1)}{24}z^{q-3}+\frac{(q^2-1)q(q+3)}{4}z^{q-1}+\\\frac{(q^2-1)(q^2-q+3)}3z^q+\frac{3(q-1)^2q(q+1)}8z^{q+1}.
	\end{equation*}
	Moreover, the minimum weight codewords in $\mathcal{C}^\perp_{(q,q+1,3,h)}$ support a $3$-$(q+1,q-3,\lambda)$ simple design with
	\begin{equation*}
		\lambda=\frac{(q-3)(q-4)(q-5)}{24}, 
	\end{equation*}
	and the minimum weight codewords in $\mathcal{C}_{(q,q+1,3,h)}$ support a $3$-$(q+1,4,1)$ simple design, i.e., a Steiner quadruple system $\mathrm{S}(3,4,q+1)$. Furthermore, the codewords of weight 5 in $\mathcal{C}_{(q,q+1,3,h)}$ support a $3$-$(q+1,5,(q-3)(q-7)/2)$ simple design.
\end{theorem}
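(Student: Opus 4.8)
The plan is to obtain every assertion except the last sentence by specialising the general results already established, and to get the weight-$5$ design from the generalised Assmus--Mattson theorem \cref{sec1 thm2}. Since $\gcd(i,s)=1$ we have $p^m=3$, so \cref{sec3 cor2} gives at once that $\mathcal{C}_{(q,q+1,3,h)}$ is a $[q+1,q-3,4]$ NMDS code and $\mathcal{C}^\perp_{(q,q+1,3,h)}$ is a $[q+1,4,q-3]$ code, and the displayed weight enumerator is precisely the $p^m=3$ instance of the one in \cref{sec4 thm1}. Moreover \cref{sec4 thm1} with $p^m=3$ says that the minimum weight codewords of $\mathcal{C}^\perp_{(q,q+1,3,h)}$ support a $3$-$(q+1,q-3,(q-3)(q-4)(q-5)/24)$ simple design, and that the minimum weight codewords of $\mathcal{C}_{(q,q+1,3,h)}$ support a $3$-$(q+1,4,p^m-2)=3$-$(q+1,4,1)$ simple design, which is by definition a Steiner system $\mathrm{S}(3,4,q+1)$. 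So only the weight-$5$ statement requires new work.

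For that I would apply \cref{sec1 thm2} to $\mathcal{C}=\mathcal{C}_{(q,q+1,3,h)}$, which has $d=4$ and $d^\perp=q-3$. Put $t=3$; this is legitimate since $q=3^s\ge 9$ forces $t<\min\{d,d^\perp\}=4$. Choose $R=\{5,6,\dots,q-2\}$, an $r$-subset of $\{d,d+1,\dots,n-t\}=\{4,5,\dots,q-2\}$ with $r=q-6\ge 1$. Then the hypotheses of \cref{sec1 thm2} hold: for $\ell\in\{4,5,\dots,q-2\}\setminus R=\{4\}$ the pair $(\mathcal{P}(\mathcal{C}),\mathcal{B}_4(\mathcal{C}))$ is the Steiner system $\mathrm{S}(3,4,q+1)$ just produced, hence a $3$-design; and for $0\le\ell^\perp\le r+t-1=q-4$ the collection $\mathcal{B}_{\ell^\perp}(\mathcal{C}^\perp)$ is the empty (trivial) $3$-design for $1\le\ell^\perp\le q-4$, because $\mathcal{C}^\perp$ has minimum distance $q-3$, and is the trivial design for $\ell^\perp=0$. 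Hence \cref{sec1 thm2} yields that $(\mathcal{P}(\mathcal{C}),\mathcal{B}_i(\mathcal{C}))$ is a $3$-design for all $3\le i\le q+1$; in particular the weight-$5$ codewords of $\mathcal{C}$ support a $3$-design.

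To pin down the index $\lambda$, I would use that, by \cref{sec4 thm1} with $p^m=3$, $\mathcal{C}$ has $A_4=(q-1)^2q(q+1)/24$, together with the MacWilliams identity \cref{sec1.1 equ1} taken at $j=q-4$: since $A_1=A_2=A_3=0$ and $A^\perp_i=0$ for $1\le i\le q-4$, it collapses to $\binom{q+1}{5}+(q-3)A_4+A_5=q\binom{q+1}{5}$, whence
\begin{equation*}
	A_5=(q-1)\binom{q+1}{5}-(q-3)A_4=\frac{(q-3)(q-7)(q-1)^2q(q+1)}{120},
\end{equation*}
which is positive since $q\ge 9$, so weight-$5$ codewords really exist. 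By \cref{sec1.2 equ2} the design has $b=A_5/(q-1)$ blocks, and then \cref{sec1.2 equ1} gives $\lambda=b\binom{5}{3}/\binom{q+1}{3}=(q-3)(q-7)/2$, as claimed.

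Finally, simplicity of the weight-$5$ design is where the only genuine idea is needed. Identifying the coordinates of $\mathcal{C}$ with $U_{q+1}$, a parity-check matrix of $\mathcal{C}$ is, up to column scalings (which do not change codeword supports), the matrix with columns $(1,u,u^{p^i},u^{p^i+1})$ for $u\in U_{q+1}$, exactly as in the proof of \cref{sec4 thm1}; so for a $5$-subset $T\subseteq U_{q+1}$ the weight-$5$ codewords with support $T$ are the full-support vectors of the kernel of the $4\times 5$ submatrix $\hat H_T$ formed by the columns indexed by $T$. Since $d(\mathcal{C})=4$, any three of these columns are linearly independent, so $\operatorname{rank}\hat H_T\in\{3,4\}$; but $\operatorname{rank}\hat H_T=3$ would force $|M(x,y,z,w)|=0$ for every $4$-subset $\{x,y,z,w\}$ of $T$ (four vectors in a $3$-space are dependent), i.e. every $4$-subset of $T$ would be a block of $\mathrm{S}(3,4,q+1)$, which is impossible because two such blocks lie inside the $5$-set $T$ and hence share a $3$-subset, contradicting $\lambda=1$. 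Therefore $\operatorname{rank}\hat H_T=4$, $\ker\hat H_T$ is $1$-dimensional, and $T$ supports at most $q-1$ weight-$5$ codewords, so it occurs with multiplicity at most $1$ in $\mathcal{B}_5(\mathcal{C})$; thus the $3$-$(q+1,5,(q-3)(q-7)/2)$ design is simple. The main obstacle is precisely this rank/Steiner observation; the only other point needing care is the bookkeeping in \cref{sec1 thm2} (the choice of $t$ and $R$, and checking that every involved $\mathcal{C}^\perp$-design of weight below $q-3$ is the trivial design).
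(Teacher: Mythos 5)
Your proposal is correct and follows essentially the same route as the paper: everything except the weight-$5$ claim is the $p^m=3$ specialization of \cref{sec4 thm1} and \cref{sec3 cor2}, the $3$-design property of the weight-$5$ codewords comes from \cref{sec1 thm2} with the same choice of parameters (the paper's $(r,t,\ell,\ell^\perp)=(q-6,3,4,q-4)$), and $A_5$ and $\lambda$ are obtained from the MacWilliams identity at $j=q-4$ exactly as in the paper. The only cosmetic difference is the simplicity step: the paper shows the relevant $4\times5$ submatrix has rank $4$ by noting that $f(w)=|M(x,y,z,w)|$ has a unique zero in $U_{q+1}\setminus\{x,y,z\}$ (since $3^m+1=4$), whereas you reach the same rank-$4$ conclusion by contradiction from $\lambda=1$ of the Steiner quadruple system --- equivalent facts, so the two arguments coincide in substance.
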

\begin{proof}
	According to \cref{sec4 thm1}, we only need to prove the codewords of weight 5 in $\mathcal{C}_{(q,q+1,3,h)}$ support a $3$-$(q+1,5,(q-3)(q-7)/2)$ simple design. Since the minimum distance of $\mathcal{C}^\perp_{(q,q+1,3,h)}$ is $q-3$, $(\mathcal{P}(\mathcal{C}^\perp_{(q,q+1,3,h)}),\mathcal{B}_{q-4}(\mathcal{C}^\perp_{(q,q+1,3,h)}))$ is the trivial 3-design $(\mathcal{P}(\mathcal{C}^\perp_{(q,q+1,3,h)}),\emptyset)$. Note that $(\mathcal{P}(\mathcal{C}_{(q,q+1,3,h)}),\mathcal{B}_4(\mathcal{C}_{(q,q+1,3,h)}))$  is a 3-design. By setting $(r,t,\ell,\ell^\perp)=(q-6,3,4,q-4)$ in \cref{sec1 thm2}, one can deduced that the codewords of weight 5 in $\mathcal{C}_{(q,q+1,3,h)}$ support a $3$-design denoted by $(\mathcal{P}(\mathcal{C}_{(q,q+1,3,h)}),\mathcal{B}_5(\mathcal{C}_{(q,q+1,3,h)}))$. 
	
	Next we will show that $(\mathcal{P}(\mathcal{C}_{(q,q+1,3,h)}),\mathcal{B}_5(\mathcal{C}_{(q,q+1,3,h)}))$ is simple. Since two different codewords of weight 5 in $\mathcal{C}_{(q,q+1,3,h)}$ may have the same support, we then determine the number of non-repeating blocks in $\mathcal{B}_5(\mathcal{C}_{(q,q+1,3,h)})$.
	
	Let $h=\frac{q-3^i}{2}$ and $x,y,z,w$ be four pairwise distinct elements in $U_{q+1}$. Proceeding as in the proof of \cref{sec4 thm1 equ1} and using the same notation as there, we arrive at the conclusion that there exists a unique element $r(x,y,z)\in U_{q+1}\setminus\{x,y,z\}$ such that $f(r(x,y,z))=0$ since $3^m+1=4$. Hence, the rank of $M_1$ is 4 if and only if $w\notin \{x,y,z\}$. For any pairwise distinct elements $x,y,z,w,v\in U_{q+1}$, define
	\begin{equation*}
		M_2=\left[\begin{array}{lllll}x^{h}&y^{h}&z^{h}&w^{h}&v^{h}\\x^{h+1}&y^{h+1}&z^{h+1}&w^{h+1}&v^{h+1}\\x^{q-h}&y^{q-h}&z^{q-h}&w^{q-h}&v^{q-h}\\x^{q-h+1}&y^{q-h+1}&z^{q-h+1}&w^{q-h+1}&v^{q-h+1}\end{array}\right].
	\end{equation*}
	Since either $w$ or $v$ is not equal to $r(x,y,z)$, $M_2$ has a four-order non-zero minor. That is, the rank of $M_2$ equals 4. Let $\mathbf{c}=(c_{0},c_{1},\ldots,c_{q})$ be a codeword of weight 5 in $\mathcal{C}_{(q,q+1,3,h)}$ with nonzero coordinates $\{i_{1},i_{2},i_{3},i_{4},i_5\}$. Note that $M_2$ is a submatrix of the pairty-check matrix $H_1$. Then there exist five pairwise distinct element $x,y,z,w,v \in U_{q+1}$ such that
	\begin{equation*}
		\left[\begin{array}{lllll}x^{h}&y^{h}&z^{h}&w^{h}&v^{h}\\x^{h+1}&y^{h+1}&z^{h+1}&w^{h+1}&v^{h+1}\\x^{q-h}&y^{q-h}&z^{q-h}&w^{q-h}&v^{q-h}\\x^{q-h+1}&y^{q-h+1}&z^{q-h+1}&w^{q-h+1}&v^{q-h+1}\end{array}\right]\left.\left[\begin{array}{c}c_{i_1}\\c_{i_2}\\c_{i_3}\\c_{i_4}\\c_{i_5}\end{array}\right.\right]=\mathbf{0}.
	\end{equation*}
	Because the rank of $M_2$ is 4, the set $\{a\mathbf{c}:a\in \mathbb{F}_{q}^\ast\}$ consists of all such codewords of weight 5 in $\mathcal{C}_{(q,q+1,3,h)}$ with nonzero coordinates $\{i_{1},i_{2},i_{3},i_{4},i_5\}$. That is to say, if $B=\{i_{1},i_{2},i_{3},i_{4},i_5\}$, then $\{\mathbf{c}'\in\mathcal{C}_{(q,q+1,3,h)} : \mathrm{wt}(\mathbf{c}')=5, \mathrm{suppt}(\mathbf{c}')=B\}=\{a\mathbf{c}:a\in \mathbb{F}_{q}^\ast\}$. By the proof of \cref{sec4 thm1}, $A_4=\frac{(q-1)^2q(q+1)}{24}$. Setting $j=q-4$ in \cref{sec1.1 equ1} yields
	\begin{equation*}
		A_{5}=\binom{q+1}{5}(q-1)-(q-3)A_{4}=\frac{(q-7)(q-3)(q-1)^2q(q+1)}{5!}.
	\end{equation*}
	Hence, the number of non-repeating blocks in $\mathcal{B}_5(\mathcal{C}_{(q,q+1,3,h)})$ is $\frac{A_5}{q-1}$. In accordance with \cref{sec1.2 equ1}, the codewords of weight 5 in $\mathcal{C}_{(q,q+1,3,h)}$ support a $3$-$(q+1,5,(q-3)(q-7)/2)$ simple design. 
	
	The proof of the case of $h=\frac{3^i-1}{2}$ can be established by considering the parity-check matrix $H_2$ defined as \cref{sec4 thm1 equ2}. We omit it as it is very similar to the above proof.
\end{proof}

\begin{remark}\label{sec4 rmk1}
	Ding and Tang prove that the narrow-sense BCH code $\mathcal{C}_{(3^s,3^s+1,3,1)}$ over $\mathbb{F}_{3^s}$ and its dual are NMDS and hold $3$-designs in \cite[Theorem 21]{Ding2020InfiniteFamiliesMDS}. Actually, one can deduce that the result of \cite[Theorem 21]{Ding2020InfiniteFamiliesMDS} is a corollary of \cref{sec4 thm2} by taking $i=1$ and $h=\frac{3^i-1}{2}$ in \cref{sec4 thm2}.
\end{remark}

\section{Subfield subcodes of the MDS and NMDS codes}

Let $\mathcal{C}$ be an $[n,k,d]$ code over $\mathbb{F}_{r^t}$, where $r$ is a prime power. The {\it subfield subcode} $\mathcal{C}|_{\mathbb{F}_r}$ of $\mathcal{C}$ with respect to $\mathbb{F}_r$ is the set of codewords in $\mathcal{C}$ each of whose components is in $\mathbb{F}_r$, i.e. $\mathcal{C}|_{\mathbb{F}_r}=\mathcal{C}\cap\mathbb{F}_r^n$. This is an $[n,k^\prime,d^\prime]$ code with $n-t(n-k)\le k^\prime\le k$ and $d\le d^\prime$ \cite{Delsarte1975SubfieldSubcodesModified}. In this section, we present the subfield subcodes of the codes described in \cref{sec3 cor1,sec3 cor2}.

Let $q=p^s$, where $p$ is a prime and $s$ is an integer. We recall from the proof of \cref{sec3 thm1} that $\beta$ is a primitive $(q+1)$-th root of unity in $\mathbb{F}_{q^2}$ and 
\begin{equation}\label{sec5 equ1}
	\mathcal{C}_{(q,q+1,3,h)}^\perp=\left\{\left(\mathrm{Tr}_{q^{2}/q}\left(a\beta^{hi}+b\beta^{(h+1)i}\right)\right)_{i=0}^{q}:a,b\in\mathbb{F}_{q^2}\right\}.
\end{equation}
Let $t$ be a factor of $s$. Now we consider the subfield subcode $\mathcal{C}_{(q,q+1,3,h)}|_{\mathbb{F}_{p^t}}$. It follows from Delsarte’s theorem \cite[Theorem 2]{Delsarte1975SubfieldSubcodesModified} that
\begin{equation}\label{sec5 equ2}
	\mathcal{C}_{(q,q+1,3,h)}|_{\mathbb{F}_{p^t}}=\left(\mathrm{Tr}_{q/{p^t}}\left(\mathcal{C}_{(q,q+1,3,h)}^\perp\right)\right)^\perp.
\end{equation}
Combining \cref{sec5 equ1,sec5 equ2} leads to
\begin{equation*}
	\mathcal{C}_{(q,q+1,3,h)}|_{\mathbb{F}_{p^t}}=\left(\left\{\left(\mathrm{Tr}_{q^{2}/{p^t}}\left(a\beta^{hi}+b\beta^{(h+1)i}\right)\right)_{i=0}^{q}:a,b\in\mathbb{F}_{q^2}\right\}\right)^\perp.
\end{equation*}
By applying the Delsarte Theorem again, we have
\begin{equation*}
	\mathcal{C}_{(p^t,q+1,3,h)}^\perp=\left\{\left(\mathrm{Tr}_{q^{2}/{p^t}}\left(a\beta^{hi}+b\beta^{(h+1)i}\right)\right)_{i=0}^{q}:a,b\in\mathbb{F}_{q^2}\right\},
\end{equation*}
and so,
\begin{equation}\label{sec5 equ3}
	\mathcal{C}_{(q,q+1,3,h)}|_{\mathbb{F}_{p^t}}=\mathcal{C}_{(p^t,q+1,3,h)}.
\end{equation}

Let $q=2^s$. Taking $i=s-1$ in \cref{sec3 cor1}, one can deduce that the BCH code $\mathcal{C}_{(q,q+1,3,\frac{q}{4})}$ is an MDS code with parameters $[q+1,q-3,5]$. We then consider the parameters of subfield subcodes $\mathcal{C}_{(q,q+1,3,\frac{q}{4})}|_{\mathbb{F}_{2}}$ and  $\mathcal{C}_{(q,q+1,3,\frac{q}{4})}|_{\mathbb{F}_{4}}$. 

\begin{theorem}\label{sec5 thm1}
	Let $q=2^s$ and $h=\frac{q}{4}$, where $s\ge 4$ is an integer. Then the binary subfield subcode $\mathcal{C}_{(q,q+1,3,h)}|_{\mathbb{F}_{2}}$ has parameters $[q+1,q+1-4s,d\ge5]$.
\end{theorem}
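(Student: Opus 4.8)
The plan is to use the identity $\mathcal{C}_{(q,q+1,3,h)}|_{\mathbb{F}_{p^t}} = \mathcal{C}_{(p^t,q+1,3,h)}$ established in \cref{sec5 equ3}, with $p=2$ and $t=1$, so that the task reduces to computing the dimension and bounding the minimum distance of the binary BCH code $\mathcal{C}_{(2,q+1,3,h)}$ of length $n=q+1=2^s+1$ and designed distance $3$, with $h=\frac{q}{4}=2^{s-2}$. The generator polynomial is $\mathrm{lcm}(\mathrm{M}_{\beta^h}(x),\mathrm{M}_{\beta^{h+1}}(x))$, where $\beta$ is a primitive $(q+1)$-th root of unity and now the minimal polynomials are taken over $\mathbb{F}_2$. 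First I would determine the sizes of the relevant $2$-cyclotomic cosets modulo $n=2^s+1$: since the multiplicative order of $2$ modulo $2^s+1$ is $2s$, a generic coset $C_h$ has size dividing $2s$, and I would show $|C_h| = |C_{h+1}| = 2s$ for $s \ge 4$ by checking that $2^j h \not\equiv h \pmod{2^s+1}$ and $2^j(h+1) \not\equiv (h+1) \pmod{2^s+1}$ for all $0 < j < 2s$; the key arithmetic fact $2^s \equiv -1 \pmod{2^s+1}$ makes this a short finite check on the $2$-adic structure of $h=2^{s-2}$ and $h+1 = 2^{s-2}+1$.

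Next I would verify that $C_h$ and $C_{h+1}$ are disjoint, i.e. $h+1 \not\equiv 2^j h \pmod{2^s+1}$ for any $j$; together with the two cosets each having full size $2s$, this gives $\deg g = |C_h| + |C_{h+1}| = 4s$, hence the dimension is $k' = (q+1) - 4s$, matching the claim. (For small $s$ one must be careful: the hypothesis $s \ge 4$ is presumably exactly what is needed to avoid coincidences among $h, h+1$ and their conjugates, and to guarantee both cosets attain size $2s$ rather than a proper divisor — I would isolate and check the handful of potentially degenerate subcases here.)

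For the distance, the BCH bound gives $d' \ge 3$ only, so I need to upgrade to $d' \ge 5$. The natural route is to exhibit a larger set of consecutive exponents in the defining set (the zero set) of the code, i.e. show $\{\beta^{h}, \beta^{h+1}, \beta^{h+2}, \beta^{h+3}\}$ — or some other length-$4$ run of roots, possibly using the conjugates under $x \mapsto x^2$ — all lie among the roots of $g$, which would force $d' \ge 5$ by the BCH bound. Concretely, because $h = 2^{s-2}$, doubling gives $2h = 2^{s-1} \in C_h$ and $2(h+1) = 2^{s-1}+2 \in C_{h+1}$; chasing such relations modulo $2^s+1$ I expect to produce four consecutive integers all contained in $C_h \cup C_{h+1}$, so that $d' \ge 5$. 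Alternatively, since the ambient code $\mathcal{C}_{(q,q+1,3,h)}$ over $\mathbb{F}_q$ is MDS with $d = 5$ by \cref{sec3 cor1}, and a subfield subcode has $d' \ge d$, one gets $d' \ge 5$ immediately; this is the cleaner argument and I would use \emph{this} — the general bound $d \le d'$ for subfield subcodes quoted at the start of the section, combined with $\mathcal{C}_{(q,q+1,3,\frac{q}{4})}$ being the MDS code of \cref{sec3 cor1} with minimum distance $5$.

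The main obstacle is the dimension count, specifically proving that \emph{both} cyclotomic cosets $C_h$ and $C_{h+1}$ have full size $2s$ and are disjoint, for all $s \ge 4$. This is where the condition $s \ge 4$ enters and where a careless treatment would give a wrong (too large) dimension. I would handle it by writing $2^j h \bmod (2^s+1)$ explicitly using $2^s \equiv -1$: for $j \le s$ it is $2^{s-2+j}$ reduced, and for $s < j < 2s$ it is $-2^{j-2} \bmod (2^s+1)$, and then argue that none of these equal $h$ or $h+1$ except at $j=0$ — reducing everything to inequalities between small powers of $2$ and their negatives modulo $2^s+1$, which is routine once $s\ge 4$ rules out the sporadic small collisions.
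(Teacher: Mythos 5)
Your proposal is correct and follows essentially the same route as the paper: reduce to $\mathcal{C}_{(2,q+1,3,h)}$ via \cref{sec5 equ3}, obtain the dimension $q+1-4s$ by showing $C_h$ and $C_{h+1}$ are disjoint $2$-cyclotomic cosets each of size $2s$ (the hypothesis $s\ge 4$ entering exactly where you flag it), and get $d\ge 5$ from the subfield-subcode bound applied to the MDS code of \cref{sec3 cor1}, which is the argument the paper uses. The only cosmetic difference is that the paper verifies the coset sizes by noting $C_h=C_1$ and $C_{h+1}=C_3$ and invoking the divisor property of coset sizes rather than your direct check of $2^jh\bmod(2^s+1)$.
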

\begin{proof}
	According to \cref{sec5 equ3}, $\mathcal{C}_{(q,q+1,3,h)}|_{\mathbb{F}_{2}}=\mathcal{C}_{(2,q+1,3,h)}$. Let $n=q+1$. The 2-cyclotomic coset $C_1$ modulo $n$ is given by
	\begin{equation*}
		C_1=\{1,2,\ldots,2^{s-1},-1,-2,\ldots,-2^{s-1}\}\bmod n.
	\end{equation*}
	Since $h=\frac{q}{4}=2^{s-2}$,
	\begin{equation*}
		C_h=2^{s-2}C_1=\{2^{s-2},2^{s-1},-1,-2,\dots,-2^{s-1},1,2,\ldots,2^{s-3}\}=C_1\bmod n.
	\end{equation*}
	Clearly, $|C_h|=|C_1|=2s$. Note that $h+1 \equiv 2^{s-2}+1 \equiv 2^{s-2}-2^s \equiv -3\cdot2^{s-2} \bmod n$. Then
	\begin{equation*}
		C_{h+1}=(2^{s-2}+1)C_1=(-3\cdot2^{s-2})C_1=3C_1=C_3\bmod n.
	\end{equation*}
	It is easily checked that $C_h \cap C_{h+1} =\emptyset$ when $s\ge3$. 
	
	Next we will show that $|C_{h+1}|=2s$ when $s\ge4$. Assume on the contrary that the size $\ell_{3}$ of $C_{3}$ satisfies $1\le \ell_{3}\le 2s-1$. By \cref{sec2.1 thm1}, $\ell_3$ is a proper divisor of $2s$. Then $1\le \ell_3 \le s$. If $\ell_3=s$, $3\equiv-3\bmod 2^s+1$, which is impossible. If $1\le \ell_{3}\le s-1$, then $2^s+1\le 3(2^{\ell_{3}}-1)$, and so $\ell_{3}=s-1$. Thus, $s=3$, which gives a contradiction to $s\ge 4$. Therefore, $|C_{h+1}|=2s$ when $s\ge4$.
	
	By \cref{sec2.1 equ3}, the minimal polynomial $M_{\beta^j}(x)$ of $\beta^j$ over $\mathbb{F}_2$ is given by
	\begin{equation*}
		\mathsf M_{\beta^j}(x)=\sum_{i\in C_j}(x-\beta^i)
	\end{equation*}
	for $j\in \{h, h+1\}$. Since $C_h \cap C_{h+1} =\emptyset$, the generator polynomial of $\mathcal{C}_{(2,q+1,3,h)}$ is $M_{\beta^h}(x)M_{\beta^{h+1}}(x)$. Hence, the dimension of $\mathcal{C}_{(q,q+1,3,h)}|_{\mathbb{F}_{2}}$ is $q+1-4s$. Note that the minimum distance of the code $\mathcal{C}_{(q,q+1,3,h)}$ is 5. By the definition of subfield subcode, the minimum distance $d$ of $\mathcal{C}_{(q,q+1,3,h)}|_{\mathbb{F}_{2}}$ is not less than 5. The proof is completed.
\end{proof}

\begin{example}	
	Let $\mathcal{C}_{(2^s,2^s+1,3,2^{s-2})}|_{\mathbb{F}_{2}}$ be the subfield subcode in \cref{sec5 thm1}.
	\begin{enumerate}[label=\arabic*)]
		\item When $s=4$, $\mathcal{C}_{(2^s,2^s+1,3,2^{s-2})}|_{\mathbb{F}_{2}}$ is an MDS code with parameters $[2^s+1,1,2^s+1]$. However, it is trivial as its dimension is 1.
		\item When $s=5$, $\mathcal{C}_{(2^s,2^s+1,3,2^{s-2})}|_{\mathbb{F}_{2}}$ is a $[33,13,10]$ binary linear code, and its dual has parameters $[33,20,6]$. Both codes have the best known parameters in accordance with \cite{Grassl}.
		\item When $s=6$, $\mathcal{C}_{(2^s,2^s+1,3,2^{s-2})}|_{\mathbb{F}_{2}}$ is a $[65,41,5]$ binary linear code, and its dual has parameters $[65,24,16]$. However, the corresponding best-known parameters are $[65,41,8]$ and $[65,24,17]$ according to \cite{Grassl}.
	\end{enumerate}
\end{example}

The proofs of \cref{sec5 thm2,sec5 thm3} are similar to the proof of \cref{sec5 thm1}, therefore we abbreviate them. 

\begin{theorem}\label{sec5 thm2}
	Let $q=2^s$ and $h=\frac{q}{4}$, where $s\ge 2$ is an even integer. Then the quaternary subfield subcode $\mathcal{C}_{(q,q+1,3,h)}|_{\mathbb{F}_{4}}$ has parameters $[q+1,q+1-2s,d\ge5]$.
\end{theorem}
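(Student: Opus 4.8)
The plan is to transcribe the proof of \cref{sec5 thm1} to the field $\mathbb{F}_4=\mathbb{F}_{2^2}$. Since $s$ is even we have $2\mid s$, so \cref{sec5 equ3} applies with $p=2$, $t=2$ and gives $\mathcal{C}_{(q,q+1,3,h)}|_{\mathbb{F}_4}=\mathcal{C}_{(4,q+1,3,h)}$. Thus it suffices to compute the degree of the generator polynomial of the quaternary BCH code $\mathcal{C}_{(4,n,3,h)}$ with $n=q+1=2^s+1$ and $h=\frac{q}{4}=2^{s-2}$, that is, the sizes of the $4$-cyclotomic cosets $C_h$ and $C_{h+1}$ modulo $n$ together with the fact that they are disjoint; the bound $d\ge 5$ is then immediate from the subfield-subcode property.

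First I would identify the two cosets. As $s$ is even, $h=2^{s-2}=4^{(s-2)/2}$ is a power of $4$, so $C_h=C_1$; and since $2^s\equiv-1\pmod n$, the multiplicative order of $2$ modulo $n$ is $2s$, whence $\mathrm{ord}_n(4)=2s/\gcd(2,2s)=s$ and $|C_h|=|C_1|=s$. For $C_{h+1}$, the computation $4(h+1)=2^s+4\equiv 3\pmod n$ shows $C_{h+1}=C_3$; moreover $s$ even forces $2^s+1\equiv 2\pmod 3$, so $\gcd(3,n)=1$, and hence $|C_{h+1}|=|C_3|=s$ as well---this follows from \cref{sec2.1 thm1} (which gives $|C_3|\mid s$) combined with the implication $n\mid 3(4^{\ell}-1)\Rightarrow n\mid 4^{\ell}-1$ (which gives $s=\mathrm{ord}_n(4)\mid|C_3|$).

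The step that requires a little care is the disjointness $C_h\cap C_{h+1}=C_1\cap C_3=\emptyset$. I would lift it to the $2$-cyclotomic cosets: $C_1$ and $C_3$ are contained in the $2$-cyclotomic cosets of $1$ and $3$ modulo $n$, and the $2$-cyclotomic coset of $1$ equals $\{\pm1,\pm2,\dots,\pm2^{s-1}\}\bmod n$, which does not contain $3$ for $s\ge 3$: indeed $3$ is not a power of $2$, and $3\equiv-2^j\pmod{2^s+1}$ is impossible since $0<2^j+3<2^s+1$ for $0\le j\le s-1$. The remaining case $s=2$ (where $n=5$, $C_1=\{1,4\}$ and $C_3=\{2,3\}$) is checked directly. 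Hence the $2$-cyclotomic cosets of $1$ and $3$ are disjoint, and a fortiori so are $C_1$ and $C_3$.

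Putting the pieces together, $M_{\beta^h}(x)$ and $M_{\beta^{h+1}}(x)$ are distinct irreducible factors of $x^n-1$ over $\mathbb{F}_4$, each of degree $s$, so the generator polynomial $M_{\beta^h}(x)M_{\beta^{h+1}}(x)$ of $\mathcal{C}_{(4,n,3,h)}$ has degree $2s$ and the dimension of the code is $q+1-2s$. Finally, $\mathcal{C}_{(q,q+1,3,h)}|_{\mathbb{F}_4}$ is a subfield subcode of the $[q+1,q-3,5]$ MDS code $\mathcal{C}_{(q,q+1,3,h)}$ from \cref{sec3 cor1} (the case $i=s-1$), so its minimum distance is at least $5$, completing the proof. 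The only genuinely delicate point is the cyclotomic-coset bookkeeping for small even $s$; conceptually the argument just mirrors \cref{sec5 thm1}.
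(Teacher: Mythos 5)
Your proposal is correct and follows essentially the same route as the paper: identify $\mathcal{C}_{(q,q+1,3,h)}|_{\mathbb{F}_4}=\mathcal{C}_{(4,q+1,3,h)}$ via \cref{sec5 equ3}, show the $4$-cyclotomic cosets $C_h=C_1$ and $C_{h+1}=C_3$ modulo $2^s+1$ are disjoint of size $s$ each to get dimension $q+1-2s$, and inherit $d\ge 5$ from the MDS code of \cref{sec3 cor1}. The paper's proof of \cref{sec5 thm2} merely asserts these coset facts (referring to the argument of \cref{sec5 thm1}), so your write-up supplies the same computation in fuller detail, including the $s=2$ check.
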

\begin{proof}
	Since $C_h \cap C_{h+1} =\emptyset$ and $|C_{h}|=|C_{h+1}|=s$, the dimension of $\mathcal{C}_{(q,q+1,3,h)}|_{\mathbb{F}_{4}}$ is $q+1-2s$.  This completes the proof.
\end{proof}

\begin{example}
	Let $\mathcal{C}_{(2^s,2^s+1,3,2^{s-2})}|_{\mathbb{F}_{4}}$ be the subfield subcode in \cref{sec5 thm2}.
	\begin{enumerate}[label=\arabic*)]
		\item When $s=2$, $\mathcal{C}_{(2^s,2^s+1,3,2^{s-2})}|_{\mathbb{F}_{4}}$ is an MDS code with parameters $[5,1,5]$, which is trivial.
		\item When $s=4$, $\mathcal{C}_{(2^s,2^s+1,3,2^{s-2})}|_{\mathbb{F}_{4}}$ is a $[17,9,7]$ quaternary linear code, and its dual has parameters $[17,8,8]$. Both $\mathcal{C}_{(2^s,2^s+1,3,2^{s-2})}|_{\mathbb{F}_{4}}$ and its dual have the best known parameters in accordance with \cite{Grassl}.
		\item When $s=6$, $\mathcal{C}_{(2^s,2^s+1,3,2^{s-2})}|_{\mathbb{F}_{4}}$ is a $[65,53,5]$ quaternary linear code, while the corresponding best known parameters are $[65,53,6]$ according to \cite{Grassl}. Its dual is a quaternary code with parameters $[65,12,32]$ which is the best known in accordance with \cite{Grassl}.
	\end{enumerate}
\end{example}

Let $q=3^s$. Taking $i=1$ and $h=\frac{q-3^i}{2}$ in \cref{sec3 cor2}, one can deduce that the BCH code $\mathcal{C}_{(q,q+1,3,\frac{q}{4})}$ is an NMDS code with parameters $[q+1,q-3,4]$. The following theorem provides the information of the subfield subcode $\mathcal{C}_{(q,q+1,3,\frac{q-3}{2})}|_{\mathbb{F}_{3}}$. 

\begin{theorem}\label{sec5 thm3}
	Let $q=3^s$ and $h=\frac{q-3}{2}$, where $s\ge 2$ is an integer. Then the ternary subfield subcode $\mathcal{C}_{(q,q+1,3,h)}|_{\mathbb{F}_{3}}$ has parameters $[q+1,q+1-4s,d\ge4]$.
\end{theorem}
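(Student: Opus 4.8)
The plan is to follow the proof of \cref{sec5 thm1} closely. By \cref{sec5 equ3} the subfield subcode equals the BCH code $\mathcal{C}_{(3,q+1,3,h)}$ over $\mathbb{F}_3$, so its generator polynomial over $\mathbb{F}_3$ is $\mathrm{lcm}\!\left(\mathrm{M}_{\beta^h}(x),\mathrm{M}_{\beta^{h+1}}(x)\right)$, with $\deg\mathrm{M}_{\beta^h}(x)=|C_h|$ and $\deg\mathrm{M}_{\beta^{h+1}}(x)=|C_{h+1}|$, where $C_h,C_{h+1}$ now denote the $3$-cyclotomic cosets of $h$ and $h+1$ modulo $n=3^s+1$. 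Since $3^s\equiv-1\pmod n$, the multiplicative order of $3$ modulo $n$ is exactly $2s$ (a proper divisor $\ell\mid 2s$ satisfies $\ell\le s$, whence $0<3^\ell-1<n$ and $3^\ell\not\equiv1\pmod n$), so $|C_h|$ and $|C_{h+1}|$ both divide $2s$ by \cref{sec2.1 thm1}. It therefore suffices to show $|C_h|=|C_{h+1}|=2s$ and $C_h\cap C_{h+1}=\emptyset$; the generator polynomial then has degree $4s$, and the dimension is $q+1-4s$.

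The computations become transparent once one notes that, as integers, $h=\tfrac n2-2$ and $h+1=\tfrac n2-1$ (recall $n$ is even), and that $\tfrac n2\cdot 3^\ell\equiv\tfrac n2\pmod n$ for every $\ell\ge0$ because $\tfrac{3^\ell-1}{2}\in\mathbb{Z}$. Hence $h\cdot 3^\ell\equiv\tfrac n2-2\cdot 3^\ell$ and $(h+1)\cdot 3^\ell\equiv\tfrac n2-3^\ell\pmod n$. From the latter, $(h+1)3^\ell\equiv h+1$ forces $3^\ell\equiv1\pmod n$, so $|C_{h+1}|=2s$. From the former, $h\cdot 3^\ell\equiv h$ forces $\tfrac n2\mid 3^\ell-1$; for $1\le\ell\le s$ this is impossible when $s\ge2$: if $\ell<s$ then $0<3^\ell-1<n$ forces $3^\ell-1=\tfrac n2$, i.e. $2\cdot 3^{\ell-1}=3^{s-1}+1$, which fails modulo $3$; and if $\ell=s$ then $\tfrac n2\mid 3^s-1=n-2$ forces $\tfrac n2\mid 2$, impossible. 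Hence $|C_h|=2s$ as well.

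For the disjointness, an element of $C_h\cap C_{h+1}$ would, after cancelling an invertible power of $3$, yield $h\equiv(h+1)3^c\pmod n$ for some $c$, i.e. $3^c\equiv2\pmod n$. But the powers of $3$ modulo $n$ are exactly $\{3^j,\,3^s+1-3^j:0\le j\le s-1\}$, and $2$ lies in this set only if $3^j=2$ or $3^j=3^s-1$ for some such $j$; the first is clearly impossible, and the second fails because $3\nmid 3^s-1$. Therefore $C_h\cap C_{h+1}=\emptyset$, the generator polynomial has degree $4s$, and $\dim\mathcal{C}_{(q,q+1,3,h)}|_{\mathbb{F}_3}=q+1-4s$.

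Finally, for the distance bound, $\gcd(2h+1,q+1)=\gcd(q-2,q+1)=\gcd(q-2,3)=1$ since $q-2\equiv1\pmod3$, so $\mathcal{C}_{(q,q+1,3,h)}$ has minimum distance at least $4$ by \cref{sec3 thm5} (in fact it is the NMDS code of \cref{sec3 cor2} with $i=1$); since a subfield subcode has minimum distance no smaller than that of the ambient code, the minimum distance of $\mathcal{C}_{(q,q+1,3,h)}|_{\mathbb{F}_3}$ is at least $4$, completing the proof. I expect the cyclotomic-coset bookkeeping to be the only genuine obstacle — concretely, establishing $|C_h|=2s$ and $C_h\cap C_{h+1}=\emptyset$ — but the reductions $h=\tfrac n2-2$ and $h+1=\tfrac n2-1$ turn both into the short elementary divisibility checks sketched above.
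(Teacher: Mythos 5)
Your proposal is correct and follows essentially the same route as the paper, which merely states that $|C_h|=|C_{h+1}|=2s$ and $C_h\cap C_{h+1}=\emptyset$ "can be checked" (mirroring the proof of \cref{sec5 thm1}); your coset computations via $h=\tfrac n2-2$, $h+1=\tfrac n2-1$ and the distance bound via \cref{sec3 thm5} (or \cref{sec3 cor2}) supply exactly the omitted details. The only nitpick is in the disjointness step: the congruence $3^j\equiv 3^s-1$ should also be ruled out for $j=0$ (trivially, since $3^s\ne2$), as the "$3\nmid 3^s-1$" argument only covers $j\ge1$.
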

\begin{proof}
	One can check that $C_h \cap C_{h+1} =\emptyset$ and $|C_h|=|C_{h+1}|=2s$, which implies that the dimension of $\mathcal{C}_{(q,q+1,3,h)}|_{\mathbb{F}_{3}}$ is $q+1-4s$. The proof is completed.
\end{proof}

\begin{example}
	Let $\mathcal{C}_{(3^s,3^s+1,3,\frac{3^s-3}{2})}|_{\mathbb{F}_3}$ be the subfield subcode in \cref{sec5 thm2}. In the following, we present the pamaraters of $\mathcal{C}_{(3^s,3^s+1,3,\frac{3^s-3}{2})}|_{\mathbb{F}_3}$ and its dual for $s=2,3,4$.
	\begin{equation*}
		\begin{array}{ccc}
			s&\mathcal{C}_{(3^s,3^s+1,3,\frac{3^s-3}{2})}|_{\mathbb{F}_3}&(\mathcal{C}_{(3^s,3^s+1,3,\frac{3^s-3}{2})}|_{\mathbb{F}_3})^\perp\\2&[10,2,5]&[10,8,2]\\3&[28,16,4]&[28,12,8]\\4&[82,66,6]&[82,16,36]
		\end{array}
	\end{equation*}
	When $s = 2$ and $s = 3$, both $\mathcal{C}_{(3^s,3^s+1,3,\frac{3^s-3}{2})}|_{\mathbb{F}_3}$ and its dual are the best known ternary cyclic codes according to \cite[Appendix A]{Ding2014CodesDifferenceSetsa}. In particular, $(\mathcal{C}_{(3^s,3^s+1,3,\frac{3^s-3}{2})}|_{\mathbb{F}_3})^\perp$ has the best known parameters when $s=2$ in accordance with \cite{Grassl}.
\end{example}

\section{Concluding remarks}\label{sec6}

By exploring the parameters of dual codes of two families of BCH codes, we discovered an infinite family of MDS codes over $\mathbb{F}_{2^s}$, and two infinite families of AMDS codes over $\mathbb{F}_{p^s}$ for any prime power $p^s$, which include two infinite families of NMDS codes over $\mathbb{F}_{3^s}$. Notably, \cref{sec1 conj1} has been addressed through a subclass of these NMDS codes. Additionally, we provided several binary, ternary, and quaternary codes with the best-known parameters. According to the Assmus-Mattson theorem, we established that the minimum weight codewords of these AMDS codes and their duals support infinite families of $3$-designs. Moreover, we proved that the codewords of weight 5 in these NMDS codes also support 3-designs, which do not satisfy the Assmus-Mattson theorem. Notably, our findings regarding NMDS codes over \(\mathbb{F}_{3^s}\) holding $3$-designs encompassed the relevant results in \cite[Theorem 21]{Ding2020InfiniteFamiliesMDS}. 

In the proof of \cref{sec3 thm1,sec3 thm4}, we analyzed the solutions in $U_{q+1}$ of certain special equations over $\mathbb{F}_{q^2}$. Similarly, infinite families of AMDS and NMDS codes supporting $t$-designs were obtained by determining the solutions of certain special equations over $\mathbb{F}_q$ in \cite{Xu2022InfiniteFamiliedDesigns,Heng2023NewInfiniteFamilies}. Therefore, it would be very interesting to explore the possibility of obtaining infinite families of AMDS or NMDS codes that support $t$-designs through the analysis of some special equations over finite field.

%
%
%
%
%



\begin{thebibliography}{99}
	
	\bibitem{Assmus1969NewDesign} Assmus E.F., Mattson H.F.: New 5-designs. J. Comb. Theory Ser. A \textbf{6}(2), 122-151 (1969).
	
	\bibitem{Boer1996AlmostMDSCodes} Boer M.A.D.: Almost MDS codes. Des. Codes Cryptogr. \textbf{9}, 143-155 (1996).
	
	\bibitem{Bluher2004Xq} Bluher A. W.: On $x^{q+1}+ax+b$. Finite Fields Appl. \textbf{10}, 285-305 (2004).
	
	\bibitem{Coulter1998Explicit} Coulter R.S.: Explicit evaluations of some Weil sums. Acta Arith. \textbf{83}, 241-251 (1998).
	
	\bibitem{Coulter1999Evaluation} Coulter R.S.: On the evaluation of a class of Weil sums in characteristic 2. New Zealand J. Math. \textbf{28}, 171-184 (1999).
	
	\bibitem{Delsarte1975SubfieldSubcodesModified} Delsarte P.: On subfield subcodes of modified Reed-Solomon codes (Corresp.). IEEE Trans. Inf. Theory \textbf{21}, 575-576 (1975).
	
	\bibitem{Dodunekov1995NearMDSCodes} Dodunekov S., Landgev I.: On near-MDS codes. J. Geom. \textbf{54}, 30-43 (1995).
	
	\bibitem{Dodunekova1997AlmostMDSNearMDSCodes} Dodunekova R., Dodunekov S.M., Klove T.: Almost-MDS and near-MDS codes for error detection. IEEE Trans. Inf. Theory \textbf{43}, 285-290 (1997).	
	
	\bibitem{Ding2014CodesDifferenceSetsa} Ding C.: Codes from Difference Sets. World Scientific, Singapore (2015).
	
	\bibitem{Ding2018DesignsLinearCodes} Ding C.: Designs from Linear Codes. World Scientific, Singapore (2018).
	
	\bibitem{Ding2020InfiniteFamiliesMDS} Ding C., Tang C.: Infinite families of near MDS codes holding $t$-designs. IEEE Trans. Inf. Theory \textbf{66,} 5419-5428 (2020).
	
	\bibitem{Grassl} M. Grassl, Code Tables. http://www.codetables.de.
	
	\bibitem{Geng2022ClassAlmostMDS} Geng X., Yang M., Zhang J., Zhou Z.: A class of almost MDS codes. Finite Fields their Appl. \textbf{79}, 101996 (2022).
	
	\bibitem{Huffman2003FundamentalErrorCodes} Huffman W.C., Pless V.: Fundamentals of Error-Correcting Codes. Cambridge University Press, Cambridge (2003).
	
	\bibitem{Heng2022ConstructionsMDSNear} Heng Z., Li C., Wang X.: Constructions of MDS, near MDS and almost MDS codes from cyclic subgroups of $\mathbb{F}^\ast_{q^2}$. IEEE Trans. Inf. Theory \textbf{68}, 7817-7831 (2022).
	
	\bibitem{Heng2023NewInfiniteFamilies} Heng Z., Wang X.:  New infinite families of near MDS codes holding $t$-designs. Discret. Math. \textbf{346}, 113538 (2023).
	
	\bibitem{Kim2021COmpleteSolutionEquation} Kim K.H., Choe J.H., Mesnager S.: Complete solution over $\mathbb{F}_{p^n}$ of the equation $X^{p^k+1}+X+a$. Finite Fields their Appl. \textbf{76}, 101902 (2021).
	
	\bibitem{Li2024MDSCodesConjecture} Li Y., Zhu S., Martínez-Moro E.: On $\ell$-MDS codes and a conjecture on infinite families of 1-MDS codes. IEEE Trans. Inf. Theory \textbf{70}(10), 6899-6911 (2024).
	
	\bibitem{MacWilliams1963TheoremDistributionWeight} MacWilliams F.J.: A theorem on the distribution of weights in a systematic code. Bell System Tech. J. \textbf{42}, 79-94 (1963).
	
	\bibitem{Qiang2024CharacterizationAlmost} Qiang S., Wei H., Hong S., A characterization for an almost MDS code to be a near MDS code and a proof of the Geng-Yang-Zhang-Zhou conjecture (2024). 
	https://doi.org/10.48550/arXiv.2408.06282
	
	\bibitem{Tonchev1998CodesDesigns} Tonchev V.D.: Codes and designs. In: Pless V.S., Huffman W.C. (eds.) Handbook of Coding Theory, pp. 1229-1268, Elsevier, Amsterdam (1998).
	
	\bibitem{Tonchev2007Codes} Tonchev V.D.: Codes. In: Colbourn C.J., Dinitz J.H. (eds.) Handbook Combination Designs, pp. 677-701, CRC Press, New York (2007).
	
	\bibitem{Tang2020CodesDifferentiallyUniform} Tang C., Ding C., Xiong M.: Codes, differentially $\delta$-uniform functions, and $t$-designs. IEEE Trans. Inf. Theory \textbf{66}, 3691-3703 (2020).
	
	\bibitem{Tang2021InfiniteFamilyLinear} Tang C., Ding C.: An infinite family of linear codes supporting 4-designs. IEEE Trans. Inf. Theory \textbf{67}, 244-254 (2021).
	
	\bibitem{Xu2022InfiniteFamiliedDesigns} Xu G., Cao X., Qu L.: Infinite families of $3$-designs and 2-designs from almost MDS codes. IEEE Trans. Inf. Theory \textbf{68}, 4344-4353 (2022).
	
	\bibitem{Xiang2022SomeDesignsBCH} Xiang C.: Some t-designs from BCH codes. Cryptogr. Commun. \textbf{14}, 641-652 (2022).
	
	\bibitem{Xu2023InfiniteFamiliesMDS} Xu H., Wu X., Lu W., Cao X., Infinite families of MDS and almost MDS codes from BCH codes (2023). https://doi.org/10.48550/arXiv.2312.11995.
	
	\bibitem{Yan2022InfiniteFamiliesLinear} Yan Q., Zhou J.: Infinite families of linear codes supporting more $t$-designs. IEEE Trans. Inf. Theory \textbf{68}, 4365-4377 (2022).
	
	\bibitem{Zhou2009SecretSharingScheme} Zhou Y., Wang F., Xin Y., Luo S., Qing S., Yang Y.: A secret sharing scheme based on Near-MDS codes. In: Proc. IEEE Int. Conf. Netw. Infrastruct. Digit. Content, pp.833-836, Beijing, China (2009).
	
\end{thebibliography}
\addcontentsline{toc}{section}{References}

\end{document}